\title{Decidable (Ac)counting with Parikh and Muller: Adding Presburger Arithmetic to Monadic Second-\\Order Logic over Tree-Interpretable Structures} 
\titlerunning{Adding Presburger Arithmetic to MSO Logic over Tree-Interpretable Structures} 
\author{Luisa Herrmann}{Computational Logic Group, TU Dresden, Germany \and Center for Scalable Data Analytics and Artificial Intelligence Dresden/Leipzig, Germany\and \url{https://iccl.inf.tu-dresden.de/web/Luisa_Herrmann}}{luisa.herrmann@tu-dresden.de}{https://orcid.org/0009-0004-9532-0994}{}
\author{Vincent Peth}{Département d’informatique de l’ÉNS, École normale supérieure,
CNRS, PSL University, 
France}{vincent.peth@ens.psl.eu}{https://orcid.org/0009-0007-8450-0705}{}
\author{Sebastian Rudolph}{Computational Logic Group, TU Dresden, Germany \and Center for Scalable Data Analytics and Artificial Intelligence Dresden/Leipzig, Germany \and \url{http://sebastian-rudolph.de}}{sebastian.rudolph@tu-dresden.de}{https://orcid.org/0000-0002-1609-2080}{European Research Council, Consolidator Grant DeciGUT (771779)}
\authorrunning{L. Herrmann, V. Peth, and S. Rudolph} 
\keywords{MSO, BAPA, Parikh-Muller tree automata, decidability, MSO-interpretations} 
\newcommand{\semof}[1]{\llbracket#1\rrbracket}
\newcommand{\consistent}{\mathrm{consistent}}
\newcommand{\FOpres}{\mathrm{FO}^2_{\mathrm{Pres}}}
\newcommand{\Nat}{\mathbb{N}}
\newcommand{\pos}{\mathrm{pos}}
\newcommand{\Run}{\mathrm{Run}}
\newcommand{\cnt}{\mathrm{cnt}}
\renewcommand{\inf}{\mathrm{inf}}
\newcommand{\A}{\mathcal{A}}
\newcommand{\Q}{{}Q{}}
\newcommand{\F}{\mathcal{F}}
\newcommand{\V}{\mathcal{V}}
\newcommand{\Pow}{\mathcal{P}}
\renewcommand{\L}{\mathcal{L}}
\newcommand{\To}{T^\omega}
\renewcommand{\rho}{\varrho}
\newcommand{\free}{\mathrm{free}}
\newcommand{\proj}{\mathrm{pr}}
\newcommand{\true}{\mathbf{true}}
\newcommand{\false}{\mathbf{false}}
\newcommand{\TNF}[1]{\mathsf{NF}(#1)}
\newcommand{\Varset}{\mathbf{V}}
\newcommand{\Vind}{\Varset_\mathrm{\!ind}}
\newcommand{\Vset}{\Varset_\mathrm{\!set}}
\newcommand{\Vnum}{\Varset_\mathrm{\!num}}
\newcommand{\Consts}{\mathbf{C}}
\newcommand{\Preds}{\mathbf{P}}
\newcommand{\Sig}{\mathbb{S}}
\newcommand{\SigC}{\Sig_\Consts}
\newcommand{\SigP}{\Sig_\Preds}
\newcommand{\iTerms}{\mathbf{I}}
\newcommand{\sTerms}{\mathbf{S}}
\newcommand{\nTerms}{\mathbf{N}}
\renewcommand{\Form}{\mathbf{F}}
\newcommand{\pr}[1]{\mathtt{#1}}
\newcommand{\nv}[1]{\mathscr{#1}}
\newcommand{\bin}[1]{\langle #1\rangle_{\mathrm{bin}}}
\newcommand{\nrm}{\mathsf{smpl}}
\newcommand{\nrmo}{\nrm^*}
\newcommand{\transf}{\mathsf{transf}}
\renewcommand{\int}[1]{#1^{\mathfrak{A},\nu}}
\newcommand{\plus}{\mathrel{\mbox{\rm\texttt{+}}}}
\newcommand{\tteq}{\mathrel{\mbox{\rm\texttt{=}}}}
\newcommand{\tteqfin}{\mathrel{\mbox{\rm\texttt{=}}_\texttt{\rm\tt fin}}} 
\newcommand{\ttleq}{\mathrel{\mbox{\raisebox{-0.6ex}{\rm\texttt{-}}\!\!\!\raisebox{0.1ex}{\rm\texttt{<}}}}}
\newcommand{\ttleqfin}{\mathrel{\mbox{\raisebox{-0.6ex}{\rm\texttt{-}}\!\!\!\raisebox{0.1ex}{\rm\texttt{<}}}_\texttt{\rm\tt fin}}}
\newcommand{\ttleqffin}{\mathrel{\mbox{\raisebox{-0.6ex}{\rm\texttt{-}}\!\!\!\raisebox{0.1ex}{\rm\texttt{<}}}_{(\texttt{\rm\tt fin})}}}
\newcommand{\ttinfty}{\mbox{\rm\texttt{o\hspace{-0.3ex}o}}}
\newcommand{\cnts}[1]{\mbox{\rm\texttt{\#}}\hspace{-0.2ex}#1}
\newcommand{\cntp}[1]{\mbox{\rm\texttt{\#}}\hspace{-0.2ex}(#1)}
\newcommand{\thelogic}{$\omega$MSO$\cdot$BAPA{}\xspace}
\newcommand{\thegoodlogic}{\mbox{\rm$\omega$MSO$\hspace{-0.2ex}\Join\hspace{-0.2ex}$BAPA{}}\xspace}
\newcommand{\thegoodlogicheading}{$\mathbf{\boldsymbol\omega}$MSO$\hspace{-0.2ex}\Join\hspace{-0.2ex}$BAPA{}\xspace}
  \def\thegoodlogic{ωMSO⋈BAPA }%
  \def\thegoodlogicheading{ωMSO⋈BAPA }%
  \def\\{}%
\newcommand{\bbr}[1]{\mbox{\small\bf\{}#1\mbox{\small\bf\}}}
\newcommand{\cmpr}[2]{\bbr{{#1{\,\mbox{\small\bf |}\,#2}}}}
\begin{document}

\maketitle

%
%

\begin{abstract}
We propose \thegoodlogic, an expressive logic for describing countable structures, which subsumes and transcends both Counting Monadic Second-Order Logic (CMSO) and Boolean Algebra with Presburger Arithmetic (BAPA). 
We show that satisfiability of \thegoodlogic is decidable over the class of labeled infinite binary trees, whereas it  becomes undecidable even for a rather mild relaxations.
The decidability result is established by an elaborate multi-step transformation into a particular normal form, followed by the deployment of \emph{Parikh-Muller Tree Automata}, a novel kind of automaton for infinite labeled binary trees, integrating and generalizing both Muller and Parikh automata while still exhibiting a decidable (in fact \textsc{PSpace}-complete) emptiness problem. By means of MSO-interpretations, we lift the decidability result to all tree-interpretable classes of structures, including the classes of finite/countable structures of bounded treewidth/cliquewidth/partitionwidth. We generalize the result further by showing that decidability is even preserved when coupling width-restricted \thegoodlogic with width-unrestricted two-variable logic with advanced counting. A final showcase demonstrates how our results can be leveraged to harvest decidability results for expressive $\mu$-calculi extended by global Presburger constraints.
\end{abstract}


\section{Introduction}\label{sec:Intro}
\newcommand{\highlight}[1]{\emph{#1}}

\highlight{Monadic second-order logic} (MSO) is a popular, expressive, yet computationally reasonably well-behaved logical formalism to deal with various classes of finite or countable structures. 
It allows for expressing ``mildly recursive'' structural properties like connectedness or reachability, which go beyond first-order logic yet 
meet crucial modeling demands in verification, database theory, knowledge representation, and other fields of computational logic. The well-understood link between MSO and automata theory has been very fertile in theory and practice.

Unfortunately, MSO's native capabilities to express cardinality relationships are very limited; they are essentially restricted to fixed thresholds (e.g. ``there are at least 10 leaves'').  
\highlight{Counting MSO} \cite{Cou90,Cou88}, denoted CMSO, extends MSO by modulo counting and a finiteness test over sets (e.g. ``there is an even number of nodes''), which increases expressiveness in general, while over finite and infinite words or trees, CMSO can be simulated in plain MSO. In contrast, enriching MSO with \highlight{cardinality constraints} \cite{KlaRue02,KlaRue03} (as in ``all nodes have as many incoming as outgoing edges'') increases the expressivity drastically, but causes satisfiability to become undecidable even over finite words. Decidability (over finite words, trees, or graphs of bounded treewidth \cite{KotVeiZul16}) can be recovered when confining set variables occurring in cardinality constraints to those existentially quantified in front (MSO$^{\exists\text{Card}}$). One way to show this is through \highlight{Parikh automata} extending finite automata by adding finitely many counters and exploiting the relationship of Presburger arithmetic and semilinear sets \cite{GinSpa66}. 

Very recent work \cite{GuhJecLeh22,GroSabSie23,GroSie23} extended Parikh word automata to infinite words and inves-\\
tigated the impact of 
various acceptance conditions, but left a logical characterization as open question. 
As with the original Parikh automata, one central motivation behind these works is to provide automata-based approaches for specifying and verifying systems beyond regular languages. The study of $\omega$-Parikh automata is motivated by  reactive systems, whose behaviors are typically represented by infinite words. Then, the plethora of branching-time approaches in verification should call for a further generalization to $\omega$-tree-automata. Yet, to our knowledge, Parikh automata have not been studied in the context of infinite trees so far.


Another, orthogonal logical approach for describing sets and their cardinalities, motivated by tasks from program analysis and verification, combines the first-order theory of \highlight{Boolean algebras} (BA) with \highlight{Presburger arithmetic} (PA), resulting in the theory of \highlight{BAPA} \cite{KunNguRin05,KunNguRin06}.
As opposed to computationally benign extensions of MSO, BAPA provides stronger support for arithmetic (so one can talk about ``all selections with the same number of blue and red nodes'' or even ``all selections with a share of 70\%\,--\,80\% red nodes'', modeling statistical information).  
BAPA usually assumes a finite universe, but can be extended to the countable setting \cite{LasStu09}; satisfiability is decidable in either case. However, very regrettably, BAPA lacks non-unary relations, which is outright fatal when it comes to expressing structural properties. 

Combining both worlds, we introduce \thegoodlogic 
\raisebox{-0.6ex}{\includegraphics[scale=0.84]{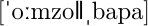}}\,,
a logic for countable structures, which extends CMSO by BAPA's set operations and Presburger statements, strictly contains MSO$^{\exists\text{Card}}$, and allows for sophisticated structural-arithmetic statements (Section 3). 
We warrant computational manageability by gently controlling the usage of variables, noting that satisfiability turns undecidable otherwise (Section 4). 
Exhibiting an elaborate transformation (Section 5), we prove that \thegoodlogic formulae over trees can be brought into a very restricted \emph{tree normal form} (TNF). 
We then provide a characterization showing that the sets of $\omega$-trees satisfying TNF formulae coincide with the sets of trees recognized by \emph{Parikh-Muller Tree Automata} (PMTA), a novel automata model designed by us -- and
  the first-ever automaton model on infinite trees capable of testing Parikh conditions (Section 6). PMTA generalize both Muller and Parikh automata and their emptiness is decidable. 
The decidability of \thegoodlogic over the class of labeled infinite binary trees thereby obtained is then lifted to all \highlight{tree-interpretable classes}, including vast and practically relevant classes of finite or countable structures that are bounded in terms of certain width measures (Section 7). Such width-bounded \thegoodlogic can be decidably coupled with width-unbounded two-variable logics with advanced counting (Section 8). We demonstrate how to leverage our results to gain decidability results for statistics-enhanced formalisms of the $\mu$-calculus family, which subsumes branching-time logics such as CTL$^*$ (Section 9).


\section{Preliminaries}\label{sec:Prelims}

As usual, for any $n \in \mathbb{N}$, let $[n]\coloneqq \{1,\ldots,n\}$. 
In order to \textbf{count to infinity}, we 
use 
$\mathbb{N}$ extended by (countable) infinity $\infty$, with arithmetics lifted
in the usual way;
 in particular, $\infty \plus n = \infty \plus \infty = (n+1) \cdot \infty = \infty$ and $0 \cdot \infty = 0$ as well as $n \leq \infty$ and $\infty \leq \infty$.
For countable sets $A$, let $|A|$ denote the element of $\mathbb{N} \cup \{\infty\}$ that corresponds to their cardinality.


To define \textbf{countable structures}, assume the following countable, pairwise disjoint sets: 
a set $\Consts$ of \emph{(individual) constants}, denoted by $\pr{a},\pr{b},\pr{c},...$, and,
for every $n \in \Nat$, a set $\Preds_n$ of \emph{$n$-ary predicates}, denoted by $\pr{P},\pr{R},\pr{Q},...$.
The set of all predicates will be denoted by $\Preds := \bigcup_{i\in \mathbb{N}}\Preds_n$, and we let $\mathrm{ar}:\Preds \to \mathbb{N}$ such that $\mathrm{ar}(\pr{Q})=n$ iff $\pr{Q} \in \Preds_n$.
A \emph{(relational) signature} $\Sig$ is a union $\SigC \cup \SigP$ of finite subsets of $\Consts$ and $\Preds$, respectively.
An \emph{$\Sig$-structure} is a pair $\mathfrak{A} = (A, \cdot^\mathfrak{A})$, where $A$ is a countable, nonempty set, called the \emph{domain} of $\mathfrak{A}$ and $\cdot^\mathfrak{A}$ is a function that maps
every $\pr{a} \in \SigC$ to a domain element $\pr{a}^\mathfrak{A} \in A$, and
every $\pr{Q} \in \SigP$ to an $\mathrm{ar}(\pr{Q})$-ary relation $\pr{Q}^\mathfrak{A} \subseteq A^{\mathrm{ar}(\pr{Q})}$.


We define \textbf{infinite trees} starting from a finite, non-empty set $\Sigma$, called \emph{alphabet}.
A \emph{(full) infinite binary tree} (often simply called a \emph{tree}) labeled by some alphabet $\Sigma$ is a mapping 
$\xi\colon\{0,1\}^*\to \Sigma$. We denote the set of all trees labeled by $\Sigma$ by $T^\omega_\Sigma$. A \emph{finite tree} is a mapping
\noindent$\xi\colon X\to \Sigma$ where $X$ is a finite, prefix-closed subset of $\{0,1\}^*$. The set of all finite trees over $\Sigma$ will be denoted by $T_\Sigma$. We sometimes refer to the domain $X$ of $\xi$ by $\pos(\xi)$, whose elements we call \emph{positions} or \emph{nodes} of $\xi$. Given a tree $\xi\in T_\Sigma^\omega$ and a finite, prefix-closed set $X\subseteq\{0,1\}^*$, we denote by $\xi_{|X}$ the finite tree in $T_\Sigma$ that has $X$ as domain and coincides with $\xi$ on $X$.

An \emph{(infinite) path} $\pi$ is an infinite sequence $\pi=\pi_1\pi_2\ldots$ of positions from $\{0,1\}^*$ such that $\pi_1=\varepsilon$ and, for each $i\geq 1$, $\pi_{i+1}\in(\pi_i\cdot\{0,1\})$. Given a tree $\xi\in T^\omega_\Sigma$ and a path $\pi$, we denote by $\xi(\pi)$ the infinite word $\xi(\pi_1)\xi(\pi_2)\ldots$ obtained by concatenating the labels of $\xi$ along $\pi$. We denote by $\inf(\xi(\pi))$ the set of all labels occurring infinitely often in $\xi(\pi)$.

We will also find it convenient to represent trees over some given alphabet $\Sigma=\{a_1,\ldots,a_n\}$ as structures over the signature $\Sig = \SigP = \{\succ_0,\succ_1,\pr{P}_{a_1},\ldots,\pr{P}_{a_n}\}$: Thereby, a tree $\xi\in T^\omega_\Sigma$ will be represented by the structure $\mathfrak{A}_\xi$ with $A_\xi= \{0,1\}^*$, where $\succ_0^{\mathfrak{A}_\xi}=\{(w,w0)\mid w\in\{0,1\}^*\}$ and $\succ_1^{\mathfrak{A}_\xi}=\{(w,w1)\mid w\in\{0,1\}^*\}$ while $\pr{P}_{a_i}^{\mathfrak{A}_\xi}=\{u\in\{0,1\}^*\mid \xi(u)=a_i\}$ for each $i\in[n]$. When there is no danger of confusion, we will simply write $\xi$ instead of $\mathfrak{A}_\xi$.

\section{Syntax and Semantics of \thegoodlogicheading}\label{sec:SyntaxSemantics}
We now introduce the logic \thegoodlogic. The underlying ``design principles'' for this logical formalism are to have a language that syntactically subsumes and tightly integrates CMSO and BAPA, while still exhibiting favorable computational properties, even over countably infinite structures. To this end, we will first define the language \thelogic and then obtain \thegoodlogic by imposing some syntactic restrictions on the usage of variables.

\begin{definition}[Syntax of \thelogic]
Given a signature $\Sig = \SigC \cup \SigP$, together with three countable and pairwise disjoint sets 
$\Vind$ of \emph{individual variables} (denoted $x,y,z,...$),
$\Vset$ of \emph{set variables} (denoted $X,Y,Z,...$), and
$\Vnum$ of \emph{number variables} (denoted $\nv{k},\nv{l},\nv{m},\nv{n}...$),
we define the following sets of expressions by mutual induction:
\begin{itemize}
    \item the set $\iTerms$ of \emph{individual terms}:\qquad  $\iota \Coloneqq{} \pr{a} \mid x $
    \item the set $\sTerms$ of \emph{set terms} (\/$\pr{P}$ being a unary predicate):\quad $S \Coloneqq{}\ \{\pr{a}\} \mid \pr{P} \mid X \mid S^c \mid S_1 \cap S_2 \mid S_1 \cup S_2 $
    \item the set $\nTerms$ of \emph{number terms}:\footnote{We will consider number terms obtainable from each other through basic transformations (reordering, factoring, summarizing, rules for $\infty$) as syntactically equal, allowing us to focus on simplified expressions.} \quad $ t \Coloneqq{} \pr{n} \mid \ttinfty \mid \nv{k} \mid \cnts{S} \mid  \pr{m}\,t \mid t_1 \plus t_2$\\
    (with $n \in \mathbb{N}$ and $m \in \mathbb{N}\setminus\{0\}$; we use typewriter font to indicate that we mean an explicit representation of a constant natural number $n$ or $m$ rather than the symbol ``\hspace{1pt}$n$\!'' or ``\hspace{1pt}$m$\!'')
    \item the set $\Form$ of (unrestricted) \emph{formulae}:
    \begin{align*}
    \varphi \Coloneqq{}\ & \mathtt{Q}(\iota_1,\ldots,\iota_n) \mid S(\iota) \mid 
    t \ttleqfin t' \mid t \ttleq t' \mid \cnts{S} \equiv_\pr{n} \pr{m} \mid \mathrm{Fin}(S) \mid \mathbf{true} \mid \mathbf{false} \mid\\ 
    & \neg \varphi \mid \varphi \wedge \varphi' \mid \varphi \vee \varphi' \mid \exists x.\varphi \mid \forall x.\varphi \mid \exists X.\varphi \mid \forall X.\varphi \mid \exists \nv{k}.\varphi \mid \forall \nv{k}.\varphi
    \end{align*}
The first six types of atomic formulae will be referred to as \emph{predicate atoms}, \emph{set atoms}, \emph{classical Presburger atoms}, \emph{modern Presburger atoms}, \emph{modulo atoms}, and \emph{finiteness atoms}, respectively.
We use \emph{Presburger atoms} and write $t \ttleqffin t'$ to jointly refer to the classical and modern variants. A Presburger atom $t \ttleqffin t'$ is called \emph{simple}, if it contains at most one occurrence of a term of the shape $\cnts{S}$ and no occurrences of number variables.
\end{itemize}
\end{definition}

Note that, for notational homogeneity, we choose to write $X(\iota)$ instead of $\iota \in X$. Where convenient%
, we will also make use of the Boolean connectives $\Rightarrow$ and $\Leftrightarrow$ as abbreviations with the usual meaning.
While the original syntax of \thelogic does not provide an explicit equality predicate, both individual and set equality can be expressed (see further below). 

\begin{definition}[Semantics of \thelogic]
A \emph{variable assignment} (for a structure $\mathfrak{A}$) is a function $\nu$ that maps
\begin{itemize}
    \item every individual variable $x \in \Vind$ to a domain element $\nu(x) \in A$,
    \item every set variable $X \in \Vset$ to a subset $\nu(X) \subseteq A$ of the domain,  and
    \item every number variable $\nv{k} \in \Vnum$ to a number $\nu(\nv{k}) \in \mathbb{N} \cup \{\infty\}$.
\end{itemize}
We write $\nu_{\scriptscriptstyle x{\mapsto}a}$, $\nu_{\scriptscriptstyle X{\mapsto}A'}$, and $\nu_{\scriptscriptstyle \nv{k}{\mapsto}n}$ to denote $\nu$ updated in the way indicated in the subscript.

Given an interpretation $\mathfrak{A}$ and a variable assignment $\nu$, we let the function $\int{\cdot}$ map
\begin{itemize}
    \item $\iTerms$ to $A$ by letting $\int{\pr{a}} = {}\pr{a}^\mathfrak{A}$ and $\int{x} = {}\nu(x)$,
    \item $\sTerms$ to $2^A$ by letting
\newcommand{\eqarray}[1]{
\begin{array}{r@{\  = {} \ }l}
#1
\end{array}
}
$$
\eqarray{
\int{\{\pr{a}\}} &  \{\int{\pr{a}}\} \\
\int{\pr{P}} & \pr{P}^\mathfrak{A} \\ 
}
\qquad
\eqarray{
    \int{X} & \nu(X)  \\
    \int{(S^c)} & A \setminus \int{S} \\
}
\qquad
\eqarray{
    \int{(S_1 \cap S_2)} & \int{S_1} \cap \int{S_2} \\ 
    \int{(S_1 \cup S_2)} & \int{S_1} \cup \int{S_2} \\
}
$$
\item $\nTerms$ to $\mathbb{N}\cup \{\infty\}$ by letting
$$
\eqarray{
    \int{\pr{n}}  & n \\
    \int{\ttinfty}  & \infty \\ 
}
\qquad
\eqarray{
    \int{\nv{k}}  &  \nu(\nv{k}) \\ 
    \int{(\cnts{S})}  &  |\int{S}| \\  
}
\qquad
\eqarray{
    \int{(\pr{n}\, t)}  & n \cdot \int{t} \\
    \int{(t_1 \plus t_2)}  &  \int{t_1} + \int{t_2} \\
}
$$
\end{itemize}

\noindent Finally we define satisfaction of formulae from $\Form$ as follows: $\mathfrak{A}, \nu$ satisfies
$$
\begin{array}{@{}l@{\ \text{iff}\ \ }l}
\pr{Q}(\iota_1,\!...,\iota_n) &  (\int{(\iota_1)}\!\!,\!...,\int{(\iota_n)}) \in \pr{Q}^\mathfrak{A} \\
S(\iota) & \int{\iota} \in \int{S} \\
t_1 \ttleq t_2 & \int{t_1} \leq \int{t_2} \\
t_1 \ttleqfin t_2 & \int{t_1} \leq \int{t_2} < \infty \\
\cnts{S} \equiv_\pr{n} \pr{m} & \begin{array}[t]{@{}l@{}}
         \int{(\cnts{S})} = m \mod n\\ 
         \text{and }\int{(\cnts{S})} < \infty \\
        \end{array}\\     
\mathrm{Fin}(S) & |\int{S}| < \infty \\
\neg \varphi & \mathfrak{A},\nu \not\models \varphi \\
\end{array}\quad
\begin{array}{@{}l@{\ \text{iff}\ \ }l}
\varphi_1 \!\wedge\! \varphi_2 & \mathfrak{A},\nu \models \varphi_1 \text{ and \ } \mathfrak{A},\nu \models \varphi_2 \\
\varphi_1 \!\vee\! \varphi_2 & \mathfrak{A},\nu \models \varphi_1 \text{ or \ } \mathfrak{A},\nu \models \varphi_2 \\
\exists x.\varphi & \mathfrak{A},\nu_{\scriptscriptstyle x{\mapsto}a} \models \varphi \text{ \ for some }a{\,\in\,} A\\
\forall x.\varphi & \mathfrak{A},\nu_{\scriptscriptstyle x{\mapsto}a} \models \varphi \text{ \ for all }a{\,\in\,} A\\
\exists X.\varphi & \mathfrak{A},\nu_{\scriptscriptstyle X{\mapsto}A'} \models \varphi \text{ \ for some }A'{\,\subseteq\,} A\\
\forall X.\varphi & \mathfrak{A},\nu_{\scriptscriptstyle X{\mapsto}A'} \models \varphi \text{ \ for all }A'{\,\subseteq\,} A\\
\exists \nv{k}.\varphi & \mathfrak{A},\nu_{\scriptscriptstyle \nv{k}{\mapsto}n} \models \varphi \text{ \ for some }n{\,\in\,} \mathbb{N} {\,\cup\,} \{\infty\}\\
\forall \nv{k}.\varphi & \mathfrak{A},\nu_{\scriptscriptstyle \nv{k}{\mapsto}n} \models \varphi \text{ \ for all }n {\,\in\,} \mathbb{N} {\,\cup\,} \{\infty\}\\
\end{array}
$$
Plus, we always let $\,\mathfrak{A},\nu \models\mathbf{true}$ and $\,\mathfrak{A},\nu \not\models\mathbf{false}$. 
For a formula $\varphi$, its \emph{free variables} (denoted $\free(\varphi)$) are defined as usual; $\varphi$ is a \emph{sentence} if $\free(\varphi)=\emptyset$. 
For sentences, $\nu$ does not influence satisfaction, which allows us to write $\mathfrak{A} \models \varphi$ and call $\mathfrak{A}$ a \emph{model} of $\varphi$ in case  $\mathfrak{A},\nu \models \varphi$ holds for any $\nu$. We call $\varphi$ \emph{satisfiable} if it has a model. 
\end{definition}

\begin{definition}[Syntax of \thegoodlogic]
From now on, we will make the following assumption (which is easily obtainable via renaming): 
In every formula, all quantifications use different variable names and these are disjoint from the names of free variables. 
Given an \thelogic formula $\varphi$ satisfying this assumption, we analyze its constituents as follows: 
\begin{itemize}
    \item A (set or individual) variable is called \emph{assertive}, if it is free, or it is existentially quantified and the quantification does not occur inside the scope of a negation or of a universal (set, individual, or number) quantifier.
    \pagebreak
    \item The set of \emph{delicate} individual and set variables is the smallest set of (non-assertive) variables satisfying the following: 
\begin{itemize}
\item Every non-assertive set variable occurring in a non-simple Presburger atom is delicate.
\item If some atom contains a delicate (individual or set) variable, then all of this atom's non-assertive (individual or set) variables are delicate. 
%
%
\end{itemize}
\end{itemize}
Then, $\varphi$ is an \emph{\thegoodlogic formula} iff each of its predicate atoms $\pr{Q}(\,\cdots\,)$ contains at most one delicate variable (possibly in multiple occurrences). 
\end{definition}


It is easy to see that, despite the above restrictions, \thegoodlogic entirely encompasses CMSO and MSO$^{\exists\text{Card}}$ (no delicate variables) as well as BAPA (no predicates of arity ${>}1$).
%
%

\noindent For convenience and better readability, we will make use of the following abbreviations. 
$$
\begin{array}{rl}
    x=y \ \  &\coloneqq \forall Z. Z(x) \Leftrightarrow Z(y)\\
    S \neq \emptyset\ \ &\coloneqq \exists z. S(z)\\  
    S\subseteq S' &\coloneqq \forall z. S(z)\Rightarrow S'(z)\\
    S = S' &\coloneqq (S\subseteq S') \wedge (S'\subseteq S)\\
\end{array}
\quad
\begin{array}{rl}
    \exists x{\,\in\,} S.\varphi &\coloneqq \exists x.S(x)\land\varphi\\
    \forall x{\,\in\,} S.\varphi &\coloneqq \forall x.S(x)\Rightarrow\varphi\\
    t \tteq t' &\coloneqq (t \ttleq t') \wedge (t' \ttleq t)\\ 
    t \tteqfin t' &\coloneqq (t \ttleqfin t') \wedge (t' \ttleqfin t)\\ 
\end{array}
$$
An analysis of these abbreviations reveals that \thegoodlogic allows for the variables $x,y$ and set variables in $S,S'$ in these abbreviations to be delicate. 
We will also employ shortcuts specific to the signature $\{\succ_0,\succ_1,\pr{P}_{a} \mid a \in \Sigma\}$  for $\Sigma$-labeled trees. Contrary to above, in these shortcuts, $x$, $y$, $X$, $Y$ must not be delicate to warrant inclusion in \thegoodlogic (Obs. $\dagger$). 
$$
\begin{array}{@{}r@{\ }l@{}}
    X(x.i) &\coloneqq \ \exists y. x\succ_i y \land X(y)\\
\varphi_{\uparrow\mathrm{clsd}}(X)&\coloneqq \ \forall z. X(z.0) \vee X(z.1) \Rightarrow X(z)\\
    x\succ^* y &\coloneqq \  \forall Z. Z(y) \wedge \varphi_{\uparrow\mathrm{clsd}}(Z)  \Rightarrow Z(x)\\
\varphi_\downarrow(x,X) & \coloneqq \ \forall z. \big( X(z) \Leftrightarrow x \succ^* z \big)\\ 
\varphi_\mathrm{path}(X)&\coloneqq \  X\!\neq \emptyset \wedge \varphi_{\uparrow\mathrm{clsd}}(X)\land \forall z{\,\in\,}X.\big(X(z.0) \Leftrightarrow \neg X(z.1)\big)\\
\varphi_\mathrm{inf}(X)&\coloneqq \  \exists Z. \varphi_\mathrm{path}(Z) \land \forall z{\,\in\,} Z.\exists z'{\in}X. (z\succ^+z')\\
\varphi^\cap_\mathrm{inf}(X,Y)&\coloneqq\exists Z. Z\subseteq X\land Z\subseteq Y\land \varphi_\mathrm{inf}(Z)\\
\end{array}\hspace{-10ex} 
\begin{array}{r@{\ }l}
    x\succ y &\coloneqq \ (x\succ_0 y)\lor(x\succ_1 y)\\
\varphi_\mathrm{root}(x)&\coloneqq \  \neg\exists z.(z\succ x)\\
    x \succ^+ y&\coloneqq \  (x\succ^* y)\land (x\neq y)\\
    & \\
    & \\
    & \\
    & \\
\end{array}
$$

%
%

\begin{example}\label{example}
\newcommand{\ela}{\text{blue}}
\newcommand{\elb}{\text{red}}
\newcommand{\elc}{\text{green}}
\newcommand{\eld}{\text{yellow}}
\newcommand{\ele}{\text{black}}
We use \thegoodlogic to specify the class of all labeled infinite binary trees over the alphabet 
$\Sigma = \{\ela,\elb,\elc,\eld,\ele\}$ 
satisfying the following property:

\smallskip

{\it
\noindent ``There is a path $X$ and some node $x$ on $X$ such that the following hold:
\begin{enumerate}
    \item For every infinite selection $Y$ of $\ela$ nodes from the $x$-descendants on the path $X$, there is a selection $Y'$ of $\elb$ nodes from the whole tree, such that    
    \begin{enumerate}
    \item $Y$ and $Y'$ contain the same number of nodes with infinitely many $\elc$ descendants,
    \item $Y$ contains twice as many nodes as $Y'$ having less than 10 $\eld$ descendants.
    \end{enumerate}
    \item For every finite selection $Z$ of $\ela$ $x$-descendants, the total number of nodes lying on paths from $x$ to nodes of $Z$ is even.''  
\end{enumerate}
}
\vspace{-4ex}
\begin{align*}
& \exists X. \exists x.\varphi_\mathrm{path}(X) \wedge X(x) \wedge \exists V_0. \varphi_\downarrow(x,V_0)\  \wedge\\[-0.5ex]
& \hspace{5.5ex} \bigg( \exists V_1.\big(\forall v_1. V_1(v_1) \Leftrightarrow \exists V^{v_1}_\downarrow. \varphi_\downarrow(v_1,V^{v_1}_\downarrow) \wedge \neg \mathrm{Fin}(V^{v_1}_\downarrow \cap \pr{P}_\elc)\big) \ \wedge\\[-1ex] 
& \hspace{12ex} \exists V_2.\big(\forall v_2. V_2(v_2) \Leftrightarrow \exists V^{v_2}_\downarrow.\varphi_\downarrow(v_2,V^{v_2}_\downarrow) \wedge  \cnts{(V^{v_2}_\downarrow \cap \pr{P}_\eld)} \ttleq \pr{10}\big) \ \wedge\\
& \hspace{16ex}\Big(
\forall Y. \big( \neg\mathrm{Fin}(Y) \wedge Y \subseteq X \cap V_0 \cap \pr{P}_\ela \big) \Rightarrow \\[-2ex]
& \hspace{19ex} \exists Y'. Y' \subseteq \pr{P}_\elb \wedge \cnts{(Y\cap V_1)} \tteq \cnts{(Y'\cap V_1)} \wedge \cnts{(Y\cap V_2)} \tteq \pr{2}\,\cnts{(Y'\cap V_2)} \Big)\! \bigg)
\ \!\wedge\\[-2.5ex]
& \hspace{5.5ex}\bigg( 
\forall Z. \big( \mathrm{Fin}(Z) \wedge Z\subseteq V_0\cap\pr{P}_\ela \big)
\Rightarrow  \\[-2.5ex] 
& \hspace{19ex} \exists V_3.\big(\forall v_3.V_3(v_3) \Leftrightarrow   
( x \succ^+ v_3 \wedge \exists z\in Z.v_3 \succ^* z' )\big) \wedge
\cnts{V_3} \equiv_\pr{2} \pr{0}
\bigg)
\end{align*}
Therein, we use set variables capturing 
all descendants of $x$ ($V_0$); 
all nodes with infinitely many $\elc$ descendants ($V_1$);
all nodes with less than $10$ $\eld$ descendants ($V_2$); and
all nodes between $x$ and elements of $Z$ ($V_3$).
Analysing the variables yields that $X$, $x$, $V_0$, $V_1$, and $V_2$ are assertive, while $Y$ and $Y'$ are delicate due to their occurrence in the non-simple Presburger atoms in the fifth line.
Delicacy is not inherited further, thus no two delicate variables occur in any predicate atom. Therefore the formula is indeed in \thegoodlogic. %
Note that it is crucial that $V_1$ and $V_2$ are defined ``prematurely'' outside the scope of $\forall Y$, so they become assertive and thus their occurrence in the (non-simple) Presburger atoms does not turn them delicate. This technique of ``encapsulating'' unary descriptions into assertive set variables unveils significant additional expressiveness of \thegoodlogic. See also \Cref{sec:Conc} for a discussion on a handier syntax for this.



\end{example}

\section{Mildly Extending \thegoodlogicheading
Leads to Undecidability}\label{sec:Undecidable}
Just slightly relaxing the syntax of \thegoodlogic allows us to express Hilbert's 10th Problem.

\begin{definition}[Positive Diophantine Equation]
A \emph{positive Diophantine equation} $\mathcal{D}$ is a tuple\\ $(NV,M,(n_w)_{w\in M},(m_w)_{w\in M})$ where 
$NV$ is a non-empty, ordered set $\{\nv{z_1},\ldots,\nv{z_k}\}$ of number variables;
$M$ (the variable products or \emph{monomials}) is a finite and non-empty, prefix-closed set of sorted variable sequences, i.e.,  $$M \subseteq \{ \underbrace{\nv{z}_1\ldots\nv{z}_1}_{i_1} \ldots  \underbrace{\nv{z}_k\ldots\nv{z}_k}_{i_k} \mid i_1,\ldots,i_k \in \mathbb{N}\};$$ 
and all $n_w$ and $m_w$ are from $\mathbb{N}$ and encode the monomial coefficients on either side of the equation.
A positive Diophantine equation is \emph{solvable} if it admits a solution, where a \emph{solution} for $\mathcal{D}=(NV,M,(n_w)_{w\in M},(m_w)_{w\in M})$ 
is a variable assignment $\nu:NV \to \mathbb{N}$ satisfying
\newcommand{\sldots}{.\hspace{0.1ex}.\hspace{0.1ex}.}
$$
{\sum}_{\smash{w=\nv{z}_1^{i_1}\!\!\sldots\nv{z}_k^{i_k}} {\in} M} n_w\cdot\nu(\nv{z}_1\hspace{-1pt})^{i_1}\cdot\sldots\cdot\nu(\nv{z}_k\hspace{-1pt})^{i_k} {\,=}  
{\sum}_{\smash{w=\nv{z}_1^{i_1}\!\!\sldots\nv{z}_k^{i_k}} {\in} M} m_w\cdot\nu(\nv{z}_1\hspace{-1pt})^{i_1}\cdot\sldots\cdot\nu(\nv{z}_k\hspace{-1pt})^{i_k}\, .
$$

\end{definition}

Solvability of positive Diophantine equations is undecidable, which is a straightforward consequence of the undecidability of arbitrary Diophantine equations over integers \cite{MatiyasevichDio}.

We will show that for any $\mathcal{D}$, we can compute an \thelogic sentence $\varphi_\mathcal{D}$ whose satisfiability over labeled trees coincides with solvability of $\mathcal{D}$, despite $\varphi_\mathcal{D}$ being only
``minimally outside'' \thegoodlogic –- also contrasting the fact that sentences of this shape still warrant decidable satisfiability over finite words \cite[Thm. 8.13]{Kla04}.

\begin{figure}[b]
    \centering
	\usetikzlibrary{arrows.meta}
	\usetikzlibrary{calc}
	\usetikzlibrary{decorations}
	\usetikzlibrary{decorations.pathmorphing}
	\usetikzlibrary{decorations.pathreplacing}
	\usetikzlibrary{positioning}
	\usetikzlibrary{shapes.geometric}

\newcommand{\textover}[3][l]{%
 \makebox[\widthof{#3}][#1]{#2}%
 }

\pgfdeclarelayer{background}
\pgfsetlayers{background,main}

\definecolor{myyellow}{RGB}{243,201,71}
\definecolor{myblue}{RGB}{67,139,168}
\definecolor{myred}{RGB}{219,76,55}

\tikzset
	{ > = Stealth
	}

\tikzset{
  inlinesf/.style={
    baseline={([yshift=-0.6ex]current bounding box.center)}
  , level distance=20pt
  , text height=1.5ex
  , text depth=0.75ex
  , sibling distance=20pt
  , inner sep=1.5pt
  }
}
\tikzset{
  inlinetree/.style={
    baseline={(current bounding box.center)}
  }
}
\tikzset{
  tri/.style = {
    inner sep=1pt,
    shape border rotate=90,
    isosceles triangle,
    isosceles triangle apex angle=60,
  }
}

\tikzset{
itria/.style={
  draw,dashed,shape border uses incircle,
  isosceles triangle,shape border rotate=90,yshift=-1.45cm},
rtria/.style={
  draw,dashed,shape border uses incircle,
  isosceles triangle,isosceles triangle apex angle=90,
  shape border rotate=-45,yshift=0.2cm,xshift=0.5cm},
ltria/.style={
  draw,dashed,shape border uses incircle,
  isosceles triangle,isosceles triangle apex angle=90,
  shape border rotate=225,yshift=0.2cm},
ritria/.style={
  shape border uses incircle,
  isosceles triangle,isosceles triangle apex angle=100,
  shape border rotate=-50,yshift=0.1cm, top color=myred!20},
letria/.style={
  shape border uses incircle,
  isosceles triangle,isosceles triangle apex angle=100,
  shape border rotate=230,yshift=0.1cm}
}

\begin{tikzpicture}
    [inlinesf,
    level 2/.style={sibling distance=5.5em, level distance=2em},
    pp/.style = {edge from parent path={(\tikzparentnode) -- (\tikzchildnode.north)}},
    empty/.style = {edge from parent path={}},
    ]
    \footnotesize

    \node[tri,minimum width=21em,isosceles triangle apex angle=95, top color=myblue!40] (t) {};
    \node[] (w1) at (-0.7,0.6) {};
    \node (help) at (-0.3,1) {};
    \node[] (w11) at (-0.9,-0.1) {};
    \node[] (w12) at (-1.8,-0.6) {};
    \node[] (w2) at (0.,0.06) {};
    \node[] (h1) at (0.05,-0.15) {};
    \node[] (w21) at (-0.25,-0.75) {};
    \node[] (w22) at (0.08,-0.4) {};
    \node[] (w3) at (0.99,0.33) {};
    \node[] (h2) at (1.2,0.05) {};
    \node[] (h3) at (1.8,-0.4) {};
    \node[] (w31) at (1.2,-0.35) {};
    \node[] (w32) at (1.93,-0.7) {};
	
	\node[tri,minimum height=4.3em,isosceles triangle apex angle=50, top color=myred!20, below=-0.3em of t.center] (t2) {};
    \node[ritria, left=1.7em of t2, minimum height=4em, yshift=-0.1em](t1) {};
    
    \node[letria, minimum height=3.5em, top color=myred!20, right=2.5em of t2, yshift=-0.4em](t3) {};
	
	\draw[gray!60,decorate, decoration={random steps, aspect=0,amplitude=1pt}] (t.north) -- (help.center) -- (w1.center) -- (w11.center);
    \draw[gray!60,decorate, decoration={random steps, aspect=0,amplitude=2pt}] (w1.center) -- (w12.center);
    \draw[gray!60,decorate, decoration={random steps, aspect=0,amplitude=2pt}] (help.center) -- (w2.center) -- (h1.center) -- (w21.center);
    \draw[gray!60,decorate, decoration={random steps, aspect=0,amplitude=2pt}] (h1.center) -- (w22.center);
    \draw[gray!60,decorate, decoration={random steps, aspect=0,amplitude=2pt}] (t.north) -- (w3.center) -- (h2.center) -- (h3.center) -- (w32.center);
    \draw[gray!60,decorate, decoration={random steps, aspect=0,amplitude=2pt}] (h2.center) -- (w31.center);

    \node[xshift=0.4em, yshift=-0.2em] at (w1.center) {\Large$\cdot_w$};
    \node[xshift=0.4em, yshift=-0.2em] at (w2.center) {\Large$\cdot_w$};
    \node[xshift=0.4em, yshift=-0.2em] at (w3.center) {\Large$\cdot_w$};

    \node[myred,xshift=0.9em, yshift=-0.15em] at (w11.center) {\Large$\cdot_{w\nv{z_i}}$};
    \node[myred,xshift=0.9em, yshift=-0.15em] at (w12.center) {\Large$\cdot_{w\nv{z_i}}$};
    \node[myred,xshift=0.9em, yshift=-0.15em] at (w21.center) {\Large$\cdot_{w\nv{z_i}}$};
    \node[myred,xshift=0.9em, yshift=-0.15em] at (w22.center) {\Large$\cdot_{w\nv{z_i}}$};
    \node[myred,xshift=0.9em, yshift=-0.15em] at (w31.center) {\Large$\cdot_{w\nv{z_i}}$};
    \node[myred,xshift=0.9em, yshift=-0.15em] at (w32.center) {\Large$\cdot_{w\nv{z_i}}$};

    \node[myblue] at (0.25,0.6) {\Large$\cdot_{\nv{z_i}}$};
    \node[myblue] at (-0.3,0.2) {\Large$\cdot_{\nv{z_i}}$};

   	\node (node) at (5.5,1.5) {$\varphi_\mathrm{lab} \coloneqq  \exists x {\in} \pr{P}_\varepsilon.\varphi_\mathrm{root}(x) \wedge \bigwedge\limits_{\mathclap{w\in M}} \big(\forall x {\in} \pr{P}_w\cup\pr{P}_{\hat{w}}.\forall y. x \!\succ\! y \Rightarrow \pr{P}_{\hat{w}}(y) \vee \bigvee\limits_{\mathclap{w\nv{z}_i \in M}}\pr{P}_{w\nv{z}_i}(y)\big)$};

   	\node (node) at (6.2,0.3) {$\varphi_{\mathrm{prod}}\coloneqq\bigwedge_{{w, w\nv{z}_i\in M}} \forall y{\in}\pr{P}_w. \exists Z.\varphi_\downarrow(y,Z) \wedge {\color{myred}\cnts{(Z\cap\pr{P}_{w\nv{z}_i})}} \tteqfin{\color{myblue}\cnts{\pr{P}_{\nv{z}_i}}}$};

	\node (node) at (6.8,-0.8)  {$\varphi_{\mathrm{sol}}\coloneqq\sum_{w \in M} \pr{n}_w\,\cnts{\pr{P}_w} \tteqfin \sum_{w \in M} \pr{m}_w\,\cnts{\pr{P}_w}$};

  \end{tikzpicture}  
 \vspace{-2ex}
    \caption{Illustration of the intended model structure and definition of $\varphi_\mathcal{D}\coloneqq\varphi_{\mathrm{lab}}\land \varphi_{\mathrm{prod}}\land\varphi_{\mathrm{sol}}$}
    \label{fig:tree}
\end{figure}

As detailed in \Cref{fig:tree}, we let $\varphi_\mathcal{D}\coloneqq\varphi_{\mathrm{lab}}\land \varphi_{\mathrm{prod}}\land\varphi_{\mathrm{sol}}$ characterize trees labeled by $w$ and $\hat{w}$, for $w \in M$, such that each model $\xi$ of $\varphi_\mathcal{D}$ corresponds to a solution $\nu$ of $\mathcal{D}$ as follows: 
for each $\nv{z}\in NV$, the number of nodes in $\xi$ labeled with $\nv{z}$ (i.e., $\cnts{\pr{P}_\nv{z}}$) equals the number that $\nu$ assigns to $\nv{z}$. Likewise, for each variable product $w\nv{z}_i\in M$, we ensure that $\cnts{\pr{P}_{w\nv{z}_i}}=\cnts{\pr{P}_{w}}\cdot\cnts{\pr{P}_{\nv{z}_i}}$. To this end, we stipulate via $\varphi_{\mathrm{lab}}$ that for any $w$, all $w$-labeled nodes are pairwise $\succ^{\!*}$-incomparable, and every ${w\nv{z}}$-labeled node has exactly one ${w}$-labeled ancestor (using the label $\hat{w}$ for ``padding'' between $w$ and $w\nv{z}_i$), and we enforce via $\varphi_{\mathrm{prod}}$ that for any $w, w\nv{z}_i\in M$, each subtree rooted in a $w$-labeled node contains precisely as many ${w\nv{z}_i}$-labeled nodes as there are ${\nv{z}}_i$-labeled nodes in the whole tree. Finally, under the conditions enforced by $\varphi_{\mathrm{lab}}$ and $\varphi_{\mathrm{prod}}$, $\varphi_{\mathrm{sol}}$ implements that the model indeed encodes a solution of the given $\mathcal{D}$. 


While the first conjunct is pure MSO and the third conjunct is a variable-free Presburger atom, the second conjunct is not in \thegoodlogic: $\exists Z$ occurs inside the scope of $\forall y$, thus $Z$ is not assertive. Yet, as discussed in \Cref{sec:SyntaxSemantics} (Obs. $\dagger$), this is at odds with $Z$ occurring in  $\varphi_\downarrow(y,Z)$.   



\begin{restatable}{proposition}{undecidable}
For any positive Diophantine equation $\mathcal{D}$, satisfaction of $\varphi_\mathcal{D}$ over (finite or infinite) labeled trees coincides with solvability of $\mathcal{D}$. Consequently, satisfiability of the class of \thelogic sentences of the shape $\varphi_\mathcal{D}$ is undecidable.
\end{restatable}

\section{Transformation into Normal Form}

Toward establishing our decidability result, we show that \thegoodlogic formulae can be trans\-formed into a specific, very restricted normal form.
To this end, we use a variety of tech\-niques, mostly known from the literature, but with some adjustments to our setting; thus, due to space, we will restrict ourselves to a high-level description and examples. 
The nor\-mali\-za\-tion procedure is subdivided into two phases: The first phase, establishing the \emph{general normal form} (GNF), is valid independently of the underlying class of structures.
The second phase, yielding the \emph{tree normal form} (TNF), is specific to the class of labeled trees.


\medskip

Given an \thegoodlogic formula, substitute complex set expressions in modulo and finiteness atoms by new set variables (e.g. $\mathrm{Fin}(\pr{P}{\,\cap\,}  X)$ becomes $\exists Y.(Y\!=\pr{P}{\,\cap\,}  X) {\,\wedge\,} \mathrm{Fin}(Y)$), 
re\-move set operations from set atoms (e.g. turning $(\pr{P}^c {\,\cap\,} X)(y)$ into $\neg\pr{P}(y) {\,\wedge\,} X(y)$), and rewrite all simple Presburger atoms into plain MSO (e.g. $\pr{2}\,\cnts{\pr{P}}\ttleq \pr{3}$ becomes $\forall xy.\pr{P}(x) {\wedge} \pr{P}(y) \Rightarrow x{=}y$). 
Then, \emph{skolemize} all assertive variables (e.g. $\exists x.\exists X.\forall y.\pr{R}(x,y){\Rightarrow} X(y)$ becomes $\forall y.\pr{R}(\pr{c}_x,y){\Rightarrow} \pr{P}_X(y)$). 
Next \emph{``presburgerize''} all non-Presburger atoms containing (only) delicate variables (e.g.~replacing 
$\cnts{X}\equiv_{\pr{3}} \pr{1}$ with $\exists \nv{k}.\cnts{X}\tteqfin\pr{3}\,\nv{k}\plus\pr{1}$), which may require to turn delicate individual into set variables (e.g. $\forall y.\pr{P}(y)\Rightarrow X(y)$ becomes $\forall Y.(\cnts{Y} \tteq \pr{1}) \wedge \pr{1}\ttleq \cnts{(\pr{P}\cap Y)}\Rightarrow \pr{1}\ttleq \cnts{(X\cap Y)}$). 
The resulting formula exhibits a clear separation of variable usage: Presburger atoms use delicate and number variables, all other atoms use non-delicate variables. In a subsequent step, we \emph{``disentangle''} the quantifiers, such that the scopes of quantified number or delicate variables
are strictly separated from those of non-delicate variables.\footnote{While this transformation is not very complicated technically, it may incur nonelementary blowup.}

We next apply \emph{``vennification''}: a technique known from BAPA. In essence, we introduce new number variables to count the number of elements contained in every \emph{Venn region}, that is, every possible combination of set (non-)memberships (with this, $\cnts{(\pr{P} {\,\cup\,} X)} \ttleq \cnts{\pr{P}^c}$ becomes $\nv{k}_{\pr{P} \cap X} {\plus} \nv{k}_{\pr{P}^c \cap X} {\plus} \nv{k}_{\pr{P} \cap X^c}\ttleq \nv{k}_{\pr{P}^c \cap X} {\plus} \nv{k}_{\pr{P}^c \cap X^c}$). This allows us to remove all delicate set variables from our formula. We are now in the setting where we can apply the well-known \emph{quantifier elimination} for Presburger Arithmetic over the ``purely arithmetic'' subformulae (which may produce new modulo atoms) -- since the latter is classically defined for $\mathbb{N}$ instead of $\mathbb{N} \cup \{\infty\}$, we require a pre-processing step implementing a vast case-distinction as to which of the Venn regions are infinite. As a consequence, we obtain a formula free of number variables, with all Presburger atoms being classic and outside any quantifier scope.

Finally, we \emph{``de-skolemize''}: all constants and unary predicates introduced via the initial skolemization, but also by the intermediate transformation steps, are projected away from the signature, re-interpreting them as existentially quantified individual and set variables. We thus recover ``proper'' equivalence with the initial formula. Last, we bring the formula in disjunctive normal form and pull the trailing existential individual quantifiers inside.

\begin{definition}[General Normal Form]\label{def:GNF}
A \emph{Parikh constraint} is a classical Presburger atom without number variables and where all occurring set terms are set variables.
An \thegoodlogic formula is in \emph{general normal form} (GNF), if it is of the shape
$$
\textstyle\exists X_1.\cdots\exists X_n. \bigvee_{i=1}^k \big( \varphi_i \wedge \bigwedge_{j=1}^{l_i} \chi_{i,j} \big), 
$$
where the $\varphi_i$ are CMSO formulae,\footnote{Recall that CMSO is MSO with modulo and finiteness atoms over set variables.} whereas the $\chi_{i,j}$ are (unnegated) Parikh constraints.
\end{definition}

\begin{theorem}
For every \thegoodlogic formula $\varphi$, it is possible to compute an equivalent formula $\varphi'$ in general normal form.
\end{theorem}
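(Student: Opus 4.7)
The plan is to follow the high-level recipe sketched immediately before the theorem statement and verify, step by step, that each rewriting is (i) semantics-preserving, (ii) produces a formula still interpretable in our logic, and (iii) terminates with a syntactic shape matching Definition~\ref{def:GNF}. I would group the transformations into two phases corresponding to the subsections above, and carry them out in the following order.

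First, an \emph{atomic cleanup phase}. I would push complex set expressions out of modulo and finiteness atoms by introducing fresh existentially quantified set variables together with defining equations ($\mathrm{Fin}(\pr{P}\cap X)$ becomes $\exists Y.(Y = \pr{P}\cap X) \wedge \mathrm{Fin}(Y)$), distribute set operations out of set atoms using Boolean connectives at the formula level, and compile simple Presburger atoms (those with at most one $\cnts{S}$ and no number variables) into plain MSO using the standard ``count up to a threshold'' trick. At this point every remaining Presburger atom is either already \emph{non-simple} or survives as a classical Presburger constraint. The correctness checks here are routine expansions of the semantics.

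Second, a \emph{variable-reorganization phase}. I would Skolemize all assertive variables, introducing fresh constants for assertive individuals and fresh unary predicate symbols for assertive set variables; soundness follows from the definition of ``assertive'' (existential, not in the scope of a universal or of a negation). Next I would \emph{presburgerize} every atom whose free variables are all delicate, rewriting modulo and finiteness atoms into explicit Presburger formulations over an auxiliary number variable, and converting each delicate individual variable into a delicate set variable constrained to be a singleton via $\cnts{Y}\tteq\pr{1}$. The key invariant maintained is the syntactic separation property of \thegoodlogic: delicate and assertive variables do not collide in a predicate atom, so the presburgerization is always local to an atom that uses only delicate/number variables. Then I would \emph{disentangle} the quantifier structure so that the purely arithmetic-and-delicate subformulae are cleanly nested under a CMSO skeleton; this rewrite can cause a nonelementary blowup, but soundness is again an induction on formula structure.

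Third, a \emph{Presburger-elimination phase}. On each arithmetic-and-delicate subformula I would apply \emph{vennification}: replace every delicate set variable by a tuple of number variables, one per Venn region determined by the CMSO-level set symbols, and rewrite cardinality terms accordingly. This removes all delicate set variables. Then I would apply classical Presburger quantifier elimination to the now purely numeric subformula. The subtle point, and \textbf{the main obstacle}, is that our semantics uses $\mathbb{N}\cup\{\infty\}$ rather than $\mathbb{N}$: I would pre-process by a finite case distinction over which of the (finitely many) Venn regions are infinite, rewriting each case as an equivalent formulation over $\mathbb{N}$ alone (using $\mathrm{Fin}$ atoms at the CMSO level to fix the case), and only then invoke standard quantifier elimination, which may introduce new modulo atoms but no new variables. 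Handling the $\infty$ arithmetic correctly throughout, and ensuring that the case distinction is expressible back at the CMSO level via finiteness atoms on the (now Skolemized) set symbols, is the delicate bit.

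Finally, a \emph{reassembly phase}. I would \emph{de-Skolemize}, existentially quantifying the introduced constants and unary predicates on the outside, recovering logical equivalence with the original formula; the constants become outermost existential individual quantifiers which are then pulled inside the DNF by distributing $\exists x$ over $\vee$; the set symbols become the leading $\exists X_1\cdots\exists X_n$ of the GNF shape. Finally, I would put the remaining matrix into disjunctive normal form, separate each disjunct into its CMSO part $\varphi_i$ (which, after our transformations, contains modulo, finiteness, set and predicate atoms only over non-delicate variables, hence lies in CMSO) and its conjunction of unnegated Parikh constraints $\chi_{i,j}$ (since quantifier elimination produces Boolean combinations of such constraints, and negations of Parikh constraints are themselves Parikh constraints). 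This yields exactly the shape required by Definition~\ref{def:GNF}, completing the proof.
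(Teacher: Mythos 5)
Your proposal is correct and follows essentially the same pipeline as the paper's own proof (simplification of atoms, Skolemization of assertive variables, presburgerization, disentangling, vennification, Presburger quantifier elimination with a finiteness case distinction to handle $\infty$, de-Skolemization, and a final DNF rearrangement). The only slight imprecision is the closing remark that negations of Parikh constraints are again Parikh constraints, which holds only after the finiteness case distinction has discharged the $\infty$ cases; this is exactly how the paper arranges for the $\chi_{i,j}$ to come out unnegated.
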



\noindent We now focus on the case of labeled trees.
Very similar to the case of CMSO, under this assumption, we can equivalently transform the GNF formula into one without occurrences of modulo and finiteness atoms. We rewrite $\cnts{X} \equiv_\pr{n} \pr{m}$ into the formula
$$
\begin{array}{r}
\mathrm{Fin}(X) \wedge \exists X_0 ... \exists X_{n-1}.\Big(
\exists x.\big(\varphi_\mathrm{root}(x) \wedge \displaystyle{\bigwedge}_{{{0 \leq i < n}\atop{i \neq m}}}\! \neg X_i(x)\big) \wedge \forall x.\big(
 (\exists y{\in} X. x \succ^* y) \vee X_0(x) \big) \ \wedge~~~\\
\displaystyle{\bigwedge}_{i,j \in \{0,\ldots,n-1\}} \forall z.\big( X_i(z.0) \wedge X_j(z.1) \Rightarrow (\neg X(z) \Rightarrow X_{i\oplus j}(z)) \wedge (X(z) \Rightarrow X_{i\oplus j \oplus 1}(z))\big)\Big),\\
\end{array}
$$
where $\oplus$ denotes addition modulo $n$. Finally, we replace all occurrences of $\mathrm{Fin}(X)$ by $\varphi_\mathrm{fin}(X)$, as defined in \Cref{sec:SyntaxSemantics}. Thus, when employing \thegoodlogic to describe labeled trees, we can confine ourselves to an even more restrictive normal form.

\begin{definition}[Tree Normal Form]\label{def:TNF}
An \thegoodlogic formula is in \emph{tree normal form} (TNF), if it is of the shape
$$
\textstyle \exists X_1.\cdots\exists X_n. \bigvee_{i=1}^k \big( \varphi_i \wedge \bigwedge_{j=1}^{l_i} \chi_{i,j} \big), 
$$
where the $\varphi_i$ are plain MSO formulae and the $\chi_{i,j}$ are (unnegated) Parikh constraints.
\end{definition}

\begin{theorem}\label{thm-tree-normal-form}
For every \thegoodlogic formula $\varphi$, it is possible to compute a formula $\varphi'$ in tree normal form that is equivalent to $\varphi$ over all labeled infinite binary trees.
\end{theorem}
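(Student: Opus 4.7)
The plan is to start from the general normal form $\exists X_1.\cdots\exists X_n. \bigvee_{i=1}^k \bigl( \varphi_i \wedge \bigwedge_{j=1}^{l_i} \chi_{i,j} \bigr)$ provided by the preceding theorem, where the $\varphi_i$ are in CMSO. Since the Parikh conjuncts $\chi_{i,j}$ already match the TNF shape, the whole task reduces to showing that each $\varphi_i$ is equivalent, over infinite binary trees, to some plain MSO formula; this will be achieved by eliminating all modulo and finiteness atoms occurring in $\varphi_i$ via local syntactic substitutions that respect arbitrary Boolean and quantifier contexts.

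First, I would replace every modulo atom $\cnts{X} \equiv_\pr{n} \pr{m}$ by the explicit MSO rewriting displayed just above the theorem. Its correctness follows from a bottom-up tree induction: the propagation clauses force $X_i(z)$ to hold precisely when the subtree rooted at $z$ contains $i\ (\mathrm{mod}\ n)$ elements of $X$, the fallback clause places $X$-avoiding nodes into $X_0$, and the root constraint then reads off the total $|X|$ modulo $n$. This induction hinges on $X$ being finite, whence the explicit $\mathrm{Fin}(X)$ conjunct at the head of the rewriting.

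Second, I would replace every finiteness atom $\mathrm{Fin}(X)$ by $\neg\varphi_\mathrm{inf}(X)$, with $\varphi_\mathrm{inf}$ as defined in \Cref{sec:SyntaxSemantics}. The equivalence relies on König's lemma applied to the subtree of ancestors of $X$-elements: $X$ is infinite iff this subtree is infinite, iff it admits an infinite branch, iff there is an infinite path all of whose nodes have a strict descendant in $X$. Performing modulo elimination first (which may itself produce new $\mathrm{Fin}$ atoms) followed by finiteness elimination terminates and preserves equivalence under all contexts, so the resulting $\varphi'_i$ is plain MSO; the auxiliary set quantifiers introduced by the substitutions remain local to $\varphi'_i$, and thus the outer TNF shape is preserved.

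The main obstacle I anticipate is the correctness proof for the modulo encoding. One has to argue that in any tree satisfying the rewriting, the witnessing partition $X_0,\ldots,X_{n-1}$ is forced to track the modular subtree counts of $X$, rather than being some spurious fixed point of the propagation rules. The argument needs the $\mathrm{Fin}(X)$ guard in order to invoke a well-founded induction anchored at the ``boundary'' of $X$'s finite support; without finiteness, infinite $X$-laden subtrees admit alternative partitions that still satisfy the propagation constraints. Handling this carefully, together with a routine check that the substitutions compose soundly under negations and nested quantifiers already present in the enclosing $\varphi_i$, constitutes the technical core of the proof.
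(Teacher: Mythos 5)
Your proposal is correct and follows essentially the same route as the paper: starting from GNF, eliminating modulo atoms via the displayed $X_0,\ldots,X_{n-1}$ subtree-counting encoding (guarded by $\mathrm{Fin}(X)$), and then eliminating finiteness atoms via the path-based characterization $\neg\varphi_\mathrm{inf}(X)$ justified by K\"onig's lemma. Your additional remarks on the ordering of the two eliminations and on why the $\mathrm{Fin}(X)$ guard is needed to rule out spurious fixed points of the propagation clauses are sound and in fact supply detail the paper leaves implicit.
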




\section{Parikh-Muller Tree Automata}

In this section, we introduce a novel type of automata, 
combining and generalizing Parikh tree automata and Muller tree automata. We prove that the tree languages recognized by this automaton type coincide with those definable by TNF formulae. Moreover, we show that the emptiness problem of this automaton model is decidable. In combination,  this yields us decidable satisfiability of \thegoodlogic over labeled infinite binary trees.

\paragraph*{Variable-adorned Trees, Semilinear Sets, and Extended Parikh Maps}

Given a finite set $\Varset\subseteq(\Vind \cup\Vset)$, we denote by $\Phi_\Varset$ the set of all variable assignments of variables from $\Varset$ to elements/subsets of $\{0,1\}^*$. The \emph{set of $\Varset$-models} of a formula $\varphi$ is the set
\(\L_\Varset(\varphi)\coloneqq\{(\xi,\nu)\mid \xi\in T^\omega_\Sigma,\nu\in\Phi_\Varset,\xi,\nu\models\varphi\}\)
and by $\L(\varphi)$ we mean $\L_{\mathrm{free}(\varphi)}(\varphi)$.
%
To represent $\Varset$-models, it is convenient to encode variable assignments $\nu\in\Phi_\Varset$ into the alphabet. For this, we let $\Sigma_\Varset=\Sigma\times 2^\Varset$ be a new alphabet and identify $\Sigma_\emptyset$ with $\Sigma$. We say that a tree $\xi\in T^\omega_{\Sigma_\Varset}$ is \emph{valid} (i.e., it encodes a variable assignment) if for each individual variable $x$ in $\Varset$ there is exactly one position in $\xi$ where $x$ occurs. As there is a bijection between $\To_\Sigma\times\Phi_\Varset$ and the set of all valid trees in $\To_{\Sigma_\Varset}$, we use these two views interchangeably.

\smallskip

A set $C\subseteq\Nat^s$, $s\geq 1$, is \emph{linear} if it is of the form
$C=\{\vec{v}_0 + \textstyle\sum_{i\in[l]}m_i \vec{v}_i \mid m_1,\ldots,m_l\in\Nat\}$
for some $l\in\Nat$ and vectors $\vec{v}_0,\ldots,\vec{v}_l\in\Nat^s$. Any finite union of linear sets is called \emph{semilinear}.

Given two vectors $\vec{v}=(v_1,\ldots,v_s)\in\Nat^s$ and $\vec{v}'=(v_1',\ldots,v'_{s'})\in\Nat^{s'}$, we define their \emph{concatenation} $\vec{v}\cdot\vec{v}'$ as the vector $(v_1,\ldots,v_s,v'_1,\ldots,v'_{s'})\in\Nat^{s+s'}$. This definition is lifted to sets by letting $C\cdot C'=\{\vec{v}\cdot\vec{v}'\mid\vec{v}\in C,\ \vec{v}'\in C'\}\subseteq \Nat^{s+s'}$ for $C\subseteq\Nat^{s},C'\subseteq\Nat^{s'}$\!.

\begin{lemma}[\cite{GinSpa64,GinSpa66}] \label{lemma-semilin-presburger}
The family of semilinear sets of $\Nat^s$ coincides with the family of Pres\-bur\-ger sets of $\Nat^s$ (i.e., sets of the form $\{(x_1,\ldots,x_s)\mid \varphi(x_1,\ldots,x_s)\}$ for a Pres\-bur\-ger formula $\varphi$). Semilinear sets are closed under union, intersection, complement, and concatenation.
\end{lemma}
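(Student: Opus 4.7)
The plan is to first establish the equivalence between semilinear sets and Presburger-definable sets, and then to read off the required closure properties. The easy inclusion, semilinear $\Rightarrow$ Presburger, is immediate: a single linear set $\{\vec{v}_0 + \sum_{i=1}^l m_i \vec{v}_i \mid m_1,\ldots,m_l \in \Nat\}$ is defined by the Presburger formula $\exists m_1 \cdots \exists m_l.\, \vec{x} = \vec{v}_0 + \sum_{i=1}^l m_i \vec{v}_i$, and a finite union of linear sets translates into the corresponding disjunction.

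For the converse, I would invoke the classical quantifier-elimination theorem for Presburger arithmetic in the language extended by congruence predicates $\equiv_n$. Every Presburger-definable subset of $\Nat^s$ thus reduces to a quantifier-free Boolean combination of atoms of the form $\sum_j a_j x_j \leq b$ and $\sum_j a_j x_j \equiv_n b$. Each such atom defines a semilinear set, which can be displayed explicitly by solving the associated Diophantine (in)equality over $\Nat$: one fixes a base solution as offset and adjoins a finite generating set of period vectors coming from the integer kernel (together with the vectors $n \vec{e}_j$ in the congruence case). The induction over the Boolean structure then reduces the whole problem to closure of the class of semilinear sets under union, intersection, and complement.

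Closure under union and under concatenation is immediate from the generator form: union is handled at the level of the outer disjunction, and for concatenation one simply concatenates all offsets and period vectors. The main obstacle, and the technical heart of the Ginsburg--Spanier argument, is closure under intersection and complement of individual linear sets. For intersection I would exploit that, given two linear descriptions, the condition $\vec{v}_0 + \sum_i m_i \vec{v}_i = \vec{w}_0 + \sum_j n_j \vec{w}_j$ amounts to a homogeneous system of linear Diophantine equations in the parameters $m_i, n_j$; a Dickson-type finiteness argument on the minimal non-negative solutions then yields a semilinear parameterisation of the intersection, and closure at the semilinear level follows by distributing unions over intersections. Complement is the most delicate step: it can be obtained by reducing $\Nat^s \setminus L$ for a single linear $L$ to a finite union of linear sets via a stratification into residue classes modulo the periods of $L$, and then combining with the intersection closure via De~Morgan. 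Once these three closures are in place, the Presburger-to-semilinear induction goes through and the lemma is complete.
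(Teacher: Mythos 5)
The paper offers no proof of this lemma at all---it is imported verbatim from Ginsburg and Spanier \cite{GinSpa64,GinSpa66}---so there is no in-paper argument to compare against; what follows assesses your sketch on its own terms. Your overall route is the classical one: the direction ``semilinear $\Rightarrow$ Presburger'' by existentially quantifying the coefficients $m_i$ is correct and complete as stated, the converse via quantifier elimination down to atoms $\sum_j a_j x_j \leq b$ and $\sum_j a_j x_j \equiv_n b$ is the standard strategy, and your treatment of union, concatenation (pad each period vector with zeros in the other block, rather than literally concatenating periods with periods), and intersection (Dickson's lemma applied to the minimal solutions of the---inhomogeneous, not homogeneous---Diophantine system in the parameters $m_i, n_j$) is sound.

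The one genuine gap is the complement step, which you yourself identify as the delicate one but then dispatch too quickly. For a linear set whose period vectors are linearly \emph{dependent}, membership is not determined by the residue class modulo the lattice spanned by the periods together with a cone condition: a point can lie in the correct coset and the correct cone and still fail to be a non-negative integer combination of the periods. The standard fix is a preliminary lemma (due to Eilenberg and Sch\"utzenberger) rewriting every linear set as a finite union of linear sets with linearly independent periods, after which the residue-class stratification does work; without that reduction your argument does not go through. Moreover, the complement closure can be obtained much more cheaply and you should notice this: after quantifier elimination, negation can be absorbed at the level of atoms (the negation of a linear inequality is another linear inequality, the negation of a congruence is a finite disjunction of congruences), so the Presburger-to-semilinear induction needs only union and intersection of semilinear sets. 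Once the equivalence is established this way, closure under complement is an immediate corollary---negate the defining Presburger formula and translate back---and the hardest combinatorial argument in your sketch becomes unnecessary.
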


Given an alphabet $\Sigma$ and some finite $D\subseteq\Nat^s$ for $s\geq 1$, our automaton model works with symbols from $\Sigma\times D$. Thus we use the \emph{projections} 
$\cdot_\Sigma:{\Sigma\times D}\to \Sigma$ with $(a,d)_\Sigma=a$ and
$\cdot_D:{\Sigma\times D}\to D$ with $(a,d)_D=d$, 
which we will also apply to finite and infinite trees, resulting in a pointwise substitution of labels.  
Moreover, the \emph{extended Parikh map} $\Psi\colon T_{\Sigma\times D}\to\Nat^s$ is defined for each finite, non-empty tree $\xi\in T_{\Sigma\times D}$ by
$\Psi(\xi)=\sum_{i\in\pos(\xi)}(\xi(i))_D\,.$ 

\paragraph*{Automaton Model}

We now formally introduce our notion of a \emph{Parikh-Muller Tree Automaton (PMTA)}, which recognizes infinite trees employing a Muller acceptance condition while also testing some finite initial tree part for an arithmetic property related to Parikh's commutative image~\cite{Par66}. This is implemented by utilizing a finite number of global counters, which are ``blindly'' increased throughout the run, but are read off only once a posteriori -- when it is verified whether the tuple of the final counter values belongs to a given semilinear set.

\begin{definition}[Parikh-Muller Tree Automaton] 
Let $\Sigma$ be an alphabet, let $s \in \mathbb{N}\setminus\{0\}$, let $D\subseteq\Nat^s$ be finite, and denote $(\Sigma\times D)\cup\Sigma$ by $\Xi$. A PMTA (of \emph{dimension} $s$) is a tuple $\A=(\Q,\Xi,q_I,\Delta,\F,C)$ where $\Q=Q_P\cup Q_\omega\cup\{q_I\}$ is a finite set of \emph{states} with $Q_P$,$Q_\omega$ disjoint and $q_I$ being the \emph{initial state}, $\Delta=\Delta_P\cup\Delta_\omega$ is the \emph{transition relation} with
 \[\Delta_P\subseteq (Q_P\cup\{q_I\})\times(\Sigma\times D)\times \Q\times\Q \quad\text{ and }\quad \Delta_\omega\subseteq (Q_\omega\cup\{q_I\})\times\Sigma\times Q_\omega\times Q_\omega,\]
 $\F\subseteq 2^{Q_\omega}$ is a set of \emph{final state sets}, and $C\subseteq\Nat^s$ is  a semilinear set named \emph{final constraint}.
\end{definition}

\begin{definition}[Semantics of PMTA]
A \emph{run} of $\A$ on a tree $\zeta\in T_\Xi^\omega$ is a tree $\kappa_\zeta\in T_Q^\omega$ whose root is labeled with $q_I$ 
and which respects $\Delta$ jointly with $\zeta$. By definition of $\Delta$, if a run exists, then 
$\zeta^{-1}(\Sigma\times D)$ is prefix-closed; we denote $\zeta_{|\zeta^{-1}(\Sigma\times D)}$ by $\zeta_\cnt$. 
A run $\kappa_\zeta$ is \emph{accepting} if
 \begin{enumerate}
     \item for each path $\pi$, we have $\inf(\kappa_\zeta(\pi))\in \F$, and
     \item if $\pos(\zeta_\cnt)\neq\emptyset$, then $\Psi(\zeta_\cnt)\in C$.
 \end{enumerate}
Note that, by the first condition, $\kappa_\zeta$ being accepting implies finiteness of $\zeta_\cnt$ and, thus, well-definedness of the sum in $\Psi(\zeta_\cnt)$.
 The set of all accepting runs of $\A$ on $\zeta$ will be denoted by $\Run_\A(\zeta)$. Then, the \emph{tree language of} $\A$, denoted by $\L(\A)$, is the set \[\L(\A)\coloneqq\{\xi\in T_{\Sigma}^\omega\mid \exists \zeta\in T_\Xi^\omega \text{ with } \Run_\A(\zeta)\neq\emptyset \text{ and } (\zeta)_\Sigma=\xi\}\,.\]
\end{definition}

We highlight that, by choosing $\Delta_P=\emptyset$, we reobtain the well-known concept of a \emph{Muller tree automaton (MTA)}. In this case, we can drop $\Q_P$, $D$, $\Delta_P$, and $C$ from $\A$’s specification without affecting its semantics. Thus,  we define an MTA $\A$ by the tuple $(\Q_\omega,\Sigma,q_I,\Delta_\omega,\F)$.
 
For alphabets $\Sigma, \Gamma$, a \emph{relabeling} (from $\Sigma$ to $\Gamma$) is a mapping $\tau\colon\Sigma\to\Pow(\Gamma)$. We extend it to a mapping $\tau\colon\To_\Sigma\to\Pow(\To_\Gamma)$ by letting $\xi'\in\tau(\xi)$ if and only if for each position $\rho\in\{0,1\}^*$, we have $\xi'(\rho)\in\tau(\xi(\rho))$. Note that the reverse $\tau^{-1}$ of a relabeling $\tau$ is again a relabeling. 

\begin{restatable}{proposition}{closurepmta}\label{closure-pmta}
The set of tree languages recognized by Parikh-Muller tree automata is closed under union, intersection, and relabeling.
\end{restatable}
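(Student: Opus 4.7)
The plan is to adapt standard automata-theoretic constructions to the PMTA setting, relying on \Cref{lemma-semilin-presburger} for closure of semilinear sets under union, intersection, and concatenation. A useful observation throughout is that, since $\F\subseteq 2^{Q_\omega}$, acceptance forces every path to eventually leave the $Q_P$-phase, so $\zeta_\cnt$ is necessarily finite and the Parikh image $\Psi(\zeta_\cnt)$ is always well defined for accepting runs.

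\emph{Relabeling} is the easiest case. Given $\tau\colon\Sigma\to\Pow(\Gamma)$ and a PMTA $\A=(\Q,\Xi,q_I,\Delta,\F,C)$ recognizing $L\subseteq\To_\Sigma$, I would keep $\Q$, $\F$, and $C$ and lift $\Delta$ symbol-by-symbol: every transition $(q,(a,d),q',q'')\in\Delta_P$ is replaced by the family $\{(q,(b,d),q',q'')\mid b\in\tau(a)\}$, and analogously for $\Delta_\omega$. The resulting PMTA recognizes $\tau(L)$ by nondeterministically guessing the $\Sigma$-preimage of each position.

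For \emph{union} of two PMTA $\A_i=(\Q^i,\Xi^i,q_I^i,\Delta^i,\F^i,C^i)$ of dimensions $s_i$ (for $i\in\{1,2\}$), I would take the disjoint union of the state sets together with a fresh common initial state $q_I^\cup$ from which the respective outgoing transitions of $q_I^1$ and $q_I^2$ are replicated. To combine the two final constraints into a single semilinear set, I pad the counter alphabet to dimension $s_1+s_2$, letting $\A_i$'s counters occupy its designated block of coordinates while the other block is constantly $\mathbf{0}$. Then $\F^\cup\coloneqq\F^1\cup\F^2$ and $C^\cup\coloneqq(C^1\cdot\{\mathbf{0}_{s_2}\})\cup(\{\mathbf{0}_{s_1}\}\cdot C^2)$, the latter being semilinear by \Cref{lemma-semilin-presburger}. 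Since the disjointness of the state spaces forces any accepting run to remain entirely inside one automaton's part, correctness is routine.

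The main obstacle is \emph{intersection}, since the two automata may switch from their $Q_P$- to their $Q_\omega$-phase at different positions within the same tree. I would use a product automaton with state set $\Q^1\times\Q^2$ and a shared fresh initial state, declaring a product state to lie in $Q_\omega^\times$ exactly when \emph{both} components are in $Q_\omega$, and in $Q_P^\times$ otherwise. The alphabet is $(\Sigma\times D^\times)\cup\Sigma$ with $D^\times\subseteq\Nat^{s_1+s_2}$, and transitions are defined component-wise: at a $Q_P^\times$-position, each component $i$ independently performs either a $\Delta_P^i$-step (contributing its counter tuple into its block of coordinates) or a $\Delta_\omega^i$-step (contributing $\mathbf{0}$ into its block, reflecting that $i$ has already switched); at $Q_\omega^\times$-positions, transitions form the pure product of $\Delta_\omega^1$ and $\Delta_\omega^2$. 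The Muller condition becomes $\F^\times\coloneqq\{F\subseteq Q_\omega^\times\mid\proj_1(F)\in\F^1\text{ and }\proj_2(F)\in\F^2\}$, which is correct because on any path the set of infinitely-occurring pairs projects onto the set of infinitely-occurring states of each component. The final constraint is $C^1\cdot C^2$, semilinear by \Cref{lemma-semilin-presburger}. The step to verify carefully is that the product's Parikh-phase subtree splits cleanly: by construction, $\Psi(\zeta_\cnt^\times)$ projected onto the first (resp.\ last) $s_1$ (resp.\ $s_2$) coordinates equals the Parikh image of the $i$-th component's own Parikh-phase subtree, because coordinates belonging to a component already in its $Q_\omega$-phase receive $\mathbf{0}$-increments. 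Hence, membership in $C^1\cdot C^2$ becomes equivalent to joint membership in $C^1$ and $C^2$, yielding $\L(\A^\times)=\L(\A^1)\cap\L(\A^2)$.
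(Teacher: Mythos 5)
Your \emph{intersection} and \emph{relabeling} constructions coincide with the paper's own: the paper likewise forms the Cartesian product, places a pair of states in the Muller part exactly when both components have switched, pads the counter tuple of an already-switched component with a zero block $(0)^{s_i}$, uses $\F^\times=\{F\mid \proj_1(F)\in\F_1 \text{ and } \proj_2(F)\in\F_2\}$ together with $C_1\cdot C_2$, and handles relabelings by lifting transitions symbol-by-symbol. For these two parts nothing further is needed.

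The \emph{union} construction, however, has a genuine gap. You combine the final constraints as $C^{\cup}=(C^1\cdot\{\mathbf{0}_{s_2}\})\cup(\{\mathbf{0}_{s_1}\}\cdot C^2)$ and appeal to the disjointness of the state spaces, but the final constraint is evaluated on the Parikh vector alone and cannot see which component a run went through. If a run uses $\A_1$'s transitions and all of its counter increments happen to be zero, then $\Psi(\zeta_\cnt)=\mathbf{0}_{s_1+s_2}$, and this vector lies in $\{\mathbf{0}_{s_1}\}\cdot C^2$ as soon as $\mathbf{0}_{s_2}\in C^2$ --- so the union automaton accepts even when $\mathbf{0}_{s_1}\notin C^1$, i.e., even when neither $\A_1$ nor $\A_2$ accepts the tree. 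Concretely: let $\A_1$ read only the letter $a$, force every run to start with a $\Delta_{P,1}$-transition carrying increment $(0)$, and set $C_1=\{(1)\}$, so that $\L(\A_1)=\emptyset$; let $\A_2$ read only the letter $b$ with $C_2=\{(0)\}$. Your union automaton accepts the all-$a$ tree via $\A_1$'s part with Parikh vector $(0,0)\in\{(0)\}\cdot C_2$, although that tree belongs to neither language. The paper avoids this by spending the extra coordinate on a \emph{tag} that is always nonzero: the root of the counting phase contributes $(1)$ or $(2)$ in the new coordinate according to which automaton is entered, and $C=C_1\cdot\{(1)\}\cup C_2\cdot\{(2)\}$, so the two disjuncts are disjoint on all reachable Parikh vectors and can never be confused. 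Replacing your zero-padding of the constraint by this tagging device repairs the argument; the rest of your union construction (disjoint states, fresh initial state replicating the outgoing transitions, $\F^{\cup}=\F_1\cup\F_2$) matches the paper.
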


\begin{proof}[Proof (sketch)]
As the proof techniques are rather standard and some of them were already presented in earlier work \cite{GuhJecLeh22}, we only sketch the main ideas here. Let $\A_1$ and $\A_2$ be PMTA. 

For the \emph{union}, we construct a PMTA 
that starts in a fresh initial state. From there, it can either enter the transitions of $\A_1$ or of $\A_2$; we keep apart the final constraints of $\A_1$ and $\A_2$ by using one additional dimension. The \emph{intersection} PMTA 
is constructed as the Cartesian product of $\A_1$ and $\A_2$; it uses the concatenation of final constraints of both given PMTA and, as $\A_1$ and $\A_2$ might not ``arithmetically test'' the same initial tree part, it can nondeterministically freeze parts of its counters on different paths. \emph{Relabeling} is trivial.
\end{proof}

\paragraph*{Correspondence of PMTA and \thegoodlogicheading}

We now provide a logical characterization of  PMTAs, by showing that a tree language is recognized by a PMTA precisely if it is the set of tree models of some \thegoodlogic sentence.
The ``only if'' part is established by \Cref{prop:rectodef} and  the ``if'' part by \Cref{prop:deftorec}.

\begin{restatable}{proposition}{rectodef}\label{prop:rectodef}
For any PMTA $\A$, there is an \thegoodlogic sentence $\varphi$ with $\L(\A){\,=\,}\L(\varphi)$.
\end{restatable}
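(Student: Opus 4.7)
The plan is to construct, for a given PMTA $\A=(\Q,\Xi,q_I,\Delta,\F,C)$, an \thegoodlogic sentence $\varphi_\A$ over the tree signature $\{\succ_0,\succ_1,\pr{P}_a\mid a\in\Sigma\}$ that existentially guesses both the accepting run of $\A$ and the $D$-adornment of the prefix of the input tree read via $\Delta_P$, and then verifies validity by an MSO formula together with a single Parikh constraint derived from $C$. Concretely, I introduce, for every state $q\in Q$, a set variable $X_q$ meant to capture the positions labeled $q$ in the run, and, for every counter vector $d\in D$, a set variable $Y_d$ meant to capture the positions adorned with $d$. All these set variables are existentially quantified at the outermost level and hence \emph{assertive}, which will be essential for staying within \thegoodlogic.

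The MSO part $\varphi_\mathrm{MSO}$ states that the $X_q$ form a partition of the tree, that only the root lies in $X_{q_I}$, that the $Y_d$ are pairwise disjoint, and that at every position exactly one transition of $\Delta_P\cup\Delta_\omega$ fires --- a finite big conjunction indexed over transitions that couples each position's $X_q$-membership, its input label $\pr{P}_a$, its (possible) $Y_d$-membership, and its children's $X_{q'}$-memberships (using the shortcut $X_{q'}(z.i)$ from \Cref{sec:SyntaxSemantics}). The Muller acceptance condition is then encoded by
$$\forall X.\,\varphi_\mathrm{path}(X)\Rightarrow\bigvee_{F\in\F}\Bigl(\bigwedge_{q\in F}\varphi^\cap_\mathrm{inf}(X,X_q)\wedge\bigwedge_{q\in Q\setminus F}\neg\varphi^\cap_\mathrm{inf}(X,X_q)\Bigr),$$
exploiting the shortcut $\varphi^\cap_\mathrm{inf}$ to capture ``$X_q$ occurs infinitely often on path $X$''. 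For the Parikh part, \Cref{lemma-semilin-presburger} furnishes a Presburger formula $\phi_C(v_1,\ldots,v_s)$ defining $C$; substituting $v_i\mapsto\sum_{d\in D}(d)_i\cdot\cnts{Y_d}$ turns $\phi_C$ into an \thegoodlogic formula $\varphi_C$ whose only set-dependent terms are the $\cnts{Y_d}$ (together with freshly existentially quantified number variables wherever $\phi_C$ itself uses quantifiers). The case $\pos(\zeta_\cnt)=\emptyset$ in the acceptance definition is handled by a separate disjunct $\bigwedge_{d\in D}Y_d=\emptyset$, and the overall sentence reads
$$\varphi_\A\;\coloneqq\;\exists X_{q_1}\cdots\exists X_{q_{|Q|}}.\,\exists Y_{d_1}\cdots\exists Y_{d_{|D|}}.\,\Bigl(\varphi_\mathrm{MSO}\wedge\Bigl(\textstyle\bigwedge_{d\in D}Y_d=\emptyset\;\vee\;\varphi_C\Bigr)\Bigr).$$

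The decisive point for \thegoodlogic compliance will be that every $Y_d$ occurring in the (generally non-simple) Presburger subformulae is assertive and hence not delicate; since no other set variable enters any Presburger atom, no delicate variables arise at all, and the restriction on predicate atoms is vacuously met. The finiteness of $\zeta_\cnt$ demanded in an accepting run need not be enforced explicitly: it follows from the Muller condition, since infinitely many visits of some path to the finite set $Q_P\cup\{q_I\}$ would force a state outside $Q_\omega$ into the path's $\inf$-set, contradicting $\inf\in\F\subseteq 2^{Q_\omega}$. The remaining correctness argument --- that $\mathfrak{A}_\xi\models\varphi_\A$ iff $\xi\in\L(\A)$ --- is then a straightforward bidirectional reading of the semantics, and I expect the main delicate aspect to lie precisely in the bookkeeping that keeps $X_q$ (used freely throughout the MSO subformulae, possibly under universal quantifiers) and $Y_d$ (carrying the Parikh load) in strictly separate syntactic roles, so that the compliance argument above actually goes through.
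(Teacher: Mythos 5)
Your construction is essentially the paper's own proof: the paper likewise existentially quantifies assertive set variables for the run states and for the counter increments (there indexed as $Z_i^d$ per counter/value rather than $Y_d$ per vector, which is an inessential bookkeeping difference), encodes the Muller condition via a universal path quantifier with $\varphi^\cap_\mathrm{inf}$, expresses $C$ through its linear-polynomial decomposition with existentially quantified number variables, and adds the same ``all counter sets empty'' disjunct for the $\pos(\zeta_\cnt)=\emptyset$ case. Your observations that assertiveness of the counter sets keeps the formula in \thegoodlogic and that finiteness of $\zeta_\cnt$ is already forced by the Muller condition also match the paper's reasoning.
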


\begin{proof}
Given a PMTA $\A=(\Q,\Xi,q_I,\Delta,\F,C)$, we adopt (and slightly simplify) the idea from \cite[Thm. 10]{KlaRue03} of how to encode counter values and the semilinear set $C$, and combine it with the usual construction to define the behavior of an MTA by means of an MSO formula: The existence of a run is defined by a sequence of existential set quantifiers representing the states of $\A$; one additional universal set quantifier ranging over paths is used to encode the Muller acceptance condition. Furthermore, we (outermost) existentially quantify over ``counter contributions'' using set quantifiers $Z_1^0,...\,,\! Z_1^K\!,\ldots,Z_s^0,...\,,\! Z_s^K$ (with $s$ being the number of counters and $K$ the greatest counter increment occurring in $\A$'s transitions) -- the presence of a variable $Z^{d_i}_i$ at a position indicates that $d_i$ has to be added to the $i$th counter to simulate the extended Parikh map. 
Then we enforce satisfaction of the final constraint $C$ by adding the conjunct $\varphi_C$ defined as follows:
By definition of $C$, there are $k,l\in \mathbb{N}\setminus \{0\}$ and linear polynomials $p_1,\ldots,p_k\colon\Nat^l\to\Nat^s$ such that $C$ is the union of the images of $p_1,\ldots,p_k$. Assume $p_g(m_1,\ldots,m_l)=\vec{v}_0+m_1\vec{v}_1+\ldots+m_l\vec{v}_l$
with $\vec{v}_j=(v_{j,1},\ldots,v_{j,s})$.
Then, using number variables $\nv{m}_1,\ldots,\nv{m}_l$, we encode $p_g$ by
$$\textstyle\varphi_{p_g}\coloneqq\exists \nv{m}_1\ldots \exists\nv{m}_l.\bigwedge_{i=1}^s \Bigl(\sum_{d=0}^K\pr{d}\cnts{Z_i^d}\tteqfin\pr{v}_{0,i}+\pr{v}_{1,i}\nv{m}_1+\ldots+\pr{v}_{l,i}\nv{m}_l\Bigr),$$
and let
$\textstyle\varphi_C\coloneqq\big(
\bigwedge_{i=1}^s
\bigwedge_{d=0}^K
\forall x. \neg Z_i^d(x)\big)\lor\varphi_{p_1}\lor\ldots\lor\varphi_{p_k}$. This finishes the construction of the overall sentence specifying $\L(\A)$, which can be easily shown to be in \thegoodlogic. 
\end{proof}

The other direction is proved by an induction on the structure of TNF formulae involving the closure properties of PMTA. The last piece that needs to be shown for this is the recognizability of the models of a Parikh constraint.

\begin{restatable}{lemma}{parikhtorec}\label{parikh-to-rec}
For each Parikh constraint $\chi$ there is a PMTA $\A$ with $\L(\A)=\L(\chi)$.
\end{restatable}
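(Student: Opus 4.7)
Write $\chi$ as $t_1 \ttleq t_2$ whose only set terms are the free set variables $\Varset = \{X_1, \ldots, X_n\}$; by the convention on encoded variable assignments, $\L(\chi)$ is naturally identified with a set of trees over $\Sigma_\Varset$. The plan is to case-split on subsets $I \subseteq [n]$ (the indices of those variables to be interpreted as \emph{infinite}), construct a PMTA $\A_I$ for each, and take $\A := \bigcup_I \A_I$, invoking union-closure from \Cref{closure-pmta}.

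For each $I \subseteq [n]$, I would substitute $\cnts{X_i} = \ttinfty$ for $i \in I$ into $\chi$ and simplify according to the arithmetic on $\Nat \cup \{\infty\}$ fixed in the Preliminaries. Depending on whether each side collapses to $\infty$ or stays a finite linear combination, the simplification yields $\true$, $\false$ (in which case $I$ is discarded), or a classical Presburger formula $\psi_I$ over $\{\cnts{X_i} : i \notin I\}$. By \Cref{lemma-semilin-presburger}, the set of satisfying tuples of $\psi_I$ (or all of $\Nat^m$, if the outcome was $\true$) is a semilinear set $C_I \subseteq \Nat^m$ with $m = |[n] \setminus I|$.

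Then I would build $\A_I$ as the intersection, via \Cref{closure-pmta}, of two PMTA over alphabet $\Sigma_\Varset$. The first, $\A_\mathrm{fin}^I$, has dimension $m$; fixing an enumeration $[n] \setminus I = \{i_1, \ldots, i_m\}$, I let $D = \{0,1\}^m$ and use a single counting state $q_P$ whose $\Delta_P$-transitions accept $((a, S), \vec{d})$ precisely when $\vec{d}_j = 1 \Leftrightarrow i_j \in S$, and which may nondeterministically transition children to a sole Muller state $q_\omega$; the $\Delta_\omega$-transitions on a label $(a, S)$ are defined only if $S \cap \{i_1, \ldots, i_m\} = \emptyset$; the family of final sets is $\F = \{\{q_\omega\}\}$; the final constraint is $C_I$. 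In any accepting run, this forces $\zeta_\cnt$ to contain every position labeled with any $X_{i_j}$, so counter $j$ ends up equal to $|X_{i_j}|$ and is tested against $C_I$. The second component, $\A_\infty^I$, is a plain MTA (i.e., a PMTA with $\Delta_P = \emptyset$) accepting those $\Sigma_\Varset$-trees in which each $X_i$ with $i \in I$ is infinite, a property expressible in plain MSO via $\varphi_\mathrm{inf}(X_i)$ and hence MTA-recognizable by Rabin's theorem.

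The main obstacle is aligning the finite counting region $\zeta_\cnt$ with the Muller-governed $\omega$-part: counters only increment inside $\zeta_\cnt$, so ``all elements of the finite $X_{i_j}$ are counted'' must be enforced by outright forbidding $X_{i_j}$-labels outside $\zeta_\cnt$. The PMTA's freedom to nondeterministically extend $\zeta_\cnt$ to any finite prefix makes such a choice possible precisely when each $X_{i_j}$ is finite, and the subsequent gluing with the infiniteness check is then handled uniformly by the intersection construction from \Cref{closure-pmta}, which already accommodates factors with differing counting regions (here, $\A_\infty^I$ has none).
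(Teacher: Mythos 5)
Your proposal is correct in substance but takes a genuinely different --- and considerably heavier --- route than the paper. The observation you miss is that a Parikh constraint is by definition a \emph{classical} Presburger atom $t_1 \ttleqfin t_2$, not a modern one $t_1 \ttleq t_2$: its semantics $\int{t_1} \leq \int{t_2} < \infty$, together with the fact that every occurring $\cnts{X_i}$ carries a positive coefficient, already forces \emph{all} occurring set variables to be finite. The paper therefore needs no case distinction on which variables are infinite: every variable occurrence must fit into some finite prefix-closed region, and a two-state PMTA (a counting initial state plus one Muller state $q_f$) with just \emph{two} counters --- incremented at each counted position by the aggregated weighted contributions $|\theta|_X$ and $|\theta|_Y$ of the left- and right-hand sides --- checks $\pr{c}+z_1\leq\pr{d}+z_2$ via the final constraint, while $\Delta_\omega$ forbids any variable occurrence outside $\zeta_\cnt$. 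Your construction instead performs a $2^n$-fold case split over the infinite variables, uses one counter per finite variable together with a semilinear set obtained from the simplified Presburger formula, and glues the pieces via the union/intersection closure of \cref{closure-pmta}. This is sound: for an actual $\ttleqfin$ atom only the case $I=\emptyset$ survives the simplification, and your $\A_{\mathrm{fin}}^{\emptyset}$ is essentially the paper's automaton with per-variable rather than per-side counters. What the extra machinery buys is a stronger statement the paper does not need, namely recognizability of modern atoms $t_1\ttleq t_2$ over set variables, where genuinely infinite sets may satisfy the constraint. Two details you should still pin down: (i) the final constraint is only tested when $\pos(\zeta_\cnt)\neq\emptyset$, so the initial state must have \emph{only} $\Delta_P$-transitions (as in the paper's construction) --- otherwise a run with empty counting region silently skips the test and over-accepts in the case where all finite variables are empty but the all-zero tuple is not in $C_I$; (ii) the PMTA dimension is required to be at least $1$, so the case $I=[n]$ needs a dummy counter.
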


\begin{proof} 
We assume w.l.o.g.~that $\chi$ is of the form $\pr{c} + \sum_{i\in[r]} \pr{c_i}\,\cnts{X_i} \ttleqfin \pr{d} + \sum_{j\in[k]} \pr{d_j}\,\cnts{Y_j}$ 
where all $X_i$ are pairwise distinct, and all $Y_j$ likewise.
Given a subset $\theta\subseteq\mathrm{free}(\chi)$, we denote by $|\theta|_X$ the number $\sum_{X_i\in\theta}c_i$ (and similar for $|\theta|_Y$). Then, assuming $\xi(\rho)=(\sigma^\xi_\rho,\theta^\xi_\rho)$, we get
\[\L(\chi)=\{\xi\in\To_{\Sigma_{\mathrm{free}(\chi)}}\mid  c+ \textstyle\sum_{{\rho\in\pos(\xi)}}|\theta^\xi_{\rho}|_X\leq d + \sum_{{\rho\in\pos(\xi)}} |\theta^\xi_{\rho}|_Y<\infty\}\]
and, by the condition ${<\,}\infty$, both sums can add up only finitely many non-zero elements. Therefore, $\xi\in\L(\chi)$ holds exactly if there is a non-empty, finite, prefix-closed $Z\subset\{0,1\}^*$ that comprises all positions holding variable assignments and for which $\xi|_Z$ satisfies $\chi$. This condition can be verified by a PMTA defined in the following.

Let $D=\{(i,j)\mid 0\leq i\leq \sum_{l\in[r]} c_l, 0\leq j\leq \sum_{l\in[k]} d_l\}$. We construct the PMTA $\A=(\{q_I,q_f\},\Xi,q_I,\Delta,\{\{q_f\}\},C)$ with $\Xi=(\Sigma_{\mathrm{free}(\chi)}\times D)\cup\Sigma_{\mathrm{free}(\chi)}$, $\Delta=\Delta_P\cup\Delta_\omega$ where
\begin{itemize}
    \item $\Delta_P=\{(q_I,\bigl((\sigma,\theta),(|\theta|_X,|\theta|_Y)\bigr),q',q')\mid (\sigma,\theta)\in\Sigma_{\mathrm{free}(\chi)}, q'\in\{q_I,q_f\}\}$ and
    \item $\Delta_\omega=\{(q_f,(\sigma,\emptyset),q_f,q_f)\mid\sigma\in\Sigma\}$
\end{itemize}
and $C=\{(z_1,z_2)\mid \pr{c}+z_1\ttleqfin \pr{d}+z_2\}$.\footnote{Note that by \cref{lemma-semilin-presburger} we can use this description for a semilinear set.}
Then, one can easily show that $\L(\chi)=\L(\A)$.
\end{proof}

\begin{restatable}{proposition}{deftorec}\label{prop:deftorec}
For every \thegoodlogic formula $\varphi$ there is a PMTA $\A$ with $\L(\A)=\L(\varphi)$.
\end{restatable}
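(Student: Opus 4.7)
The plan is to reduce to the tree normal form and then assemble PMTAs for the components using the closure properties already established. First I would apply \cref{thm-tree-normal-form} to replace $\varphi$ by an equivalent TNF formula
\[\varphi' \;=\; \exists X_1 \cdots \exists X_n.\, \bigvee_{i=1}^k \Bigl( \varphi_i \wedge \bigwedge_{j=1}^{l_i} \chi_{i,j} \Bigr),\]
where the $\varphi_i$ are plain MSO formulae and the $\chi_{i,j}$ are Parikh constraints, all interpreted over the variable-adorned alphabet $\Sigma_\Varset$ with $\Varset = \mathrm{free}(\varphi) \cup \{X_1, \ldots, X_n\}$. Since equivalence of $\varphi$ and $\varphi'$ over labeled infinite binary trees is already established, it suffices to build a PMTA for $\varphi'$.

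Next I would obtain PMTAs for the base-level constituents. For each MSO formula $\varphi_i$, the classical Rabin-style translation of MSO on the infinite binary tree yields a Muller tree automaton recognizing $\L_\Varset(\varphi_i)$; since every MTA is a PMTA (take $\Delta_P = \emptyset$, dimension one with $D = \{0\}$ and $C = \{0\}$), this produces a PMTA $\A_i$. For each Parikh constraint $\chi_{i,j}$, \cref{parikh-to-rec} directly provides a PMTA $\A_{i,j}$ recognizing $\L_{\mathrm{free}(\chi_{i,j})}(\chi_{i,j})$. Because the various base automata live over different variable-adorned alphabets, I would first lift each of them to operate on the common alphabet $\Sigma_\Varset$ via a trivial relabeling that ignores the symbols of variables the automaton does not mention (a legitimate operation by \cref{closure-pmta}). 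With alphabets aligned, I would then apply closure under intersection to obtain, for each $i$, a PMTA $\mathcal{B}_i$ for $\varphi_i \wedge \bigwedge_{j} \chi_{i,j}$, and closure under union to obtain a single PMTA $\mathcal{B}$ recognizing the full disjunction.

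Finally, to dispose of the outer existential set quantifiers $\exists X_1 \cdots \exists X_n$, I would apply closure under relabeling once more, using the projection $\tau \colon \Sigma_\Varset \to \Pow(\Sigma_{\mathrm{free}(\varphi)})$ that forgets the bits recording membership in the bound $X_i$. Existentially projecting away these variables on the automaton side matches existentially quantifying them on the formula side, so the resulting PMTA $\A$ satisfies $\L(\A) = \L(\varphi') = \L(\varphi)$.

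The main obstacle I anticipate is bookkeeping rather than the introduction of a new construction: in order to freely apply intersection, union, and the final projection, one has to carefully pad the alphabets of the base automata so that both the Parikh-style counting on the finite prefix $\zeta_\cnt$ and the Muller behavior on the remaining infinite tree are preserved under the lifting relabelings. Once the alphabets are uniformly set to $\Sigma_\Varset$, the entire assembly is just a direct invocation of \cref{closure-pmta} following the TNF structure, with \cref{parikh-to-rec} and Rabin's theorem supplying the atoms.
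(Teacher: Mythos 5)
Your proposal is correct and follows essentially the same route as the paper: reduce to TNF, obtain PMTAs for the MSO parts via Rabin's theorem (MTAs being PMTAs) and for the Parikh constraints via \cref{parikh-to-rec}, align the variable-adorned alphabets by relabeling (the paper handles this in a footnote), and then assemble via closure under intersection, union, and a final projection relabeling for the outer existential set quantifiers. The only cosmetic difference is that the paper phrases the assembly as an induction on the restricted structure of the TNF formula, whereas you describe the same steps bottom-up.
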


\begin{proof}
Let $\varphi$ be an \thegoodlogic formula. By \cref{thm-tree-normal-form}, we can assume that $\varphi$ is in tree normal form, i.e., of the form $\exists X_1.\cdots\exists X_n. \bigvee_{i=1}^k \big( \varphi_i \wedge \bigwedge_{j=1}^{l_i} \chi_{i,j} \big)$,
where $\varphi_i$ are plain MSO sentences and the $\chi_{i,j}$ are (unnegated) Parikh constraints. The proof of the statement is an induction on the (now restricted) structure of $\varphi$ using the well-known recognizability of MSO sentences \cite{Rab69}, \cref{parikh-to-rec}, and \cref{closure-pmta}. 
\end{proof}

The characterization obtained through \Cref{prop:rectodef} and \Cref{prop:deftorec} also provides an answer to the open problem posed by the authors in \cite{GroSie23,GroSabSie23a} to find a logical characterization for their \emph{reachability-regular Parikh automata} (RRPA) on words: in the usual way, our tree automata can simulate word automata (by embedding words in particular trees) and it is not too hard to see that the word version of PMTA is expressively equivalent to RRPA (details can be found in the appendix). Finally, by a routine inspection of the corresponding proofs we easily observe that our logical characterization also applies to the word setting.

\paragraph*{Deciding Emptiness of Parikh-Muller Tree Automata}


Our proof of decidability (and complexity) of the emptiness problem of PMTA rests on the respective results for the two components it combines, MTA and PTA. Thus, let us first recall the definition of Parikh tree automata \cite{KlaRue02,Kla04}, slightly adjusted to our setting.

\begin{definition}[Parikh tree automaton \cite{KlaRue03}]
Let $\Sigma$ be an alphabet, let $s\geq 1$, and let $D\subseteq\Nat^s$ be finite. A \emph{Parikh tree automaton (PTA)} is a tuple $\A=(Q,\Sigma\times D,\delta,q_I,F,C)$ where $Q$ is a finite set of \emph{states}, $\delta\subseteq Q\times(\Sigma\times D)\times Q\times Q$ is the \emph{transition relation}, $q_I$ is the \emph{initial state}, $F\subseteq Q$ is a set of \emph{final states}, and $C\subseteq\Nat^s$ is a semilinear set.\footnote{We note that the PTAs defined in \cite{KlaRue03} were total, i.e., $\delta$ is a function of type $Q\times(\Sigma\times D)\to\Pow(Q\times Q)$. Each PTA as defined here can be made total by using an additional sink state.}
Given a finite tree $\xi\in T_{\Sigma\times D}$, a \emph{run} of $\A$ on $\xi$ is a tree $\kappa_\xi\in T_Q$ with $\pos(\kappa_\xi)=\{\varepsilon\}\cup\{ui\mid u\in\pos(\xi),i\in\{0,1\}\}$ and $\kappa(\varepsilon)=q_I$ that respects the transition relation of $\A$. The run $\kappa_\xi$ is said to be \emph{accepting} if $\Psi(\xi)\in C$ and $\kappa_\xi(u)\in F$ for each leaf $u\in\pos(\kappa_\xi)\setminus\pos(\xi)$; we denote the set of all accepting runs of $\A$ on $\xi$ by $\Run_\A(\xi)$. Finally, the \emph{tree language of $\A$}, denoted $\L(\A)$, is the set
\[\L(\A)\coloneqq\{\xi\in T_\Sigma\mid \exists \xi'\in T_{\Sigma\times D}\text{ with } \Run_\A(\xi')\neq\emptyset\text{ and } (\xi')_\Sigma=\xi\}\,.\]
\end{definition}

It was shown in \cite{Kla04} that non-emptiness is decidable for PTA. The exact complexity can be obtained by adopting \cite[Proposition III.2.]{FigLib15a} to the tree setting. This ultimately enables us to establish the desired result for our automaton model.

\begin{restatable}[based on \cite{Kla04,FigLib15a}]{proposition}{ptaempt}\label{emptiness-pta}
Given a PTA $\A$, deciding $\L(\A)\neq\emptyset$ is \textsc{NP}-complete.
\end{restatable}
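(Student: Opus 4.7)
The plan is to establish membership in \textsc{NP} by reducing PTA emptiness to the satisfiability of existential Presburger arithmetic (which is \textsc{NP}-complete), and to obtain hardness either from the word case or a direct reduction from an \textsc{NP}-complete problem such as \textsc{SubsetSum}.

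For \textbf{membership in \textsc{NP}}, the key idea (adapted from the word case treated in \cite{FigLib15a}) is that whether $\L(\A)\neq\emptyset$ can be captured by a polynomial-size existential Presburger formula over integer-valued variables $x_\tau$ (one per transition $\tau\in\delta$), intuitively counting how often each transition is used in a successful accepting run $\kappa_\xi$ on some finite $\xi$. First, I would formulate a \emph{tree flow condition}: the multiset of transitions encoded by $(x_\tau)_{\tau\in\delta}$ arises from an actual tree run rooted at $q_I$ iff (i)~for every state $q\in Q$, the number of times $q$ occurs as a right-hand-side component of a chosen transition equals the number of times $q$ occurs as the left-hand-side, except that $q_I$ has an excess of one (an Eulerian-style balance condition), and (ii)~the set of states that are ``reached'' via chosen transitions from $q_I$ is connected and contains only states whose $\delta$-children are again reached—this reachability witness can be guessed together with the $x_\tau$ and verified by a polynomial-size formula specifying a spanning structure among the states with $x_\tau>0$. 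Next, the Parikh constraint is expressed as $\sum_{\tau=(q,(a,d),q',q'')} d \cdot x_\tau \in C$, which by \cref{lemma-semilin-presburger} translates to an existential Presburger condition of size polynomial in $|C|$'s semilinear representation. Finally, leaves with states outside $F$ are forbidden by requiring that for every non-final $q$, the inflow to $q$ is matched by outflow through some outgoing transition. The conjunction of all these conditions is a single existential Presburger formula of polynomial size in~$|\A|$, so satisfiability is in \textsc{NP}.

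For \textbf{\textsc{NP}-hardness}, the simplest route is to note that PTA on words (i.e., PTA whose state space enforces unary trees) already have \textsc{NP}-hard emptiness by \cite{FigLib15a}, and such automata embed straightforwardly into PTA by making one of the two children trivial (accept everything and contribute nothing to the counters). Alternatively, a direct reduction from \textsc{SubsetSum} fits naturally: given positive integers $a_1,\ldots,a_n,t$, build a PTA with dimension $1$, a final constraint $C=\{t\}$, and for each $i$ a transition reading $(\sigma,a_i)$ or $(\sigma,0)$, so that reachability of an accepting tree encodes the choice of a subset summing to $t$.

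The \textbf{main obstacle} is establishing the correctness and polynomial size of the flow-plus-reachability encoding for the tree case—in particular, proving that any non-empty accepting run can be witnessed by a transition multiplicity vector admitting a small certificate of $q_I$-reachability, rather than requiring an a priori unbounded tree shape. This is handled by standard small-model arguments: if the system of linear equations arising from the flow condition together with $\sum d\cdot x_\tau \in C$ has a solution, then it has one of bit-length polynomial in $|\A|$, which justifies the nondeterministic guess. The remaining steps (the \textsc{NP}-completeness of existential Presburger satisfiability and the translation of semilinear membership into existential Presburger) are entirely standard.
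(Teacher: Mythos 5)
Your proposal is correct in substance and lands on the same ultimate target as the paper --- reduction to satisfiability of existential Presburger arithmetic --- but it takes a different route to get there. The paper first strips the semilinear constraint to obtain a plain tree automaton $\A'$, observes that $\L(\A)\neq\emptyset$ iff $\Psi(\L(\A'))\cap C\neq\emptyset$, flattens $\A'$ into a context-free grammar $G'$ (via rules $q\to q_1\,d\,q_2$ for transitions and $q\to\varepsilon$ for final states, following \cite{Kla04}), and then invokes the Verma--Seidl--Schwentick theorem \cite{VerSeiSch05} as a black box to obtain, in polynomial time, an existential Presburger formula for $\Psi(\L(G'))$; the formula for $C$ and the \textsc{NP} bound for existential Presburger satisfiability finish the job. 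You instead re-derive the content of that black box by hand: transition-multiplicity variables $x_\tau$, an Eulerian-style flow-balance condition with an excess of one at $q_I$ and leaf-absorption only at final states, plus a guessed reachability certificate ensuring connectedness to $q_I$. This is exactly the mechanism inside the Verma--Seidl--Schwentick construction, so your encoding buys self-containedness at the cost of having to prove the realizability direction yourself. On that point your write-up is slightly off: the genuine obstacle is not a small-model bound (that is only needed for the \textsc{NP} membership of existential Presburger satisfiability, which you already cite as standard), but the \emph{completeness} of the flow-plus-connectivity characterization --- showing that every balanced, connected multiplicity vector is realized by an actual run tree, which requires an Euler-tour/surgery argument rather than a bound on solution size. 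That argument is standard but must be supplied; the paper's CFG detour obtains it for free from \cite{VerSeiSch05}. Your hardness argument (inherited from the word case, or directly from \textsc{SubsetSum} with binary-encoded increments) matches the paper's one-line remark and is fine.
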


\vspace{-0.8em}

\begin{restatable}{theorem}{pmtaempt}\label{pmtaempt}
Given a PMTA $\A$, deciding $\L(\A)\neq\emptyset$ is \textsc{PSpace}-complete.
\end{restatable}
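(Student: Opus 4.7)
The plan is to exploit the fact that any accepting run $\kappa_\zeta$ of a PMTA $\A = (\Q, \Xi, q_I, \Delta, \F, C)$ decomposes into a finite Parikh-labeled prefix on $\pos(\zeta_\cnt)$---where $\Delta_P$-transitions fire and run states lie in $Q_P \cup \{q_I\}$---and infinite Muller continuations rooted at its fringe, where $\Delta_\omega$-transitions fire and states lie in $Q_\omega$. This structural decomposition will let me reduce PMTA emptiness to a combination of Muller tree automaton emptiness (classically \textsc{PSpace}-complete) and Parikh tree automaton emptiness (\textsc{NP}-complete by \cref{emptiness-pta}), both falling within \textsc{PSpace}.

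For the \textbf{upper bound}, I would first compute the set $G \subseteq Q_\omega \cup \{q_I\}$ of \emph{Muller-viable} states, namely those $q$ for which the MTA $\A_q \coloneqq (Q_\omega \cup \{q_I\}, \Sigma, q, \Delta_\omega, \F)$ has a nonempty language. Since MTA emptiness is \textsc{PSpace}-decidable, iterating this test over the polynomially many candidate states yields $G$ within \textsc{PSpace}. If $q_I \in G$, then $\A$ has an accepting run with empty Parikh prefix, so I may report nonemptiness. Otherwise, I would build the PTA $\A' \coloneqq (\Q, \Sigma \times D, q_I, \Delta_P, G \cap Q_\omega, C)$ and invoke the \textsc{NP} procedure of \cref{emptiness-pta} to decide $\L(\A') \neq \emptyset$. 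Correctness rests on a straightforward grafting argument: any accepting PTA run on some finite $\xi'$ extends, by plugging in accepting $\A_q$-runs at each fringe leaf in state $q \in G$, to an accepting PMTA run on a suitable $\zeta \in T_\Xi^\omega$; conversely, the Parikh part of any accepting PMTA run with $\pos(\zeta_\cnt) \neq \emptyset$ directly yields an accepting PTA run on $\zeta_\cnt$. The overall procedure combines a \textsc{PSpace} precomputation with an \textsc{NP}-bounded test, hence runs in \textsc{PSpace}.

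For the \textbf{lower bound}, I would invoke the observation already made in the excerpt that every MTA is syntactically a PMTA (take $\Delta_P = \emptyset$, with any trivial $D$ and $C$) recognizing the same language. Since emptiness of nondeterministic Muller tree automata on infinite binary trees is classically \textsc{PSpace}-hard, this hardness transfers verbatim to PMTA emptiness.

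The main technical point to verify is the soundness of the \emph{pointwise} (per-state) notion of viability: when grafting independently chosen accepting Muller subtrees beneath distinct fringe leaves of the Parikh prefix, the Muller condition must still hold on every infinite path of the combined tree. This is immediate since every such path, after a finite traversal through the Parikh prefix, enters and remains within exactly one grafted subtree, whose witness enforces the Muller condition on its infinite suffix. A minor subtlety concerns the dual role of $q_I$, which may fire either a $\Delta_P$- or a $\Delta_\omega$-transition at the root; the two corresponding cases are covered by the PTA check and by the initial $q_I \in G$ check, respectively.
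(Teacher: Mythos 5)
Your proposal is correct and follows essentially the same route as the paper: it decomposes an accepting run into a finite Parikh prefix and grafted Muller continuations, precomputes the Muller-viable states in \textsc{PSpace} (your $G\cap Q_\omega$ is exactly the paper's $F_P$), reduces to the \textsc{NP} PTA emptiness test of \cref{emptiness-pta}, and obtains hardness from MTA emptiness as a special case. The paper phrases the combination as the disjunction $\L(\A)\neq\emptyset$ iff $\L(\A_{q_I})\neq\emptyset$ or $\L(\A_P)\neq\emptyset$, which is equivalent to your two-branch procedure.
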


\begin{proof}[Proof (sketch)]
Let $\A=(\Q,\Xi,q_{I},\Delta,\F,C)$ be a PMTA with $\Q= Q_{P}\cup Q_{\omega}\cup\{q_I\}$, $\Xi=(\Sigma\times D)\cup\Sigma$, and $\Delta=\Delta_{P}\cup \Delta_{\omega}$. We observe that each tree in the language of $\A$ can be seen as some finite tree over $\Sigma\times D$ (on which the Parikh constraint is tested), 
having infinite trees from $T_\Sigma$ attached to all its leafs. This allows us to reduce PMTA non-emptiness testing to deciding non-emptiness of Muller tree automata and Parikh tree automata.
To this end, consider 
\begin{itemize}
    \item the Muller tree automaton $\A_{q_I}=(Q_\omega\cup\{q_I\},\Sigma,q_I,\Delta_\omega,\F)$, 
    \item the Muller tree automata $\A_q=(Q_\omega,\Sigma,q,\Delta_\omega,\F)$ for all $q\in Q_\omega$, and
    \item the Parikh tree automaton $\A_P=(\Q,\Sigma\times D\!,q_I,\Delta_P,F_P,C)$ with $F_P=\{q{\,\in\,} Q_\omega \,|\, \L(\A_q){\,\neq\,}\emptyset\}$.
\end{itemize}
As deciding $\L(\A_q)\neq\emptyset$ is \textsc{PSpace}-complete \cite{Rab69,HunDaw05}, $\A_P$ can be constructed in \textsc{PSpace} and, by \cref{emptiness-pta}, its  non-emptiness can be decided in \textsc{NP}. Thus, the overall \textsc{PSpace} complexity follows from the observation that
$\L(\A)\neq\emptyset \quad \text{iff} \quad \L(\A_{q_I})\neq\emptyset\ \text{or}\ \L(\A_P)\neq\emptyset.$
\end{proof}

\begin{corollary}\label{cor:treedecidable}
Satisfiability of \thegoodlogic over labeled infinite binary trees is decidable.
\end{corollary}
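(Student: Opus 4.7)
The plan is to chain together the three pivotal results established in the preceding sections into a short effective reduction. Given an \thegoodlogic sentence $\varphi$ over the tree signature $\{\succ_0,\succ_1,\pr{P}_a\mid a\in\Sigma\}$, satisfiability over labeled infinite binary trees amounts to deciding whether $\L(\varphi)$, viewed as a subset of $T^\omega_\Sigma$, is non-empty.

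First, I would invoke \Cref{thm-tree-normal-form} to compute from $\varphi$ an equivalent TNF sentence $\varphi'$ that holds on exactly the same labeled infinite binary trees. Next, I would apply the previously established proposition producing a PMTA from any \thegoodlogic formula: the construction proceeds by induction on the TNF decomposition, employing the classical MSO-to-MTA translation \cite{Rab69} for the plain MSO components $\varphi_i$, \Cref{parikh-to-rec} for each Parikh constraint $\chi_{i,j}$, and the closure properties of \Cref{closure-pmta} to realize the finite conjunctions, the top-level disjunction, and the outer existential set quantifiers (the latter via the relabeling that projects away the encoded variable assignments). This yields a PMTA $\A$ with $\L(\A)=\L(\varphi')=\L(\varphi)$; crucially, every step of this translation is algorithmic.

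Finally, I would observe that $\varphi$ is satisfiable over labeled infinite binary trees if and only if $\L(\A)\neq\emptyset$, and the latter is decidable (in fact \textsc{PSpace}) by the emptiness theorem for PMTA established just above. Hence satisfiability of \thegoodlogic over this class reduces effectively to PMTA non-emptiness, and the corollary follows.

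There is no genuine obstacle beyond verifying that each link in the chain is computable and preserves the set of tree models; the heavy lifting has already been done in the normalization procedure, the inductive automaton construction, and the emptiness proof. The only minor subtlety worth stating is that, since $\varphi$ is a \emph{sentence}, no variable assignment needs to be witnessed and the correspondence $\L(\A)=\L(\varphi)$ directly captures satisfiability. A complexity remark could be added in passing: the overall procedure is dominated by the (potentially non-elementary) normalization into TNF, while the concluding PMTA emptiness check adds only a \textsc{PSpace} upper bound on top of that.
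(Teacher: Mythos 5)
Your proposal is correct and follows exactly the paper's intended argument: the corollary is an immediate consequence of \Cref{thm-tree-normal-form}, the proposition translating \thegoodlogic formulae into PMTA (built inductively from \cite{Rab69}, \Cref{parikh-to-rec}, and \Cref{closure-pmta}), and the \textsc{PSpace} emptiness theorem for PMTA. No gaps; your remark that the complexity is dominated by the non-elementary normalization matches the paper's own assessment in the conclusion.
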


\section{Decidability over Tree-Interpretable Classes of Structures}\label{sec:MSOInter}

Finally, we lift the obtained decidability result for labeled trees to much more general classes of structures, leveraging the well-known technique of \emph{MSO-interpretations} 
(also referred to as MSO-transductions or MSO-definable functions in the literature \cite{ArnborgLS88,Courcelle91a,Engelfriet90,Courcelle94,CourEngBook}). 

\begin{definition}[MSO-Interpretation]
Given two signatures \,$\Sig$ and \,$\Sig'$\!, an \emph{MSO-interpretation} is a sequence
$
\mathcal{I} = ( 
\varphi_\mathrm{Dom}(x), 
(\varphi_\pr{c}(x))_{\pr{c}\in \SigC}, 
(\varphi_\pr{Q}(x_1,\ldots,x_{ar(\pr{Q})}))_{\pr{Q}\in \SigP}
)
$
of MSO-formulae over 
$\Sig'$ (with free variables as indicated). We identify $\mathcal{I}$ with the partial function satisfying $\mathcal{I}(\mathfrak{A}) = \mathfrak{B}$ for an $\Sig'$-structure $\mathfrak{A}$ and an $\Sig$-structure $\mathfrak{B}$
if
$\{ a \in A \mid \mathfrak{A},\{x \mapsto a\} \models \varphi_\mathrm{Dom}(x) \} = B$ as well as 
$\{ a \in B \mid \mathfrak{A},\{x \mapsto a\} \models \varphi_{\pr{c}}(x) \} = \{\pr{c}^\mathfrak{B}\}$ for every $\pr{c} \in \SigC$, and, for every $\pr{Q} \in \SigP$, we have
$\pr{Q}^\mathfrak{B} = \{ (a_1,\ldots,a_{ar(\pr{Q})}) \in B^{ar(\pr{Q})} \mid \mathfrak{A},\{x_i \mapsto a_i\}_{1\leq i \leq ar(\pr{Q})} 
\models \varphi_{\pr{Q}}(x_1,\ldots,x_{ar(\pr{Q})}) \}$.
For a class $\mathscr{S}$ of $\Sig'$-structures, let $\mathcal{I}(\mathscr{S}) \coloneqq \{\mathfrak{B} \mid \mathcal{I}(\mathfrak{A})=\mathfrak{B}, \mathfrak{A}\in \mathscr{S}\}$.
A class $\mathscr{T}$\! of $\Sig$-structures is \emph{tree-inter-\\pretable}, if it coincides with $\mathcal{I}(T^\omega_\Sigma)$ for some $\Sigma$ and corresponding MSO-interpretation $\mathcal{I}$. 
\end{definition}

The key insight for our result is that the well-known rewritability of MSO formulae under MSO-interpretations can be lifted to \thegoodlogic without much effort. 


\begin{restatable}{lemma}{msointerlemma}
\label{lem:msointer}%
Let $\mathcal{I}$ be an MSO-interpretation. Then, for every \thegoodlogic sentence $\varphi$ over $\Sig$ one can compute an
\thegoodlogic sentence $\varphi^\mathcal{I}$ over $\Sig'$ satisfying 
$ \mathfrak{A} \models \varphi^\mathcal{I} \Longleftrightarrow \mathfrak{B} \models \varphi$
for every $\Sig'$-structure $\mathfrak{A}$ and $\Sig$-structure $\mathfrak{B}$ with $\mathcal{I}(\mathfrak{A}) \cong \mathfrak{B}$.
\end{restatable}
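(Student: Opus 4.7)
The plan is to first invoke the GNF transformation from the previous section to bring $\varphi$ into the form $\exists X_1.\cdots\exists X_n.\,\bigvee_{i=1}^k \bigl(\varphi_i \wedge \bigwedge_{j=1}^{l_i} \chi_{i,j}\bigr)$, where the $\varphi_i$ are CMSO formulas and the $\chi_{i,j}$ are Parikh constraints referring only to the outer assertive set variables $X_1,\ldots,X_n$. This preliminary step is strategically critical: in GNF no variable is delicate (every non-assertive set variable inside a Presburger atom has been eliminated), so delicacy propagation cannot arise, which is precisely what allows the subsequent syntactic rewriting to remain within \thegoodlogic.

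Next, I would define $\varphi^\mathcal{I}$ by the standard syntactic MSO-interpretation rewriting, suitably adapted. First, every constant $\pr{a}\in\SigC$ is eliminated by replacing its occurrences with a fresh individual variable $x_{\pr{a}}$ that is existentially quantified at the top and guarded by $\varphi_{\mathrm{Dom}}(x_{\pr{a}})\wedge\varphi_{\pr{a}}(x_{\pr{a}})$. Then, within each CMSO subformula $\varphi_i$, every predicate atom $\pr{Q}(\iota_1,\ldots,\iota_m)$ is replaced by $\varphi_{\pr{Q}}(\iota_1,\ldots,\iota_m)$; every set atom $\pr{P}(\iota)$ by $\varphi_{\pr{P}}(\iota)$; every individual quantification $\exists x.\,\psi$ by $\exists x.\,\varphi_{\mathrm{Dom}}(x)\wedge\psi^\mathcal{I}$ (dually $\forall x.\,\varphi_{\mathrm{Dom}}(x)\Rightarrow\psi^\mathcal{I}$); and every set quantification $\exists X.\,\psi$ by $\exists X.\,(\forall y.\,X(y)\Rightarrow\varphi_{\mathrm{Dom}}(y))\wedge\psi^\mathcal{I}$ (dually for $\forall X$). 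The finiteness and modulo atoms $\mathrm{Fin}(X)$ and $\cnts{X}\equiv_\pr{n}\pr{m}$ remain unchanged. Finally, the Parikh conjuncts $\chi_{i,j}$ are left untouched, while the outer existentials are guarded, yielding $\exists X_1.\cdots\exists X_n.\,\bigl(\bigwedge_{l=1}^n \forall y.\,X_l(y)\Rightarrow\varphi_{\mathrm{Dom}}(y)\bigr)\wedge\bigvee_{i=1}^k\bigl(\varphi_i^\mathcal{I}\wedge \bigwedge_j \chi_{i,j}\bigr)$.

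Correctness then follows by a routine induction on formula structure: the MSO-atomic and Boolean cases reduce to the classical MSO-interpretation argument, whereas the cardinality-sensitive cases exploit the fact that, under the isomorphism witnessing $\mathcal{I}(\mathfrak{A})\cong\mathfrak{B}$, the elements of $\mathfrak{B}$ stand in bijection with the witnesses of $\varphi_{\mathrm{Dom}}$ in $A$. Hence, for any set $X$ guaranteed (by the above guards) to be a subset of the domain, its cardinality computed in $\mathfrak{A}$ matches its cardinality computed in $\mathfrak{B}$; consequently all Parikh constraints, modulo atoms, and finiteness atoms evaluate identically in the two structures.

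The last step is to verify that $\varphi^\mathcal{I}$ is indeed in \thegoodlogic. The translation preserves the GNF skeleton (outer existentials followed by a disjunction of CMSO-with-Parikh conjuncts); the Parikh constraints still refer only to the outer assertive $X_l$; and the rewriting introduces no new Presburger, modulo, or finiteness atoms whatsoever. Consequently, no variable of $\varphi^\mathcal{I}$ becomes delicate and the at-most-one-delicate-per-predicate-atom restriction holds trivially. The \emph{principal obstacle} to a more direct proof (without prior GNF-normalization) is precisely this delicacy constraint: substituting a delicate variable $x$ into $\varphi_{\pr{Q}}$ would, via the delicacy-propagation rule, force all non-assertive variables of atoms in $\varphi_{\pr{Q}}$ touching $x$ to become delicate, whose delicacy would in turn cascade through further atoms of $\varphi_{\pr{Q}}$, eventually producing predicate atoms harbouring several delicate variables. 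Routing the rewriting through GNF neatly sidesteps this cascade.
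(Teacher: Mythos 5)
Your proposal is correct and takes essentially the same route as the paper's proof: first normalize $\varphi$ to GNF, then apply the standard relativization-based MSO-interpretation rewriting to the CMSO parts while leaving the Parikh constraints untouched, and observe that membership in \thegoodlogic is preserved because the Parikh constraints mention only the outer assertive set variables. The only immaterial differences are that the paper relativizes quantifiers via an explicitly defined set variable $Z_\mathrm{Dom}$ (rather than inline $\varphi_\mathrm{Dom}$ guards) and handles constants through a biconditional pinning $\varphi_{\pr{a}}$ to a singleton rather than through fresh existentially quantified witnesses.
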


This insight can be used to show that decidability is propagated through MSO-interpreta\-tions, and thus can be guaranteed for all tree-interpretable classes, thanks to \Cref{cor:treedecidable}.

\begin{restatable}{theorem}{msointerthm}
\label{thm:msointer}%
Let $\mathscr{S}$ be a class of structures over which satisfiability of \thegoodlogic is decidable, let $\mathcal{I}$ be an MSO-interpretation. Then satisfiability of \thegoodlogic over $\mathcal{I}(\mathscr{S})$ is decidable as well. 
In particular, \thegoodlogic is decidable over any tree-interpretable class. 
\end{restatable}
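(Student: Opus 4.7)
The plan is to derive the theorem as a short corollary of \Cref{lem:msointer}. For every \thegoodlogic sentence $\varphi$ over $\Sig$, that lemma computes an \thegoodlogic sentence $\varphi^\mathcal{I}$ over $\Sig'$ with $\mathfrak{A} \models \varphi^\mathcal{I}$ if and only if $\mathcal{I}(\mathfrak{A}) \models \varphi$ (for every $\Sig'$-structure $\mathfrak{A}$). This yields the equivalences
\[
\varphi \text{ is satisfiable over } \mathcal{I}(\mathscr{S}) \iff \exists \mathfrak{A}{\in}\mathscr{S}\colon \mathcal{I}(\mathfrak{A}) \models \varphi \iff \varphi^\mathcal{I} \text{ is satisfiable over } \mathscr{S},
\]
so the decision procedure is: compute $\varphi^\mathcal{I}$, then invoke the decision procedure for $\mathscr{S}$ assumed to exist. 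For the ``in particular'' clause, I would instantiate $\mathscr{S}=T^\omega_\Sigma$: tree-interpretable classes are by definition of the form $\mathcal{I}(T^\omega_\Sigma)$, and satisfiability over $T^\omega_\Sigma$ is decidable by \Cref{cor:treedecidable}.

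The substantive content lives in \Cref{lem:msointer}, which I would prove by structural induction on $\varphi$, using the classical MSO-interpretation rewriting extended to the \thegoodlogic vocabulary: each predicate atom $\pr{Q}(\iota_1,\ldots,\iota_n)$ is replaced by $\varphi_\pr{Q}(\iota_1,\ldots,\iota_n)$; each unary predicate $\pr{P}$ occurring inside a set term is replaced by the MSO formula $\varphi_\pr{P}$; each constant $\pr{a}$ is eliminated by pulling an existential quantifier $\exists x_\pr{a}.\varphi_\pr{a}(x_\pr{a})\wedge\ldots$ to the top (using uniqueness guaranteed by the MSO-interpretation); every first-order quantifier is relativized to $\varphi_\mathrm{Dom}$; and every set quantifier is restricted to subsets of the $\varphi_\mathrm{Dom}$-defined domain. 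For the Presburger side, every set term $S$ occurring in a counting expression $\cnts{S}$ is interpreted as ``$S$ intersected with the $\varphi_\mathrm{Dom}$-defined domain'', which can be arranged by introducing, at the very top, a fresh assertive set variable $D$ with $\forall x.(D(x)\Leftrightarrow\varphi_\mathrm{Dom}(x))$ and passing to the appropriate intersections.

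The main obstacle is to verify that the translation keeps the result within \thegoodlogic, i.e., preserves the delicate-variable discipline. The critical point is that rewriting a predicate atom $\pr{Q}(\iota_1,\ldots,\iota_n)$ through the MSO formula $\varphi_\pr{Q}$ introduces only fresh, locally quantified individual and set variables which appear exclusively in pure MSO subformulae and thus never become delicate, while the original $\iota_1,\ldots,\iota_n$ — at most one of which was delicate in $\varphi$ — still occur together only in sub-atoms involving themselves and the fresh non-delicate locals, maintaining the at-most-one-delicate-variable condition. The relativization of individual quantifiers to $\varphi_\mathrm{Dom}$ involves only a pure MSO formula and is therefore harmless, and the treatment of set counts via the auxiliary assertive variable $D$ keeps Presburger atoms of the expected shape. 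Once these points are carefully checked, the chain of equivalences above closes the argument.
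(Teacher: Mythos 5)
Your overall strategy is the same as the paper's: reduce the theorem to \Cref{lem:msointer} and feed the translated sentence to the decision procedure assumed for $\mathscr{S}$. However, your central chain of equivalences has a genuine gap stemming from the fact that $\mathcal{I}$ is a \emph{partial} function. \Cref{lem:msointer} does not assert $\mathfrak{A}\models\varphi^\mathcal{I}\Leftrightarrow\mathcal{I}(\mathfrak{A})\models\varphi$ for \emph{every} $\Sig'$-structure $\mathfrak{A}$, as you quote it; it asserts the equivalence only for those $\mathfrak{A}$ admitting a $\mathfrak{B}$ with $\mathcal{I}(\mathfrak{A})\cong\mathfrak{B}$, i.e., only where $\mathcal{I}(\mathfrak{A})$ is defined (which requires $\varphi_\mathrm{Dom}$ to carve out a nonempty set and each $\varphi_\pr{a}$ to define a singleton). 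Consequently the direction ``$\varphi^\mathcal{I}$ satisfiable over $\mathscr{S}$ $\Rightarrow$ $\varphi$ satisfiable over $\mathcal{I}(\mathscr{S})$'' fails as stated: a witness $\mathfrak{A}\in\mathscr{S}$ with $\mathfrak{A}\models\varphi^\mathcal{I}$ might be a structure on which $\mathcal{I}$ is undefined, in which case the lemma gives no guarantee about $\varphi^\mathcal{I}$ and $\mathfrak{A}$ contributes no model of $\varphi$ to $\mathcal{I}(\mathscr{S})$. The paper closes exactly this hole by testing satisfiability of $\varphi^\mathrm{def}_{\mathcal{I}}\wedge\varphi^\mathcal{I}$ over $\mathscr{S}$, where $\varphi^\mathrm{def}_{\mathcal{I}}$ is a plain MSO (hence \thegoodlogic) sentence expressing that $\mathcal{I}$ is defined on the structure at hand. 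You need this extra conjunct, or alternatively an argument that your particular $\varphi^\mathcal{I}$ is unsatisfiable on structures where $\mathcal{I}$ is undefined---which your construction does not provide.

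A secondary remark: you propose to establish \Cref{lem:msointer} by structural induction on arbitrary \thegoodlogic formulae, which forces you into the preservation argument for the delicate-variable discipline that you yourself flag as the main obstacle. The paper instead first brings $\varphi$ into GNF, so that all set variables occurring in Parikh constraints are already assertive (bound by the outermost existential block) and the relativization machinery only touches the CMSO parts; membership of $\varphi^\mathcal{I}$ in \thegoodlogic is then essentially immediate. Your route is workable in principle, but the preservation check is precisely where subtle errors would hide, and the GNF detour removes that risk.
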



This result allows us, in one go, to harvest several decidability results, as tree-interpreta\-bility is able to capture classes of (finite or countable) structures whose treewidth \cite{ROBERTSON198449}, cliquewidth \cite{ICDT2023,CourEngBook,Cou04,GroheT04}, or partitionwidth \cite{Blumensath03,Blumensath06,feller2023decidability} is bounded by some value $k \in \mathbb{N}$.  

\begin{corollary}\label{cor:whateverwidth}
Given a signature $\Sig$, satisfiability of \thegoodlogic is decidable over the classes of finite or countable $\Sig$-structures of bounded
treewidth,
cliquewidth, and
partitionwidth.
\end{corollary}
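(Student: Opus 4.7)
The plan is to establish, for each of the three width measures, that the class of finite or countable $\Sig$-structures of bounded width is tree-interpretable, so that \Cref{thm:msointer} together with \Cref{cor:treedecidable} directly yields decidability. The core observation is that for each width measure $\mu \in \{\text{treewidth}, \text{cliquewidth}, \text{partitionwidth}\}$ and each bound $k \in \mathbb{N}$, the class of $\Sig$-structures of $\mu$-width at most $k$ coincides (up to isomorphism) with the image, under some fixed MSO-interpretation $\mathcal{I}_{\mu,k}$, of a suitable class of labeled trees.

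For cliquewidth, I would invoke the Courcelle--Engelfriet characterization (cf.~\cite{CourEngBook,ICDT2023}): structures of cliquewidth at most $k$ are precisely those denoted by $k$-expressions, and $k$-expressions are terms over a finite alphabet $\Gamma_k$, i.e., labeled trees; the evaluation map from these trees to the denoted structures is an MSO-interpretation. The countable case is handled by the corresponding extensions cited in the paper. For treewidth, I would either reduce to cliquewidth via the well-known inequality $\mathrm{cw} \leq f(\mathrm{tw})$ (so that $\mathcal{I}_{\mathrm{tw},k}$ can be taken to be $\mathcal{I}_{\mathrm{cw},f(k)}$), or directly exploit the classical fact underlying Courcelle's theorem that a width-$k$ tree decomposition is itself a labeled tree from which the decomposed structure is MSO-interpretable. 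The partitionwidth case follows analogously from the constructions in \cite{Blumensath03,Blumensath06,feller2023decidability}.

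A minor but unavoidable technical point is that the paper defines tree-interpretability relative to $T^\omega_\Sigma$, the class of \emph{full infinite} binary trees, whereas the trees arising from the width characterizations are typically finite or at least not full. I would handle this by the standard padding trick: enlarge the alphabet by a distinguished dummy symbol $\bot$, mark all positions outside the ``real'' (possibly finite) subtree with $\bot$, and use the domain formula $\varphi_\mathrm{Dom}$ of the MSO-interpretation to exclude them. Since both the shape of the real subtree and the consistency of the $\bot$-padding are MSO-expressible, the resulting class agrees with the image of $T^\omega_\Sigma$ under a suitable MSO-interpretation obtained by composing $\mathcal{I}_{\mu,k}$ with this padding interpretation.

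The main obstacle I anticipate is not mathematical novelty but rather bibliographic bookkeeping: ensuring that the exact formulations available in the cited literature -- particularly for the countable extensions of cliquewidth and partitionwidth -- indeed produce an MSO-interpretation whose source can (after the padding trick) be taken to be $T^\omega_\Sigma$ rather than some other tree class. Once this is checked uniformly for each $\mu \in \{\text{treewidth},\text{cliquewidth},\text{partitionwidth}\}$ and each $k$, the three bullet points of the corollary follow immediately by applying \Cref{thm:msointer} with $\mathscr{S} = T^\omega_\Sigma$ and $\mathcal{I} = \mathcal{I}_{\mu,k}$.
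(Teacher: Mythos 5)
Your proposal follows essentially the same route as the paper, which justifies the corollary in a single sentence by citing the literature for tree-interpretability of bounded treewidth/cliquewidth/partitionwidth classes and then invoking \Cref{thm:msointer} together with \Cref{cor:treedecidable}. Your additional attention to the padding needed to realize these classes as images of $T^\omega_\Sigma$ (full infinite binary trees) is a sensible and correct elaboration of a detail the paper leaves implicit.
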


\section{Incorporating Two-Variable-Logics without Width Restrictions}\label{sec:twovar}
\Cref{cor:whateverwidth} constitutes a strong decidability result, also in view of the fact that lifting the width restriction immediately leads to undecidability even for much weaker logics like FO.
A feasible way to nevertheless relax this restriction without putting decidability at risk and yet maintaining all the expressive power of \thegoodlogic is to ``couple'' it with another logic~$\mathbb{L}$ whose satisfiability problem is decidable over arbitrary structures.
Then, one considers sentences $\varphi \wedge \psi$, where $\varphi$ is an \thegoodlogic sentence while $\psi$ is an $\mathbb{L}$-sentence, and asks for models whose reduct to the signature of $\varphi$ adheres to the width restriction.
That way, signature elements of $\psi$ not occurring in $\varphi$ can ``behave freely'' and are not subject to the imposed width constraint.\footnote{We refer to Kotek et al. \cite{KotVeiZul16} for a result that is similar in spirit, establishing decidability of finite satisfiability of treewidth-bounded MSO$_2$ coupled with $\mathrm{C}^2$.} 
Such a ``coupling'' of  \thegoodlogic and $\mathbb{L}$ can be made more or less ``tight'' depending on the arity of the predicates allowed to be shared between $\varphi$ and $\psi$.

We can show that a decidable coupling with shared unary predicates can be done for $\mathbb{L} = \FOpres$ \cite{Ben20}, an expressive extension of 2-variable first-order logic by Presburger-like counting quantifiers of the form $\exists^S$, where $S \subseteq \mathbb{N} \cup \{\infty\}$ is an ultimately periodic set from $\mathbb{N} \cup \{\infty\}$ with the semantics defined by $\mathfrak{A},\nu \models \exists^S x.\varphi$ iff $  |\{ a \in A \mid \mathfrak{A},\nu_{x \mapsto a} \models \varphi\}| \in S$. $\FOpres$ subsumes the prominent counting 2-variable first-order fragment $\mathrm{C}^2$ \cite{GradelOR97}, but goes beyond first-order logic. Its satisfiability problem was shown to be decidable only recently \cite{Ben20}.

\begin{restatable}{theorem}{mixed}\label{thm:mixed}
Let $w$ be any of treewidth, cliquewidth, or partitionwidth, and let $n \in \mathbb{N}$.
Let $\Sig_a$ and $\Sig_b$ be signatures whose only joint elements are unary predicates.
Then the following problem is decidable: 
Given a \thegoodlogic sentence $\varphi$ over $\Sig_a$ and a $\FOpres$ sentence $\psi$ over $\Sig_b$, does there exist a countable $\Sig_a {\cup}\, \Sig_b$-structure
$\mathfrak{C}$ satisfying $w(\mathfrak{C}|_{\Sig_a} ) \leq n$ and $\mathfrak{C} \models \varphi \wedge \psi$.  
\end{restatable}

In a nutshell, this result is obtained by exploiting the fact that, for every $\FOpres$ formula $\psi$ over $\Sig_b$, one can construct a \thegoodlogic formula $\psi'$ over the purely unary signature $\Sig_a \cap \Sig_b$ that is satisfied by precisely those $\Sig_a$-structures that are ``$\Sig_a \!\cap \Sig_b$-compatible'' with some model of $\psi$. Consequently, the \thegoodlogic formula $\varphi \wedge \psi'$ over $\Sig_a$ is such that for any of its models $\mathfrak{A}$ one finds a ``$\Sig_a \!\cap \Sig_b$-compatible'' model $\mathfrak{B}$ of $\psi$. Then, superimposing $\mathfrak{A}$ and $\mathfrak{B}$ would yield a model $\mathfrak{C}$ of $\varphi \wedge \psi$, which by construction satisfies $w(\mathfrak{C}|_{\Sig_a}) = w(\mathfrak{A})$. Consequently, to solve the decision problem of \Cref{thm:mixed}, it suffices to check if the \thegoodlogic formula $\varphi \wedge \psi'$ has a model $\mathfrak{A}$ satisfying $w(\mathfrak{A}|_{\Sig_a} ) \leq n$ which is decidable by \Cref{cor:whateverwidth}. We note that the extended arithmetic capabilities of \thegoodlogic are essential for this result, as $\psi'$ needs to encode linear inequalities over counts of realized atomic 1-types. 

\section{Showcase: Decidability of Tame Satisfiability of the Fully Enriched µ-Calculus with Global Presburger Counting}


An important and practically relevant class of expressive logical formalisms, which play a pivotal role in logic-based knowledge representation and verification, is obtained from variations and extensions of propositional modal logics \cite{ModalLogicBook,ModalLogicHandbook} and description logics \cite{baader_horrocks_lutz_sattler_2017,Rudolph11}. 
This class contains most ontology languages as well as PDL \cite{PDL}, CTL$^*$ \cite{CTLstar}, the propositional modal $\mu$-calculus \cite{mucalculus} and their extensions. 
Many of these logics exhibit some limited local counting capabilities \cite{Tobies00}, but recently, there has been an increased interest in accommodating more advanced arithmetic constraints \cite{DemriL10,mucounting,BaaderBR20,BednarczykOPT21}, including \emph{global constraints} \cite{BaaderBH96,Rudolph19} expressing statistical information such as ``more than 50\% of the state space's final states are successful''.

We will demonstrate the usefulness of \thegoodlogic for establishing decidability results at the example of adding global Presburger constraints to the \emph{fully enriched $\mu$-calculus}, a very powerful formalism used in verification. 
We first introduce syntax and semantics.\footnote{For brevity and coherence, we slightly adjust the syntax and use classical model-theoretic semantics (structures with unary and binary predicates) instead of the original one of modal logic (Kripke structures with propositional variables and programs), as the two are well known to be equivalent.}  

\begin{definition}
Given a signature 
$\Sig = \Sig_\Consts \cup \Sig_{\Preds,1} \cup \Sig_{\Preds,2}$ of constants, unary predicates and binary predicates,
the formulas of the fully enriched $\mu$-calculus (FEµ) are defined by

\medskip
$  \varphi \Coloneqq{}\ \mathbf{true} \mid \mathbf{false} \mid X \mid \pr{c} \mid \neg \pr{c} 
\mid \pr{P} \mid \neg\pr{P} \mid \varphi \wedge \varphi' \mid  \varphi \vee \varphi' \mid 
\langle n,\alpha \rangle\varphi \mid [ n,\alpha ]\varphi \mid {\rm\mathtt\rm \upmu} X.\varphi \mid {\mathrm \upnu} X.\varphi$ 

\medskip

\noindent where $X$ is a set variable from some countable set $\Vset$, $\pr{P} \in \Sig_{\Preds,1}$, $n\in \mathbb{N}$ and $\alpha$ has the form $\pr{R}$ or $\pr{R}^-$ for some $\pr{R} \in \Sig_{\Preds,2}$. 
For ease of presentation, we assume positive normal form.
Given a structure $\mathfrak{A}$ and a set variable assignment $\nu: \Vset \to 2^A$, the semantics $\semof{\varphi}^{\mathfrak{A}}_\nu \subseteq A$ of formulae $\varphi$ is defined by the following function (stipulating $(\pr{R}^-)^\mathfrak{A} = \{(a,a') \mid (a',a) \in \pr{R}^\mathfrak{A}\}$):

\medskip
\noindent$
\begin{array}{r@{\ \mapsto \ }l}
~\mathbf{true} & A \\ 
~\mathbf{false} & \emptyset \\ 
\end{array}\ 
\begin{array}[h]{r@{\ \mapsto \ }l}
X & \nu(X)\\[3ex] 
\end{array}\ 
\begin{array}{r@{\ \mapsto \ }l}
\pr{c} & \{\pr{c}^\mathfrak{A}\}\\ 
\neg \pr{c} & A \setminus \{\pr{c}^\mathfrak{A}\} \\ 
\end{array}\ 
\begin{array}{r@{\ \mapsto \ }l}
\pr{P} & \pr{P}^\mathfrak{A}\\ 
\neg\pr{P} & A \setminus \pr{P}^\mathfrak{A} \\ 
\end{array}\ 
\begin{array}{r@{\ \mapsto \ }l}
\varphi \wedge \varphi' & \semof{\varphi}^{\mathfrak{A}}_\nu \cap \semof{\varphi'}^{\mathfrak{A}}_\nu \\  
\varphi \vee \varphi' & \semof{\varphi}^{\mathfrak{A}}_\nu \cup \semof{\varphi'}^{\mathfrak{A}}_\nu \\ 
\end{array}$

\smallskip

\noindent$
\begin{array}{r@{\ \mapsto \ }l}
~\langle n,\alpha \rangle\varphi & \{ a \mid |\{ \alpha^\mathfrak{A} \cap (\{a\} {\times} \semof{\varphi}^{\mathfrak{A}}_\nu) \} | \geq n \} \\ 
~[n,\alpha ]\varphi & \{ a \mid |\{ \alpha^\mathfrak{A} \cap (\{a\} {\times} ( A \setminus \semof{\varphi}^{\mathfrak{A}}_\nu)) \} | \leq n \} \\ 
\end{array}
\begin{array}{r@{\ \mapsto \ }l}
\upmu X.\varphi & \bigcap \{A' \subseteq A \mid \semof{\varphi}^{\mathfrak{A}}_{\nu_{X\mapsto A'}} \subseteq A' \}\\ 
\upnu X.\varphi  & \bigcup \{A' \subseteq A \mid A' \subseteq \semof{\varphi}^{\mathfrak{A}}_{\nu_{X\mapsto A'}} \} \\
\end{array}
$

\medskip

A FEµ formula is \emph{closed} if all occurrences of set variables are in the scope of some $\upmu$ or~$\upnu$. 
A \emph{global FEµ Presburger constraint} is a Parikh constraint (cf. \Cref{def:GNF}), where all set variables have been replaced by closed FEµ formulae.
Given a set $\Pi$ of global FEµ Presburger constraints, we let 
$\mathfrak{A} \models \Pi$ if for every element of 
$\Pi$, replacing each of its closed FEµ formulae 
$\psi$ by 
$\semof{\psi}^{\mathfrak{A}}_\emptyset$ produces a statement valid in 
$\mathfrak{A}$.
A closed FEµ formula $\varphi$ is satisfiable wrt. $\Pi$ if there is some structure $\mathfrak{A} \models \Pi$ with $\semof{\varphi}^{\mathfrak{A}}_\emptyset \neq \emptyset$, in which case we call
$\mathfrak{A}$ a model of $(\varphi,\Pi)$.
\end{definition}

In fact, unrestricted satisfiability in FEµ (even without Presburger constraints) is undecidable \cite{BonattiP04}.
Decidability can be regained, however, when restricting to \emph{tame structures}, also commonly known as ``quasi-forests'' \cite{CalvaneseEO07,BonattiLMV08,ZOIQ,BednarczykR19}.

%

\begin{definition}[tame structures]
Let $\Sig = \Sig_\Consts \cup \Sig_{\Preds,1} \cup \Sig_{\Preds,2}$ be a signature as above.
A \emph{tame structure} $\mathfrak{A}$ over $\Sig$ is a countable structure such that, for some finite set $Roots$,
	\begin{itemize}
		\item the domain $A$ of $\mathfrak{A}$ is a forest, i.e., a
		prefix-closed subset of $\{rw \mid r\in Roots, w \in \mathbb{N}^*\}$, 
        \item the roots coincide with the named elements, i.e., $Roots = \{\pr{a}^\mathfrak{A} \mid \pr{a} \in \SigC\}$, 
        and
		\item for every $a, a'\in A$ with $(a,a') \in \pr{R}^\mathfrak{A}$ for some
		$\pr{R}\in \Sig_{\Preds,2}$, either (i) $\{a,a'\} \cap
		Roots \neq \emptyset$, or (ii) $a = a',$ or
		(iii) $a$ is a child of $a'$, or (iv) $a'$ is a child
		of $a$.
		\end{itemize}
A logic has the \emph{tame model property} if every satisfiable formula $\varphi$ has a model that is tame over the signature used by $\varphi$.
The \emph{tame satisfiability} problem consists in deciding if a given formula has a tame model.  
\end{definition}

While the restriction to tame structures may seem somewhat arbitrary at first, it is well justified: three maximal decidable sublogics of FEµ have the tame-model-property \cite{BonattiLMV08}, in which case satisfiability over arbitrary structures and tame structures coincide. Also, the structural restriction has some plausibility from a transition system perspective in that one distinguishes between a finite set of ``named'' states with arbitrary transitions between them and potentially infinitely many ``anonymous'' states with more restricted access.
It is easy to see that all tame structures over $\Sig = \SigC \cup \SigP$ have a treewidth not larger than $|\SigC|+1$.

\begin{restatable}{theorem}{FEmu}\label{thm:FEmu}
The tame satisfiability problem of the fully enriched $\mu$-calculus with global Presburger constraints is decidable.
\end{restatable}

\begin{proof}[Proof (sketch)] 
Let $\Sig$ be a finite signature, $\varphi$ a closed FEµ formula over $\Sig$, and $\Pi$ a finite set of global FEµ Presburger constraints.
Being a tame structure over $\Sig$ can be expressed by an MSO sentence $\psi_\mathrm{tame}$. We define a translation $\mathrm{trans}_x$ mapping closed FEµ formulae to \thegoodlogic formulae with free variable $x$ such that $\mathfrak{A},\{x \mapsto a\} \models \mathrm{trans}_x(\varphi)$ iff $a \in \semof{\varphi}^{\mathfrak{A}}_\emptyset$. Based on this, we exhibit another translation $\mathrm{trans}$, which maps global FEµ Presburger constraints to equivalent $\thegoodlogic$ sentences. 
Then, tame satisfiability of  $(\varphi,\Pi)$ corresponds to satisfiability of the \thegoodlogic sentence $\psi_\mathrm{tame} \wedge \exists x.\mathrm{trans}_x(\varphi) \wedge \bigwedge \mathrm{trans}(\Pi)$ over all countable structures of treewidth $\leq |\SigC|+1$, which is decidable by \Cref{cor:whateverwidth}.
\end{proof}

Thanks to the expressive power of FEµ, the above result transfers to numerous other prominent logics (and their fragments), including PDL and CTL$^*$  as well as the description logics $\mu\mathcal{ALCOIQ}$ and $\mathcal{ALCOIQ}^\mathrm{reg}$ \cite{calvanese_degiacomo_2007}, for all of which tame satisfiability is thus decidable even in the presence of global Presburger constraints. 
The argument easily extends to the description logic $\mathcal{ZOIQ}$ \cite{ZOIQ}, adding Boolean combinations of binary predicates (programs). 

\section{Conclusion}\label{sec:Conc}
We have proposed \thegoodlogic, a logic with a high combined structural and arithmetic expressivity, subsuming and properly extending existing popular formalisms for either purpose. We have established decidability of the satisfiability of \thegoodlogic formulae over arbitrary tree-interpretable classes of structures. A key role is played by Parikh-Muller Tree Automata, a novel type of automaton over labeled infinite binary trees with decidable emptiness.

For improving readability and succinctness, the syntax of our formalism could be extended by ``comprehension expressions'': set terms of the form
$
\cmpr{x}{\psi}
$
with $x \in \Vind$ and $\psi \in \Form$, whose semantics is straightforwardly defined by $\int{\cmpr{x}{\psi}} = \{ a \in A \mid \mathfrak{A},\nu_{\tiny x \mapsto a} \models \psi\}$.
E.g., this allows us to write
$\pr{2}\,\cnts{\cmpr{x}{\exists y. \pr{R}(x,y)}} \tteq \pr{3}\,\cnts{\cmpr{y}{\exists x. \pr{R}(x,y)}}$ rather than the more unwieldy
$$\exists V_1. (\forall x.V_1(x) \Leftrightarrow \exists y. \pr{R}(x,y)) \wedge
  \exists V_2. (\forall y.V_2(y) \Leftrightarrow \exists x. \pr{R}(x,y)) \wedge
\pr{2}\,\cnts{V_1} \tteq \pr{3}\,\cnts{V_2}.$$
Note that comprehension expressions do not increase expressivity; they can be removed from a formula~$\varphi$ yielding an equivalent formula $\varphi'$ as follows:
Let $\chi$ be the largest subformula of~$\varphi$ that contains the expression $\cmpr{x}{\psi}$ but no quantifiers binding any of the free variables of~$\psi$. Then, obtain $\varphi'$ from $\varphi$ by replacing $\chi$ by $\chi'$, where
$\chi' \coloneqq \exists Z.(\forall x.Z(x) \Leftrightarrow \psi) \wedge \chi[ \cmpr{x}{\psi} \mapsto Z].$
\thegoodlogic membership of such extended formulae can then be decided based on their ``purified'' variant,\footnote{The described removal technique is optimized toward producing formulae in \thegoodlogic.} or by means of an elaborately refined analysis of variable interactions. 
 

Concluding, we are quite confident that this paper's findings and techniques will prove useful as a generic tool for establishing decidability results for formalisms from various areas of computer science such as knowledge representation or verification. That said, in view of the non-elementary blow-ups abounding in our methods, we concede that they are unlikely to be helpful in more fine-grained complexity analyses, once decidability is established.

\bibliography{lit.bib}

\newpage
\appendix

\pagebreak

\section{Mildly Extending \thegoodlogicheading
Leads to Undecidability}

Given infinite trees $\xi, \xi_1\in T_\Sigma^\omega$ and a position $\rho\in\pos(\xi)$, we denote by $\xi[\rho\to\xi_1]$ the tree $\xi'$ resulting from substituting $\xi_1$ at position $\rho$ into $\xi$, i.e., for each $\rho'\in\pos(\xi')$ we obtain $\xi'(\rho')=\xi_1(\rho_1)$ if $\rho'=\rho\rho_1$ for some $\rho_1\in\{0,1\}^*$ and $\xi'(\rho')=\xi(\rho')$ otherwise. We abbreviate $(\ldots(\xi[\rho_1\to\xi_1])\ldots)[\rho_n\to\xi_n]$ by $\xi[\rho_1\to\xi_1,\ldots,\rho_n\to\xi_n]$.
Furthermore, we denote by $[\xi]_\rho$ the subtree of $\xi$ at position $\rho$, i.e., $[\xi]_\rho(\rho')=\xi(\rho\rho')$ for each $\rho'\in\{0,1\}^*$. We let $|\xi|_\sigma=|\{\rho\in\{0,1\}^*\mid\xi(\rho)=\sigma\}|$ for each $\sigma\in\Sigma$.

Given two numbers $m,n\in\Nat$ such that $m< 2^n$ we denote by $\bin{m}^n$ the binary representation of $m$ using $n$ digits (with the least significant digit on the right).

\undecidable*

\begin{proof} Let $\mathcal{D}=(NV,M,(n_w)_{w\in M},(m_w)_{w\in M})$ be a positive Diophantine equation with number variables $NV=\{\nv{z_1},\ldots,\nv{z_k}\}$\\

$\Rightarrow:$  Assume that $\nu:NV \to \mathbb{N}$ is a solution of $\mathcal{D}$.

Let $\Sigma=M\cup NV$ and $\hat{\Sigma}=\{\hat\alpha\mid\alpha\in\Sigma\}$. We construct a model $\xi\in T_{\Sigma\cup\hat\Sigma}^\omega$ of $\varphi_\mathcal{D}$ with the following intuition: for each position $\rho$ in $\xi$ labeled by $w$ and all variables $\nv{z}_{i_1},\ldots,\nv{z}_{i_l}$ with $w\nv{z}_{i_j}\in M$ we pick up a level $n$ in $[\xi]_\rho$ with enough space to label in parallel nodes from $\rho\cdot\{0,1\}^n$ by $w\nv{z}_{i_j}$ -- such that, for each $j\in[l]$, $w\nv{z}_{i_j}$ occurs $\nu(\nv{z}_{i_j})$ many times in $[\xi]_\rho$. The nodes in between are labeled by the padding symbol $\hat{w}$. \\

Let $w\in M\cup NV$ and $\zeta_w \in T_{\{w,\hat w\}}$ such that $\zeta_w(\varepsilon)=w$ and $\zeta_w(\rho)=\hat{w}$ for each $\rho\in\{0,1\}^+$. Then $\xi=\xi_\varepsilon$ with $\xi_\varepsilon$ recursively defined as follows. For the base case, let $w\in M\cup NV$ such that there is no $\nv{z}\in NV$ with $w\nv{z}\in M$. Then $\xi_w=\zeta_w$.

For the inductive case let $w\in M \cup NV$ such that there is at least one $\nv{z}_i\in NV$ with $w\nv{z}_i\in M \cup NV$. For every such $w\in M \cup NV$, let 
\begin{itemize}
    \item $m_{w}=\displaystyle\sum_{w\nv{z}_i\in M\cup NV} \nu(\nv{z}_i)$ and let $n\geq 1$ such that $2^{n-1}\leq m_{w}\leq 2^n$, and 
    \item for every $j\leq k$, let $u^w_{j}=\displaystyle\sum_{w\nv{z}_s\in M\cup NV,\, 1\leq s<j}\nu(\nv{z}_s)$. 
\end{itemize}

Then $\xi_w=\zeta_w[\mathrm{SUB}_w]$, where $\mathrm{SUB}_w$ is the substitution sequence composed of all subsequences $\mathrm{sub}_{w\nv{z}_j}$ for all $w\nv{z}_j \in M\cup NV$ where 
$$ \mathrm{sub}_{w\nv{z}_j} \ \ \coloneqq \ \ \bin{u^w_j}^n\to\xi_{w\nv{z}_j},\ \ldots\ ,\bin{u^w_j+\nu(\nv{z}_{j})-1}^n\to\xi_{w\nv{z}_j} $$
if $\nu(\nv{z}_j)\neq 0$ and $\mathrm{sub}_{w\nv{z}_j}\coloneqq\varepsilon$ otherwise.

\color{black}
 The next two observations follow easily from the fact that $\xi_w$ is constructed by substituting trees $\xi_{w\nv{z}_i}$ in the $n$th level ($n>1$) of $\zeta_w$ with leaving the root at level $0$ unchanged.

\begin{observation}\label{obs1}
    For each $w\in M$ we have $\xi_w(\varepsilon)=w$.
\end{observation}

\begin{observation}\label{obs2}
    For each $u\in\pos(\xi_\varepsilon)$ we have $\xi_\varepsilon(u)=w\nv{z}$ for some $w\in M$ and $\nv{z}\in NV$ if and only if \begin{enumerate}
    \item[(a)] $[\xi_\varepsilon]_u=\xi_{w\nv{z}}$ and
    \item[(b)] there is a $v\in\{0,1\}^*$ with $v<_{\mathrm{prefix}}u$ and $\xi_\varepsilon(v)=w$, and for each $\rho\in\{0,1\}^*$ with $v<_{\mathrm{prefix}}\rho<_{\mathrm{prefix}}u$ we have $\xi_\varepsilon(\rho)=\hat{w}$. 
\end{enumerate}
\end{observation}

The mapping $\nu\colon NV\to \Nat$ can be extended to $\hat\nu\colon NV^*\to \Nat$ by letting $\hat\nu(\nv{z}_{i_1}\ldots\nv{z}_{i_l})=\nu(\nv{z}_{i_1})\cdot\ldots\cdot\nu(\nv{z}_{i_l})$; in the following we identify $\hat\nu$ and $\nu$.
\begin{claim}\label{claim}\label{claim}
    For each $w\in M$ and $\nv{z_i}\in NV$ such that $w\nv{z_i}\in M$ we obtain $|\xi_\varepsilon|_{w\nv{z_i}}=\nu(w\nv{z_i})$.
\end{claim}

\begin{proof}
    By induction on $w$: For the induction base assume that $w=\varepsilon$. By construction and \cref{obs2}, each node labeled $\nv{z}_i$ is the root of a subtree $\xi_{\nv{z}_i}$ and originates from the substitution of $\mathrm{sub}_{\nv{z}_{i}}$ into $\zeta_\varepsilon[\mathrm{sub}_{\nv{z}_{1}},\ldots,\mathrm{sub}_{\nv{z}_{i-1}}]$. By definition of $\mathrm{sub}_{\nv{z}_{i}}$, this substitution takes place $\nu(\nv{z}_{i})$ many times at parallel positions. Thus, $|\xi_\varepsilon|_{\nv{z_i}}=\nu(\nv{z_i})$.

    Now assume that $w=u\nv{z}\in M$ with $\nv{z}\in NV$ and $|\xi_\varepsilon|_w=\nu(w)$. Moreover, let $\nv{z}_i\in NV$ such that $w\nv{z}_i\in M$. By \cref{obs2} (and with the same argumentation as above), each node labeled $w\nv{z}_i$ occurs $\nu(\nv{z}_i)$ many times in $\xi_w$. As, by assumption, $w$ and, thus, also $\xi_w$ occurs $\nu(w)$ many times in $\xi_\varepsilon$, we obtain
    \begin{align*}
    |\xi_\varepsilon|_{w\nv{z_i}}=|\xi_w|_{\nv{z}_i}\cdot|\xi_\varepsilon|_w=\nu(\nv{z_i})\cdot\nu(w)=\nu(w\nv{z}_i)\,.
    \end{align*}
\end{proof}



Now we want to show that $\xi_\varepsilon$ is indeed a model of $\varphi_\mathcal{D}$, i.e., $\xi_\varepsilon\models\varphi_{\mathrm{lab}}$, $\xi_\varepsilon\models\varphi_{\mathrm{prod}}$, and $\xi_\varepsilon\models\varphi_{\mathrm{sol}}$.

$\xi_\varepsilon\models\varphi_{\mathrm{lab}}$: It follows from \cref{obs1} that $\xi_\varepsilon\models\exists x\in\pr{P}_\varepsilon.\varphi_\mathrm{root}(x)$. Now let $w\in M$, $u\in\pos(\xi_\varepsilon)$ such that $\xi_\varepsilon(u)\in\{w,\hat{w}\}$, and $v\in\{u0,u1\}$. If $\xi_\varepsilon(u)=w$, by \cref{obs2}, $[\xi_{\varepsilon}]_u=\xi_w$ and, thus, either $\xi_\varepsilon(v)=\hat{w}$ or $[\xi_\varepsilon]_v=\xi_{w\nv{z}_j}$ and $\xi_\varepsilon(v)=\xi_{w\nv{z}_j}(\varepsilon)$ for some $j\in[k]$. By \cref{obs1}, $\xi_{w\nv{z}_j}(\varepsilon)=w\nv{z}_j$. If $\xi_\varepsilon(u)=\hat{w}$, we can argue similarly. Hence, $\xi_\varepsilon\models \bigwedge\limits_{\mathclap{w\in M}} \big(\forall x {\in} \pr{P}_w\cup\pr{P}_{\hat{w}}.\forall y. x \!\succ\! y \Rightarrow \pr{P}_{\hat{w}}(y) \vee \bigvee\limits_{\mathclap{w\nv{z}_i \in M}}\pr{P}_{w\nv{z}_i}(y)\big)$.

$\xi_\varepsilon\models\varphi_{\mathrm{prod}}$: By analyzing $\varphi_{\mathrm{prod}}$ we observe that $\xi_\varepsilon\models\varphi_{\mathrm{prod}}$ iff for each $w,w\nv{z}_i\in M$ and $u\in \pos(\xi_\varepsilon)$ with $\xi_\varepsilon(u)=w$ it holds that \[|[\xi_\varepsilon]_u|_{w\nv{z}_i}=|\xi_\varepsilon|_{\nv{z}_i} <\infty.\]
By \cref{claim}, $|\xi_\varepsilon|_{\nv{z}_i}=\nu(\nv{z}_i)$. On the other hand, by \cref{obs2}, $|[\xi_\varepsilon]_u|_{w\nv{z}_i}$ corresponds to the number of occurrences of $\xi_{w\nv{z}_i}$ in $\xi_w$ which is ensured by $\mathrm{sub}_{w\nv{z}_{i_j}}$ to be $\nu(\nv{z}_i)$, too.

$\xi_\varepsilon\models\varphi_{\mathrm{sol}}$: By assumption we know that the equation 
$$
{\sum}_{\smash{w=\nv{z}_1^{i_1}\!\ldots\nv{z}_k^{i_k}} {\in} M} n_w\cdot\nu(\nv{z}_1\hspace{-1pt})^{i_1}\cdot\ldots\cdot\nu(\nv{z}_k\hspace{-1pt})^{i_k} {\,=}  
{\sum}_{\smash{w=\nv{z}_1^{i_1}\!\ldots\nv{z}_k^{i_k}} {\in} M} m_w\cdot\nu(\nv{z}_1\hspace{-1pt})^{i_1}\cdot\ldots\cdot\nu(\nv{z}_k\hspace{-1pt})^{i_k}\, .
$$ is satisfied. Moreover, by \cref{claim}, $|\xi_\varepsilon|_w=\nu(w)$ for each $w\in M$. It follows that $\xi_\varepsilon\models\sum_{w \in M} \pr{n}_w\,\cnts{\pr{P}_w} \tteqfin \sum_{w \in M} \pr{m}_w\,\cnts{\pr{P}_w}$.\\

$\Leftarrow:$ The proof for the other direction works with a similar argumentation. Essentially, as \cref{claim} can be shown for an arbitrary model $\xi$ of $\varphi_D$ and a mapping $\nu:NV \to \mathbb{N}$ given by $\nu(\nv{z})=|\xi|_\nv{z}$, we obtain that $\nu$ is a solution of $\mathcal{D}$.
\end{proof}
\newpage
\section{Stepwise Simplification of \thegoodlogicheading Formulae}

Throughout the transformation, we will make use of formulae of the shape $\varphi_S(\iota)$ where $S$ is a set term and $\iota$ is an individual term.
This is defined inductively as follows:
\begin{align*}
  \varphi_{\{\pr{a}\}}(\iota) &\ := \  \{\pr{a}\}(\iota) \\
\varphi_{\pr{P}}(\iota) &\ := \  \pr{P}(\iota) \\    
 \varphi_{X}(\iota) &\ := \  X(\iota) \\
 \varphi_{S^c}(\iota) &\ := \  \mathrm{NNF}(\neg \varphi_{S}(\iota)) \\
 \varphi_{S_1 \cup S_2}(\iota) &\ := \   \varphi_{S_1}(\iota) \wedge  \varphi_{S_2}(\iota) \\
 \varphi_{S_1 \cap S_2}(\iota) &\ := \   \varphi_{S_1}(\iota) \vee  \varphi_{S_2}(\iota)
\end{align*}
It should be clear that $\varphi_S(\iota)$ and $S(\iota)$ are equivalent.

\medskip

Now we describe the normalization process. In the course of our treatise, we will often speak of (and introduce) \emph{fresh} variables or signature elements. By this, we mean symbols that are entirely new and have not been seen before; in particular, they are neither already present in the current formula, nor in any of the formula's previous ``versions'' throughout the normalization process.  

\subsection{Simplification and Skolemization}

\begin{itemize}
\item Remove complex set terms from finiteness and modulo atoms:
\begin{itemize}
    \item Replace every $\mathrm{Fin}(S)$ where $S$ is not a set variable by $\exists Z.Z=S \wedge \mathrm{Fin}(Z)$ where $Z$ is a fresh set variable. 
    \item Replace every $\cnts{S}\equiv_{\pr{m}}\pr{n}$ where $S$ is not a set variable by $\exists Z.Z=S \wedge \cnts{Z}\equiv_{\pr{m}}\pr{n}$ where $Z$ is a fresh set variable. 
\end{itemize}
We note that these replacements do not change the status of any variable inside $S$ and the result will remain in \thegoodlogic.
\item Remove set operations from set atoms by replacing every $S(\iota)$ by $\varphi_S(\iota)$, whenever $S$ contains any of $\cdot^c$, $\cap$, $\cup$. Again, the transformation clearly preserves membership in \thegoodlogic.
\item Remove simple Presburger atoms by the following subsequent equivalent transformations:
\begin{itemize}
    \item Evaluate all (simple) Presburger atoms without occurrence of any $\cnts{S}$, and, depending on the result, replace them with $\mathbf{true}$ or $\mathbf{false}$.   
    \item Replace all simple Presburger atoms of the shape $t \ttleq \ttinfty$ by $\mathbf{true}$.
    \item Replace all simple Presburger atoms of the shape $t \ttleqfin \ttinfty$ by $\mathbf{false}$.
    \item Replace all simple Presburger atoms of the shape $t \plus \pr{n} \ttleqffin t' \plus \pr{m}$ by 
\begin{itemize}
    \item $t \plus \pr{k} \ttleqffin t'$ if $n > m$, and
    \item $t \ttleqffin t' \plus \pr{k}$ otherwise,
\end{itemize}
     where $k = |m-n|$.
    \item Replace $\pr{m} \ttleqffin \pr{n}\, \cnts{S}$ by $\pr{k} \ttleqffin \cnts{S}$, where $k = \lceil \frac{m}{n} \rceil$.
    \item Replace $\pr{n}\, \cnts{S} \ttleqffin \pr{m}$ by $\cnts{S} \ttleqffin \pr{k}$, where $k = \lfloor \frac{m}{n} \rfloor$.
    \item Replace $\pr{0} \ttleq \cnts{S}$ and $\pr{0} \ttleq \cnts{S}\plus\pr{n}$ by $\true$.
    \item Replace $\pr{0} \ttleqfin \cnts{S}$ and $\pr{0} \ttleqfin \cnts{S}\plus\pr{n}$ by $\mathrm{Fin}(S)$.
    \item Replace $\pr{m} \ttleqfin \cnts{S}$ by $\mathrm{Fin}(S) \wedge \pr{m} \ttleq \cnts{S}$.
    \item Replace $\pr{m} \ttleq \cnts{S}$ by 
          $$\exists x_1\ldots x_m. \bigwedge_{1\leq i\leq m}\varphi_S(x_i) \wedge \bigwedge_{1\leq i<j\leq m} x_i \neq x_j.$$
    \item Replace $\cnts{S}\ttleqffin \pr{n}$ 
          by $$\forall x_1\ldots x_{n+1}. \bigvee_{0\leq i\leq n+1}\varphi_{S^c}(x_i) \vee \bigvee_{0\leq i<j\leq n+1} x_i = x_j.$$
          (recall that the empty conjunction equals $\true$ and the empty disjunction equals $\false$).
    \item Replace $\ttinfty \ttleq \cnts{S}$ by $\neg \mathrm{Fin}(S)$.
\end{itemize}
It should be clear that this transformation produces an equivalent formula and entirely removes all simple Presburger atoms from the considered formula.
\item Skolemize all assertive set and individual variables: 
\begin{itemize}
\item remove trailing $\exists X$ and $\exists x$
\item replace all free occurrences of $X$ by $\pr{P}_X$ (fresh unary predicate)
\item replace all free occurrences of $x$ by $\pr{c}_x$ (fresh individual constant)
\end{itemize}
This is a satisfiability-preserving transformation and afterward all assertive variables are gone.  
In particular, all variables that now still occur in Presburger atoms are delicate. 
\end{itemize}

\subsection{Presburgerization: Separation of Variables}

We \emph{``presburgerize''} all non-Presburger atoms with delicate set variables. This may require to introduce further auxiliary unary predicates, extending the signature, and is done in two steps:
\begin{enumerate}
    \item 
One by one, turn each delicate individual variable $y$ into a fresh delicate set variable $Y$ through the following procedure: 
\begin{itemize}
\item in case $y$ is existentially quantified, replace the subformula $\exists y.\varphi$ by $$\exists Y.\big((\cnts{Y}\tteq\mathtt{1}) \wedge \varphi^y_{Y}\big)$$ 
\item in case $y$ is universally quantified, replace $\forall y.\varphi$ by $$\forall Y.\big((\cnts{Y} \tteq \pr{1}) \Rightarrow \varphi^y_{Y}\big)$$ 
\item thereby, given any formula $\varphi$ with a free variable $y$, and any fresh set variable $Y$, we obtain $\varphi^y_Y$ from $\varphi$ as the end product of an exhaustive transformation sequence $\varphi=\varphi_0 \leadsto \varphi_1 \leadsto \ldots \leadsto \varphi_h=\varphi^y_Y$ where $\varphi_{i+1}$ is obtained from $\varphi_{i}$ by one of the following actions: 
\begin{itemize}
\item replacing any set atom $S(y)$ by $\mathtt{1} \ttleq \cntp{S\cap Y}$
\item picking an atom $\pr{Q}(\iota_1,\ldots,\iota_n)$ in $\varphi_i$ that contains $y$ and
letting 
$$
\varphi_{i+1} = \varphi'_i \wedge \forall y'. ( \pr{Q}(\iota_1,\ldots,\iota_n)[y\mapsto y']\Leftrightarrow \pr{P}_{\pr{Q}(\iota_1,\ldots,\iota_n)}(y')),
$$
where $y'$ is a fresh (and obviously non-delicate) individual variable and $\varphi'_i$ is obtained from $\varphi_i$ by replacing  
$\pr{Q}(\iota_1,\ldots,\iota_n)$ with $\mathtt{1} \ttleq \cntp{\pr{P}_{\pr{Q}(\iota_1,\ldots,\iota_n)}\cap Y}$.\footnote{Note that, by the assumption of \thegoodlogic (i.e., each predicate atom contains at most one delicate variable), $y$ is the only variable occurring in $\pr{Q}(\iota_1,\ldots,\iota_n)$.}
\end{itemize}
\end{itemize}
Under the given circumstances, the transformation produces an equivalent formula in which all delicate individual variables have been removed.
\item Turn all remaining non-Presburger atoms containing some delicate set variable (and for which then \emph{all} contained individual and set variables must be delicate by definition) into subformulae with only Presburger atoms $t \ttleqffin t'$:  
\begin{itemize}
    \item Replace any $\cnts{S} \equiv_\pr{n} \pr{m}$ with delicate variables by 
    $$\exists \nv{k}. \big( \cnts{S} \tteq \pr{n}\,\nv{k} \plus \pr{m} \big).$$
    \item replace any finiteness atom $\mathrm{Fin}(S)$ with delicate variables by $\neg(\ttinfty \ttleq \cnts{S})$.
    \item replace any set atom of the shape $S(\pr{a})$ with delicate variables and constant $\pr{a}$ by $\mathtt{1} \ttleq \cntp{S\cap \{\pr{a}\}}$.  
\end{itemize}
\end{enumerate}

At the end of this sequence of transformations, we have obtained a separation of variable types: 
Next to number variables, Presburger atoms $t \ttleqffin t'$ contain exclusively delicate set variables, while all other kinds of atoms contain neither delicate nor number variables.  

\subsection{Disentangling Quantifiers}

An \thelogic formula is said to be in \emph{negation normal form} (short: NNF), if $\neg$ only occurs directly in front of predicate, set, or finiteness atoms.
Every \thelogic formula can be equivalently transformed into NNF, using the commonly known equivalences plus:
\begin{align*}
\neg (t \ttleqfin t') &\quad \equiv \quad (t' \plus \pr{1} \ttleqfin t) \vee (\ttinfty \ttleq t \plus t') \\[-0.5ex]
\neg (t \ttleq t') &\quad \equiv \quad (t' \plus \pr{1} \ttleqfin t) \vee ((\ttinfty \ttleq t)  \wedge (t' \ttleqfin t')) \\[-0.5ex]
\neg ( \cnts{S} \equiv_\pr{n} \pr{m} ) &\quad \equiv \quad (\ttinfty \ttleq \cnts{S}) \vee  \textstyle\bigvee_{\!\!{k \in \{0,\ldots,n-1\}}\atop{k \neq m \!\!\mod n}} \cnts{S} \equiv_\pr{n} \pr{k}
\end{align*}
Also it is easy to see that the normal form of a \thegoodlogic formula is again in \thegoodlogic. 
For what follows, we assume that the considered formulas are in NNF.

The purpose of the \emph{``disentangeling''} step of our normalization procedure is the following:
While the previous transformations have ensured a variable separation between the different atom types
(Presburger atoms use exclusively number and delicate set variables, whereas all other atoms use exclusively non-delicate individual and set variables), the scopes of the respective quantifiers are still containing atoms of either type. Our goal is to make sure that the quantifier scopes of number and delicate variables contain exclusively Presburger atoms, whereas the scopes of non-delicate variables contain none.

Consequently, we will call a formula \emph{entangled}, if one or both of the following is the case: 
\begin{itemize}
\item It contains a subformula of the form $\exists X.\psi$ or $\forall X.\psi$ with delicate $X$, or $\exists \nv{k}.\psi$ or $\forall \nv{k}.\psi$ such that $\psi$ contains a predicate, set, modulo or finiteness atom.
\item It contains a subformula of the form $\exists X.\psi$ or $\forall X.\psi$ with non-delicate $X$, or $\exists x.\psi$ or $\forall x.\psi$ such that $\psi$ contains a Presburger atom.
\end{itemize}
A formula is \emph{disentangled} if it is not entangled. 
Also, we call a disentangled \thegoodlogic formula \emph{arithmetic} if it only contains Presburger atoms, while we call it \emph{(plain) CMSO} if it only contains predicate, set, modulo and finiteness atoms.
It is not hard to see that, by virtue of the negation normal form, every disentangled sentence can be written as a positive Boolean combination of arithmetic and plain CMSO formulae. Applying the distributive law and intelligent grouping, we can be even more restrictive: every disentangled formula $\psi$ allows for a \emph{disentangled disjunctive normal form} (DDNF) as well as a 
\emph{disentangled conjunctive normal form} (DCNF) with the shapes
$$
\mathsf{DDNF}(\psi) = \bigvee_{i=1}^k \psi^\mathrm{arith}_{\vee i} \wedge \psi^\mathrm{plain}_{\vee i}
\qquad
\mathsf{DCNF}(\psi) = \bigwedge_{j=1}^l \psi^\mathrm{arith}_{\wedge j} \vee \psi^\mathrm{plain}_{\wedge j}
$$
where the $\psi^\mathrm{arith}_{\vee i}$ and $\psi^\mathrm{arith}_{\wedge j}$ are arithmetic formulae while the $\psi^\mathrm{plain}_{\vee i}$ and $\psi^\mathrm{plain}_{\wedge j}$ are plain CMSO formulae.

\bigskip

Now, in order to disentangle an impure \thegoodlogic formula $\varphi$, we repeatedly do the following:
We pick an occurrence of a minimal entangled subformula $\chi$ of $\varphi$ (i.e., one all of whose proper subformulae are disentangled), and replace it with $\chi'$, the ``disentangled equivalent variant'' of $\psi$. Obviously, this procedure terminates -- the number of replacement steps needed is bounded by the number of quantifiers of $\varphi$ -- and produces a disentangled formula.

It remains to provide a method to obtain $\chi'$ from $\chi$. First observe that by assumption of minimality, $\chi$ must start with a quantifier followed by some disentangled formula $\psi$. We now make a case distinction depending on the quantifier: obtain $\chi'$ from $\chi$ through the following function (where $X$ is delicate while $Y$ isn't, and we assume $\mathsf{DDNF}(\psi)$ and $\mathsf{DCNF}(\psi)$ as above):
$$
\begin{array}{rl}
\exists \nv{k}.\psi
& \mapsto \ \bigvee\limits_{i=1}^k \exists \nv{k}.\psi^\mathrm{arith}_{\vee i} \wedge \psi^\mathrm{plain}_{\vee i}\\ 
\exists X.\psi 
& \mapsto \ \bigvee\limits_{i=1}^k \exists X.\psi^\mathrm{arith}_{\vee i} \wedge \psi^\mathrm{plain}_{\vee i}\\ 
\exists x.\psi 
& \mapsto \ \bigvee\limits_{i=1}^k \psi^\mathrm{arith}_{\vee i} \wedge \exists x.\psi^\mathrm{plain}_{\vee i}\\ 
\exists Y.\psi 
& \mapsto \ \bigvee\limits_{i=1}^k \psi^\mathrm{arith}_{\vee i} \wedge \exists Y.\psi^\mathrm{plain}_{\vee i}\\ 
\end{array}\qquad\qquad
\begin{array}{rl}
\forall \nv{k}.\psi
& \mapsto \ \bigwedge\limits_{j=1}^l \forall \nv{k}.\psi^\mathrm{arith}_{\wedge j} \vee \psi^\mathrm{plain}_{\wedge j}\\ 
\forall X.\psi 
& \mapsto \ \bigwedge\limits_{j=1}^l \forall X.\psi^\mathrm{arith}_{\wedge j} \vee \psi^\mathrm{plain}_{\wedge j}\\ 
\forall x.\psi 
& \mapsto \ \bigwedge\limits_{j=1}^l \psi^\mathrm{arith}_{\wedge j} \vee \forall x.\psi^\mathrm{plain}_{\wedge j}\\ 
\forall Y.\psi 
& \mapsto \ \bigwedge\limits_{j=1}^l \psi^\mathrm{arith}_{\wedge j} \vee \forall Y.\psi^\mathrm{plain}_{\wedge j}\\ 
\end{array}
$$
Obviously, given the structure of $\psi$ with the ensured variable separation, $\chi$ and $\chi'$ are equivalent, and $\chi'$ is indeed disentangled as claimed.   

While it may seem rather innocuous at the first glance, it should be noted that disentangling a formula may incur non-elementary blowup caused by the alternating transformations into DDNF and DCNF.  

\subsection{Vennification: Eliminating Delicate Variables}
The strategy for removing delicate variables is inspired by a very similar technique used for treating BAPA \cite{KunNguRin05}.
The basic idea is to replace them by number variables, exploiting the fact that Presburger atoms only talk about cardinalities. However, in order to properly account for set operations (and even the mere fact that sets can overlap in different ways), some pre-processing is required, where the interplay of sets is broken down to disjoint indivisible smallest subsets, usually referred to as \emph{Venn regions}.\footnote{Another way to think of this is that the cardinality of each of these Venn regions is the number of elements realizing one complete unary type.}

Given some finite set $\mathbf{U} \subseteq \Preds_1 \cup \{ \{\pr{a}\} \mid \pr{a}\in \Consts\} \cup \Vset$ of unary predicates, set variables, and constant-singleton sets, a \emph{Venn region over $\mathbf{U}$} is a set term $R$ that is an intersection containing for each $U \in \mathbf{U}$ either $U^c$ or $U$. For convenience, we will consider any two Venn regions whose set expressions contain the same $U$ and $U^c$ as syntactically equal.
We denote the set of all Venn regions over $\mathbf{U}$ by $\mathrm{VR}(\mathbf{U})$. Given some $R \in \mathrm{VR}(\mathbf{U})$ and a set term $S$ that is a Boolean combination of (a selection of) elements from $\mathbf{U}$ (short: $S$ is a \emph{set term over $\mathbf{U}$}), we write $R \models S$ to denote that $\int{R} \subseteq \int{S}$ for all structures $\mathfrak{A}$ and corresponding variable assignments $\nu$. We note that $R \models S$ can be easily decided by solving the corresponding entailment problem for Boolean formulas. Furthermore $R \not\models S$ holds exactly if $\int{R} \cap \int{S} = \emptyset$ for all structures $\mathfrak{A}$ and corresponding variable assignments $\nu$. Likewise, for any $R_1,R_2 \in \mathrm{VR}(\mathbf{U})$ with $R_1 \neq R_2$, we have that $\int{R_1} \cap \int{R_2} = \emptyset$ holds for all structures $\mathfrak{A}$ and corresponding variable assignments $\nu$. These insights allow us to equivalently re-write, for any set term $S$ over $\mathbf{U}$, the numerical term $\cnts{S}$ into 
$$ \sum_{{R \in \mathrm{VR}(\mathbf{U})}\atop{R \models S}} \cnts{R}.$$   

Given a sentence $\varphi$, in order to have direct ``access'' to the Venn regions over $\mathbf{U}(\varphi) := \Preds_1(\varphi) \cup \{ \{\pr{a}\} \mid \pr{a}\in \Consts(\varphi)\}$, we introduce one auxiliary unary predicate $\pr{V}_{\!R}$ for every $R \in \mathrm{VR}(\mathbf{U}(\varphi))$, and define the corresponding definitorial description as
the sentence
$$
\psi^\mathrm{VennDef}_\varphi:= \bigwedge_{R \in \mathrm{VR}(\mathbf{U}(\varphi))} \forall x. (\pr{V}_{\!R}(x) \Leftrightarrow \varphi_{R}(x)). 
$$
Then, $\varphi' := \varphi \wedge \psi^\mathrm{VennDef}_\varphi$ is a model-conservative extension of $\varphi$ and every model $\mathfrak{A}$ of $\varphi'$ and every variable assignment $\nu$ satisfy the correspondency 
$$  
\pr{V}_{\!R}^\mathfrak{A} = \int{\pr{V}_{\!R}} = \int{R}.
$$

After these preparations, we finally remove the delicate variables from $\varphi'$. 
The intuitive idea behind this technique is to maintain a ``histogram'' storing all sizes of all Venn Regions constructible from the unary Predicates and “currently active variables”. Thereby, the size of each Venn region $R$ is ``stored'' in a number variable called $\nv{k}_R$. Whenever a new (necessarily delicate) set variable $X$ is supposed to be introduced via an existential or universal quantifier, we instead ``refine'' the histogram of Venn regions: every prior Venn region is split in two (depending on membership or non-membership in $X$), giving rise to newly introduced number variables $\nv{k}_{R\cap X}$ and $\nv{k}_{R\cap X^c}$ which need to add up to $\nv{k}_{R}$. As discussed above, this numerical information is sufficient to provide enough information for evaluating the Presburger atoms.

Formally, this transformation is realized by replacing any of its arithmetic sub-sentences $\chi$ by
$$
\big(\exists \nv{k}_R\big)_{R \in \mathrm{VR}(\mathbf{U}(\varphi))}.\Big( 
\bigwedge_{R \in \mathrm{VR}(\mathbf{U}(\varphi))} \nv{k}_R\tteq \cnts{\pr{V}_{\!R}}\Big)  
\wedge \transf{}(\chi,\mathbf{U}(\varphi)),
$$
where $\transf{}$ is recursively defined by

$$
\begin{array}{@{\transf{}(}c@{,\mathbf{U})\ := \ }l}
\psi_1 \wedge \psi_2 & \transf{}(\psi_1,\mathbf{U}) \wedge \transf{}(\psi_1,\mathbf{U}) \\
\psi_1 \vee \psi_2 & \transf{}(\psi_1,\mathbf{U}) \vee \transf{}(\psi_1,\mathbf{U}) \\
\exists \nv{k}.\psi & \exists \nv{k}.\transf{}(\psi,\mathbf{U}) \\
\forall \nv{k}.\psi & \forall \nv{k}.\transf{}(\psi,\mathbf{U}) \\
\exists X.\psi & 
\begin{array}[t]{@{}r}
(\exists \nv{k}_{R \cap X}.\exists \nv{k}_{R \cap X^c})_{R \in \mathrm{VR}(\mathbf{U})}.
\big(\bigwedge_{R \in \mathrm{VR}(\mathbf{U})} \nv{k}_{R} \tteq \nv{k}_{R \cap X} \plus \nv{k}_{R \cap X^c}\big)\ \wedge \ \ \hfill~\\
\transf{}(\psi,\mathbf{U}\cup\{X\}) \\ 
\end{array}
\\
\forall X.\psi & 
\begin{array}[t]{@{}r}
(\forall \nv{k}_{R \cap X}.\forall \nv{k}_{R \cap X^c})_{R \in \mathrm{VR}(\mathbf{U})}.
\big(\bigvee_{R \in \mathrm{VR}(\mathbf{U})} \nv{k}_{R} \! \not\,\tteq \nv{k}_{R \cap X} \plus \nv{k}_{R \cap X^c}\big)\ \vee \ \ \hfill~\\
\transf{}(\psi,\mathbf{U}\cup\{X\}) \\ 
\end{array}
\\
\end{array}
$$

$$
\begin{array}{@{\transf{}(}c@{,\mathbf{U})\ := \ }l}
t_1 \ttleq t_2 &  \transf{}(t_1,\mathbf{U}) \ttleq \transf{}(t_2,\mathbf{U}) \\
t_1 \ttleqfin t_2 &  \transf{}(t_1,\mathbf{U}) \ttleqfin \transf{}(t_2,\mathbf{U}) \\
t_1 \plus t_2 & \transf{}(t_1,\mathbf{U}) \plus \transf{}(t_2,\mathbf{U}) \\
\pr{m}\,t &  \pr{m}\, \transf{}(t_2,\mathbf{U}) \\
\pr{n} & \pr{n}\\ 
\ttinfty & \ttinfty\\
\nv{k} & \nv{k}\\
\cnts{S} & \displaystyle\sum_{{R \in \mathrm{VR}(\mathbf{U})}\atop{R \models S}} \nv{k}_R\\ 
\end{array}
$$

After this transformation, we arrive at a modified $\varphi$ wherein all arithmetic subsentences are entirely free of individual and set variables.

\subsection{Eliminating Number Variables}

In the next step, we will eliminate all number variables from arithmetic subsentences.
To this end, we will use the well known quantifier elimination procedure from Presburger arithmetic.
As classical Presburger arithmetic is defined over the natural numbers (without $\infty$), this requires another preprocessing step. Intuitively, this preprocessing implements a case distinction for every variable introduced through quantification, as to if the variable is instatiated by a natural number proper, or $\infty$.  

Clearly, via standard transformation (extended to $\infty$ in the straightforward way), every remaining number term can be written in summarized, maximally simplified form of the shape $$\pr{n}_0 + \sum_{i=1}^k \pr{n}_i t^*_i \qquad \mbox{or} \qquad \ttinfty $$ where $n_0 \in \mathbb{N}$, each $n_i \in \mathbb{N}\setminus\{0\}$,  and every $t^*_i$ is either a number variable or of the form $\cnts{\pr{P}}$.
For any term $t$, let $\TNF{t}$ denote the corresponding equivalent term of this form.
For formulae $\varphi$, let $\TNF{\varphi}$ denote $\varphi$ with every maximal term $t$ occurring in $\varphi$ replaced by $\TNF{t}$. 



We now define the function $\nrm$ which simplifies arithmetic formulae as follows (where $t^*$, $t_1^*$, $t_2^*$ stand for number terms whose NF is not $\ttinfty$):

\noindent{\small
\begin{minipage}{0.5\textwidth}
\begin{align*}
\true & \mapsto \   \true \\
\false & \mapsto \   \false \\
\varphi \wedge \varphi' & \mapsto \   \nrm(\varphi) \wedge \nrm(\varphi')\\
\varphi \vee \varphi' & \mapsto \   \nrm(\varphi) \vee \nrm(\varphi')\\
\exists \nv{k}.\varphi & \mapsto \   \nrm(\TNF{\varphi[\nv{k}\mapsto\ttinfty]}) \vee \exists\nv{k}.\nrm(\varphi)\\
\forall \nv{k}.\varphi & \mapsto \   \nrm(\TNF{\varphi[\nv{k}\mapsto\ttinfty]}) \wedge \forall\nv{k}.\nrm(\varphi)\\
\end{align*}
\end{minipage}
\hspace{-0ex}
\begin{minipage}{0.5\textwidth}
\begin{align*}
t \ttleqfin \ttinfty & \mapsto \   \false\\
t \ttleq \ttinfty & \mapsto \   \true\\
\ttinfty \ttleqfin t^* & \mapsto \   \false\\
\ttinfty \ttleq t^* & \mapsto \  \false\\
t^*_1 \ttleqffin t^*_2 & \mapsto \   t^*_1 \ttleqfin t^*_2\\
\end{align*}
\end{minipage}}


This is an equivalent transformation when confining the attention to structures that map all occurring $\cnts{\pr{P}}$ to finite numbers. 
Building on that, we define the function $\nrmo$ that equivalently rewrites arbitrary arithmetic subsentences as follows.
For an arithmetic subsentence $\chi$ in NNF of the overall formula $\varphi$, let 
$$
\nrmo(\chi) := \bigvee_{\mathbf{U} \subseteq \mathbf{U}(\varphi)}\hspace{-1ex} 
\bigg(\hspace{-0.5ex} 
\big(\bigwedge_{S\in \mathbf{U}} \neg\mathrm{Fin}(S)\big) \wedge \big(\bigwedge_{S\in \mathbf{U}(\varphi) \setminus \mathbf{U}} \mathrm{Fin}(S)\big) \wedge \chi^\mathbf{U} \bigg)
$$
where
$$
\chi^\mathbf{U} := \nrm\Big(\TNF{\varphi\left[\cnts{S} \mapsto \infty\right]_{S{\in}\mathbf{U}}}\Big).
$$
We note that every $\chi^\mathbf{U}$ can be seen as a formula of standard Presburger arithmetic, as all still occurring $\cnts{\pr{P}}$ are constrained to be finite by the environment of $\chi^\mathbf{U}$ in $\nrmo(\chi)$. This allows us to apply classical quantifier elimination \cite{Pre29} to each $\chi^\mathbf{U}$, yielding a sentence $\mathrm{QE}(\chi^\mathbf{U})$ free of any number variables, but possibly containing modulo atoms next to classical Presburger atoms. Letting $\mathrm{QE}(\nrmo(\chi))$ denote $\nrmo(\chi)$ with every subsentence $\chi^\mathbf{U}$ replaced by $\mathrm{QE}(\chi^\mathbf{U})$, we finally obtain the new formula $\varphi$ by replacing all arithmetic subsentences $\chi$ by $\mathrm{QE}(\nrmo(\chi))$. This yields us a sentence entirely free of number variables, which is a positive Boolean combination of sentences of the following two kinds:
\begin{itemize}
    \item sentences containing (possibly negated) predicate, set, finiteness and modulo atoms, and
    \item unnegated classical Presburger atoms without variables.
\end{itemize}

Using distributivity, we can ensure that the obtained sentence has the form
$$
\bigvee_{i=1}^k \big( \varphi_i \wedge \bigwedge_{j=1}^{l_i} \chi_{i,j} \big), 
$$
where the $\varphi_i$ are CMSO sentences, i.e., MSO sentences with possible additional occurrences of modulo and finiteness atoms, while the $\chi_{i,j}$ are unnegated classical Presburger atoms without variables.

\subsection{De-Skolemization}

At this point, the formula (say, $\varphi^\mathrm{new}$) we have arrived at uses a signature $\Sig(\varphi^\mathrm{new})$ that is a superset of the signature $\Sig(\varphi^\mathrm{orig})$ used by the original formula  $\varphi^\mathrm{orig}$. The initial Skolemization has introduced auxiliary constant and unary predicate names, and more unary predicates were introduced in the course of ``Presburgerization'' and ``Vennization''. The formula $\varphi^\mathrm{new}$ is then a model-conservative extension of the original formula $\varphi^\mathrm{orig}$: every model of the obtained formula is a model of the original one and every model of the original one can be turned into a model of the current one by picking appropriate interpretations for the added signature elements.\footnote{In case the original formula had free variables, a similar statement holds where the models are accompanied by variable assignments.} In order to re-gain equivalence with $\varphi^\mathrm{orig}$ we therefore need to ``project away'' these additional signature elements by conceiving them as individual and set variables and existentially quantifying over them. An exception is made for the representatives of the formerly free variables, which will be free again. Formally
, we obtain
$$
\varphi^\text{free-again} \coloneqq 
\varphi^\mathrm{new}\big[\pr{P}_X \mapsto X\big]_{X \in \free(\varphi^\mathrm{orig})\cap\Vset}\big[\pr{c}_x \mapsto x\big]_{x \in \free(\varphi^\mathrm{orig})\cap\Vind}
$$
and, with $\Sig^\mathrm{diff}=\Sig(\varphi^\mathrm{new}) \setminus \Sig(\varphi^\mathrm{orig})$ we let
$$
\varphi^\text{de-skol} \coloneqq \big(\exists X_\pr{P}\big)_{\pr{P} \in \SigP^\mathrm{diff}}.\big(\exists x_\pr{c}\big)_{\pr{c} \in \SigC^\mathrm{diff}}.
\varphi^\text{free-again}\big[\pr{P} \mapsto X_\pr{P}\big]_{\pr{P} \in \SigP^\mathrm{diff}}\big[\pr{c} \mapsto x_\pr{c}\big]_{\pr{c} \in \SigC^\mathrm{diff}},
$$
and conclude that $\varphi^\text{de-skol}$ is equivalent to $\varphi^\mathrm{orig}$.
Now having in mind that by construction, $\varphi^\text{de-skol}$ has the shape 
$$
\exists X_1. \cdots \exists X_n.\exists x_1. \cdots \exists x_m.\bigvee_{i=1}^k \big( \varphi_i \wedge \bigwedge_{j=1}^{l_i} \chi_{i,j} \big), 
$$
where all $\chi_{i,j}$ are free of individual variables, we see that we can pull the quantifier block $\exists x_1. \cdots \exists x_m$ past the disjunction and then inside the $\varphi_i$.
Last not least, we can, for every $\cnts{\pr{P}}$ still occurring in some $\chi_{i,j}$, replace it by $\cnts{X_\pr{P}}$ (where $X_\pr{P}$ is a fresh set variable), add the conjunct $X_\pr{P}=\pr{P}$ to every $\varphi_i$ and put the quantifier $\exists X_\pr{P}$ in front of the whole formula.

With this we have – at last – arrived at a formula in GNF as specified in \cref{def:GNF}.
\newpage
\section{Properties of PMTA}

Given some $n\in\Nat$, a set $A$, and a tuple $a=(a_1,\ldots,a_n)\in A^n$, we let $\proj_i(a)=a_i$ for every $i\in[n]$, that is, $\proj_i$ denotes the \emph{$i^\mathit{th}$\! projection}. Projections are lifted to sets of tuples as usual.

By writing $\xi[\xi_1,\ldots,\xi_{k}]$ we denote the composition of a finite tree $\xi$ with trees $\xi_1,\ldots,\xi_{k}$ in the usual way, i.e., we equip $\xi$ with variables (two variables under each leaf and one variable under each position with only one successor) and we replace the $i^\mathrm{th}$ variable of $\xi$ by the tree $\xi_{i}$ -- under the assumption that $k$ corresponds to the number of variables $\xi$ is equipped with. If $\pos(\xi)=\emptyset$, we let $\xi[\xi_1]=\xi_1$.

\subsection{Closure Properties}

\closurepmta*
\begin{proof}
The result follows from the subsequent Lemmas \ref{union}, \ref{intersec}, and \ref{relab}.
\end{proof}

\begin{lemma}\label{union}
PMTA are closed under union.
\end{lemma}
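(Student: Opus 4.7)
The plan is to adapt the standard disjoint-union construction for nondeterministic tree automata, combining the counter components via concatenation of semilinear sets. Let $\A_1 = (\Q_1, \Xi_1, q_{I,1}, \Delta_1, \F_1, C_1)$ of dimension $s_1$ (with counter-set $D_1$) and $\A_2 = (\Q_2, \Xi_2, q_{I,2}, \Delta_2, \F_2, C_2)$ of dimension $s_2$ (with counter-set $D_2$) be two PMTA over the same terminal alphabet $\Sigma$; by renaming we may assume their state sets are pairwise disjoint. First, I would construct a PMTA $\A$ of dimension $s_1 + s_2$ with counter-set $D = (D_1 \cdot \{\vec{0}_{s_2}\}) \cup (\{\vec{0}_{s_1}\} \cdot D_2)$, state set $\Q = Q_{P,1} \cup Q_{P,2} \cup Q_{\omega,1} \cup Q_{\omega,2} \cup \{q_I\}$ with a fresh initial state $q_I$, and final state sets $\F = \F_1 \cup \F_2$.

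Next, I would define the transition relation of $\A$. For each transition $(q_{I,i}, \sigma, q', q'') \in \Delta_{\omega,i}$ (with $i \in \{1,2\}$), add $(q_I, \sigma, q', q'')$ to $\Delta_\omega$; for each transition $(q_{I,i}, (a, d), q', q'') \in \Delta_{P,i}$, add $(q_I, (a, \iota_i(d)), q', q'')$ to $\Delta_P$, where $\iota_1(d) = d \cdot \vec{0}_{s_2}$ and $\iota_2(d) = \vec{0}_{s_1} \cdot d$ denote the respective padding embeddings. All remaining transitions of $\A_1$ and $\A_2$ are inherited (with Parikh-labels re-coded through the corresponding $\iota_i$). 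Finally, the final constraint of $\A$ is $C = (C_1 \cdot \{\vec{0}_{s_2}\}) \cup (\{\vec{0}_{s_1}\} \cdot C_2)$, which is semilinear by \cref{lemma-semilin-presburger}.

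The correctness argument then proceeds in two directions. For $\L(\A_1) \cup \L(\A_2) \subseteq \L(\A)$: given $\xi \in \L(\A_i)$ witnessed by an accepting run of $\A_i$ on some $\zeta_i$, re-coding every counter label of $\zeta_i$ via $\iota_i$ produces a tree $\zeta \in T_\Xi^\omega$ such that the obvious adaptation of the run (replacing $q_{I,i}$ by $q_I$ at the root, leaving the rest untouched) is accepting in $\A$: the Muller condition still holds because $\F_i \subseteq \F$, and $\Psi(\zeta_\cnt) = \iota_i(\Psi((\zeta_i)_\cnt))$ lies in $C$ by construction. Conversely, in any accepting run of $\A$ on some $\zeta$, disjointness of $\Q_1$ and $\Q_2$ together with the shape of $\Delta$ forces the run, once it has left $q_I$ via its unique first transition, to descend exclusively into states of some $\Q_i$. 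Re-coding $\zeta$'s counter labels through $\iota_i^{-1}$ (well-defined on the relevant portion of $D$ since the ``other half'' is all zero) and replacing $q_I$ with $q_{I,i}$ yields an accepting run of $\A_i$, hence $\xi \in \L(\A_i)$.

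I do not expect any deep obstacle here; the main points requiring care are (i) the preservation of semilinearity of $C$, which is immediate from the closure properties stated in \cref{lemma-semilin-presburger}, and (ii) the clean interplay between the finite Parikh-tested prefix $\zeta_\cnt$ and the infinite Muller suffix, which is handled by the fact that the fresh initial state $q_I$ is of both Parikh and Muller type in the PMTA formalism and thus can seamlessly branch into either $\A_1$'s or $\A_2$'s behaviour. Disjointness of the original state sets guarantees that no spurious ``mixed'' runs arise, so the correspondence between accepting runs of $\A$ and accepting runs of either $\A_1$ or $\A_2$ is essentially bijective modulo the initial-state swap and the counter-label re-coding.
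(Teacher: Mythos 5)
Your overall architecture (fresh initial state, disjoint state sets, padding the counters so that both automata live in a common dimension $s_1+s_2$) is sound and close in spirit to the paper's construction, but there is a genuine flaw in how you combine the final constraints. With $C = (C_1\cdot\{\vec{0}_{s_2}\})\cup(\{\vec{0}_{s_1}\}\cdot C_2)$, the accumulated Parikh vector does not always reveal which automaton's transitions were used: if a run built from $\A_1$'s transitions accumulates the all-zero vector $\vec{0}_{s_1}$ in $\A_1$'s counters (possible whenever $\vec{0}_{s_1}\in D_1$, e.g.\ via zero increments), the padded total is $\vec{0}_{s_1+s_2}$, which lies in $\{\vec{0}_{s_1}\}\cdot C_2$ as soon as $\vec{0}_{s_2}\in C_2$. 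Your backward direction (``re-coding yields an accepting run of $\A_i$'') then breaks: the run of $\A$ is accepting via the $C_2$-half of $C$, but the reconstructed run of $\A_1$ has Parikh image $\vec{0}_{s_1}\notin C_1$ and is rejecting, and the tree need not be in $\L(\A_2)$ at all. Concretely, take $\A_1$ whose initial state has only $\Delta_P$-transitions, all with increment $(0)$, and $C_1=\{(1)\}$, so $\L(\A_1)=\emptyset$; take $\A_2$ with no transitions from its initial state but $(0)\in C_2$, so $\L(\A_2)=\emptyset$. Then $\L(\A_1)\cup\L(\A_2)=\emptyset$, yet your $\A$ accepts every tree on which $\A_1$ has a Muller-accepting run. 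The branch identity is simply not recoverable from an all-zero counter vector, so no constraint of this union-of-paddings shape can work.

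The paper avoids this by spending one extra dimension as an explicit tag: the root transition contributes $(i)$ in the last coordinate when it is derived from $\A_i$, all other counted positions contribute $(0)$ there, and the final constraint is $C_1\cdot\{(1)\}\cup C_2\cdot\{(2)\}$. Since the last coordinate of the accumulated vector then equals exactly $i$, the two halves of the constraint are genuinely kept apart (this is what the paper means by ``we keep apart the final constraints of $\A_1$ and $\A_2$ by using one additional dimension''). Adopting such a tag repairs your construction; the rest of your argument, including the treatment of the Muller condition and of runs with empty counted prefix, goes through. A minor further point: your state set omits $q_{I,1}$ and $q_{I,2}$, which is only safe if no inherited transition targets an original initial state; the paper keeps these states and merely duplicates their outgoing transitions at the fresh $q_I$.
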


\begin{proof}
Let $\A_i=(\Q_i,\Xi_i,q_{I\!,i},\Delta_i,\F_i,C_i)$ for $i\in[2]$ be two PMTA with $\Q_i= Q_{P,i}\cup Q_{\omega,i}\cup\{q_{I\!,i}\}$, $\Xi_i=(\Sigma\times D_i)\cup\Sigma$, and $\Delta_i=\Delta_{P,i}\cup \Delta_{\omega,i}$. W.l.o.g. we can assume that $\A_1$ and $\A_2$ are both of dimension $s$ for some $s\geq 1$ (semi-linear sets are closed under concatenation by \cref{lemma-semilin-presburger}), that $\Q_1$ and $\Q_2$ are disjoint, and that $q_I$ is a fresh state not in $\Q_1\cup\Q_2$. 

We construct the PMTA $\A=(\Q,\Xi,q_I,\Delta,\F,C)$ where
\begin{itemize}
    \item $\Q=\Q_1\cup\Q_2\cup\{q_I\}$,
    \item $\Xi=\Bigl(\Sigma\times (D_1\cdot\{(0),(1)\}\cup D_2\cdot\{(0),(2)\})\Bigr)\cup\Sigma$,
    \item $\F=\F_1\cup \F_2$,
    \item $C=C_1\cdot\{(1)\}\cup C_2\cdot\{(2)\}$, and
    \item $\Delta=\Delta_P\cup\Delta_\omega$ consists of the transitions
    \begin{itemize}
        \item $\Delta_P=\ \{(q,(\sigma,\vec{d}\cdot(0)),q_1,q_2)\mid(q,(\sigma,\vec{d}),q_1,q_2)\in\Delta_{P,1}\cup\Delta_{P,2}\}\ \cup$\\
        \hbox{}\quad\qquad$\{(q_I,(\sigma,\vec{d}\cdot(i)),q_1,q_2)\mid(q_{I\!,i},(\sigma,\vec{d}),q_1,q_2)\in\Delta_{P,i} \text{ for some } i\in[2]\}$ and
        \item $\Delta_\omega=\ \Delta_{\omega,1}\cup\Delta_{\omega,2}\cup\{(q_I,\gamma,q_1,q_2)\mid(q_{I\!,i},\gamma,q_1,q_2)\in\Delta_{\omega,i}\text{ for some } i\in[2]\}$.
    \end{itemize}
\end{itemize}    

$\L(\A_1)\cup\L(\A_2)\subseteq\L(\A)$: Let $\xi\in\L(\A_i)$ for some $i\in[2]$. Then there exists some $\zeta\in \To_{\Xi_i}$ s.t. $(\zeta)_\Sigma=\xi$ and there is an accepting run $\kappa_\zeta$ of $\A_i$ on $\zeta$. By def., $\kappa_\zeta(\varepsilon)=q_{I\!,i}$.

(1) Assume that $\zeta_\cnt\neq\emptyset$. Consider the tree $\zeta'\in\To_\Xi$ with
\[\zeta'(\rho)=
\begin{cases}
(\sigma,\vec{d}\cdot(i)) &\text{ if } \rho=\varepsilon, \zeta(\varepsilon)=(\sigma,\vec{d})\\
(\sigma,\vec{d}\cdot(0)) &\text{ if } \rho\in\pos(\zeta_\cnt)\setminus\{\varepsilon\}, \zeta(\rho)=(\sigma,\vec{d})\\
\zeta(\rho) &\text{ otherwise}
\end{cases}.\]
Clearly, $(\zeta')_\Sigma=(\zeta)_\Sigma=\xi$. Now consider the tree $\kappa'$ with $\kappa'(\varepsilon)=q_I$ and $\kappa'(\rho)=\kappa_\zeta(\rho)$ for each $\rho\in\{0,1\}^+$. By construction and as $\kappa_\zeta$ respects the transition relation of $\A_i$, $(q_I,\zeta'(\varepsilon),\kappa'(0),\kappa'(1))\in\Delta$ and $(\kappa'(\rho),\zeta'(\rho),\kappa'(\rho0),\kappa'(\rho1))\in\Delta$ for each $\rho\in\{0,1\}^+$. Thus, $\kappa'$ is a run of $\A$ on $\zeta'$. As $\Psi(\zeta'_\cnt)=\Psi(\zeta_\cnt)\cdot(i)$ and $\Psi(\zeta_\cnt)\in C_i$, $\Psi(\zeta'_\cnt)\in (C_i\cdot \{(i)\})\subseteq C$. Finally, as $\inf(\kappa'(\pi))=\inf(\kappa_\zeta(\pi))$ for each path $\pi$, we obtain that $\kappa'$ is accepting. Hence, $\xi\in\L(\A)$.

(2) Assume that $\zeta_\cnt=\emptyset$. Similar argumentation as above.\\

$\L(\A)\subseteq\L(\A_1)\cup\L(\A_2)$: Let $\xi\in\L(\A)$. Then there exists some $\zeta\in\To_\Xi$ s.t. $(\zeta)_\Sigma=\xi$ and there is some accepting run $\kappa_\zeta$ of $\A$ on $\zeta$. 

(1) Assume $\zeta_\cnt\neq\emptyset$. Then $\zeta(\varepsilon)=(\sigma,\vec{d}\cdot(i))$ for some $i\in[2]$, $\sigma\in\Sigma$, $\vec{d}\in D_i$, and, for each $\rho\in\pos(\zeta_\cnt)\setminus\{\varepsilon\}$, $\zeta(\rho)=(\sigma_\rho,\vec{d}_\rho\cdot(0))$ for some $\sigma_\rho\in\Sigma$, $\vec{d}_\rho\in D_i$. By construction of $\Delta$,
\begin{itemize}
    \item $(q_{I\!,i},(\sigma,\vec{d}),\kappa_\zeta(0),\kappa_\zeta(1))\in\Delta_{P,i}$,
    \item $(\kappa_\zeta(\rho),(\sigma_\rho,\vec{d}_\rho),\kappa_\zeta(\rho0),\kappa_\zeta(\rho1))\in\Delta_{P,i}$ for each $\rho\in\pos(\zeta_\cnt)\setminus\{\varepsilon\}$, and 
    \item $(\kappa_\zeta(\rho'),\zeta(\rho'),\kappa_\zeta(\rho'0),\kappa_\zeta(\rho'1))\in\Delta_{\omega,i}$ for each $\rho'\in\{0,1\}^*\setminus\pos(\zeta_\cnt)$.
\end{itemize}
It follows that the tree $\kappa'$ with $\kappa'(\varepsilon)=q_{I\!,i}$ and $\kappa'(\rho)=\kappa_\zeta$ for all $\rho\in\{0,1\}^+$ is a run of $\A_i$ on the tree $\zeta'$ obtained from $\zeta$ by cropping the last element of each $\vec{d}\in D_i\cdot\{(0),(i)\}$. Clearly, $(\zeta')_\Sigma=(\zeta)_\Sigma=\xi$. As $\Psi(\zeta_\cnt)=\vec{w}\cdot(i)$ and $\Psi(\zeta_\cnt)\in C_i\cdot\{(i)\}$, we obtain that $\Psi(\zeta'_\cnt)=\vec{w}\in C_i$. Moreover, $\inf(\kappa'(\pi))=\inf(\kappa_\zeta(\pi))$ for each path $\pi$. Thus, $\kappa'$ is accepting and $\xi\in\L(\A)$.

(2) Assume $\zeta_\cnt = \emptyset$. Similar argumentation as above.
\end{proof}

\begin{lemma}\label{intersec}
PMTA are closed under intersection.
\end{lemma}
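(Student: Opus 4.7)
The construction is the standard product automaton, enriched to track counter contributions for both components simultaneously. Let $\A_i = (\Q_i, \Xi_i, q_{I,i}, \Delta_i, \F_i, C_i)$ for $i \in [2]$ with $\Q_i = Q_{P,i} \cup Q_{\omega,i} \cup \{q_{I,i}\}$, $\Xi_i = (\Sigma \times D_i) \cup \Sigma$, and $\Delta_i = \Delta_{P,i} \cup \Delta_{\omega,i}$; let $s_i$ be the dimension of $\A_i$. We build $\A = (\Q, \Xi, q_I, \Delta, \F, C)$ whose states are essentially $(\Q_1 \setminus \{q_{I,1}\}) \times (\Q_2 \setminus \{q_{I,2}\})$ together with a fresh initial state $q_I$; a product state $(q_1,q_2)$ is placed into $Q_P$ iff at least one of $q_1, q_2$ lies in $Q_{P,i}$, and into $Q_\omega$ iff both lie in $Q_{\omega,i}$. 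This partition is disjoint, as required by the PMTA definition, and is consistent with the fact that once \emph{both} components have moved to their Muller part they stay there forever.

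The main design decision concerns the alphabet and transition relation, since $\A_1$ and $\A_2$ may ``freeze'' their counters at different positions on different paths. The plan is to let $D = (D_1 \cup \{\vec{0}_{s_1}\}) \times (D_2 \cup \{\vec{0}_{s_2}\})$, so that every symbol simultaneously carries a (possibly zero) contribution to both counter blocks. A transition $((q_1,q_2), (\sigma, (\vec d_1, \vec d_2)), (q_1', q_2'), (q_1'', q_2''))$ is placed in $\Delta_P$ exactly when $(q_1,\sigma,\vec d_1,q_1',q_1'')$ is a legal local step of $\A_1$ (with $\vec d_1 = \vec{0}_{s_1}$ enforced if $q_1 \in Q_{\omega,1}$, in which case $\A_1$ is effectively taking a $\Delta_{\omega,1}$-step on $\sigma$) and symmetrically for $\A_2$; a $\Delta_\omega$-transition on $\sigma$ exists in $\A$ precisely when both components take $\Delta_{\omega,i}$-steps on $\sigma$. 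This way the Parikh region of the product run is the union of the individual Parikh regions, which is prefix-closed because each is.

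For the acceptance conditions we apply the two standard product recipes: take
\[
\F = \{F \subseteq Q_\omega \mid \proj_1(F) \in \F_1 \text{ and } \proj_2(F) \in \F_2\}
\]
for the Muller part, and $C = C_1 \cdot C_2$ for the arithmetic part, which is semilinear by \cref{lemma-semilin-presburger}. Correctness of the Muller condition follows from the fact that $\inf$ commutes with projection on product runs; correctness of the Parikh condition follows because our choice of $\vec 0$-padding ensures that the extended Parikh map of the product Parikh-region, when projected onto the first $s_1$ coordinates, reproduces the Parikh vector collected by $\A_1$ over its (possibly smaller) Parikh region, and symmetrically for $\A_2$.

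The main obstacle, and the one requiring the most careful bookkeeping, is verifying the two inclusions $\L(\A) \subseteq \L(\A_1) \cap \L(\A_2)$ and $\L(\A_1) \cap \L(\A_2) \subseteq \L(\A)$ in the presence of the differing Parikh regions. The $\supseteq$-direction starts from trees $\zeta_1, \zeta_2$ witnessing $\xi \in \L(\A_i)$, forms a common label tree by taking the union of the two Parikh regions and padding the extra positions of the smaller one with $\vec{0}$ on the corresponding side, and then pairs up the two runs position-wise. The $\subseteq$-direction, conversely, strips the product label back to its two components (dropping positions where a given side carries only $\vec 0$ outside of its own Parikh state block) to recover accepting runs of $\A_1$ and $\A_2$. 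Once these conversions are set up carefully, the Muller and semilinear conditions transfer between the factorised and the product view, giving the desired equality of languages.
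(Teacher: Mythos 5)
Your construction is essentially identical to the paper's: a Cartesian product whose $Q_\omega$-part is $Q_{\omega,1}\times Q_{\omega,2}$ (everything else being a Parikh state), counter vectors concatenated with $\vec{0}$-padding on the side whose Parikh region has already ended, $\F$ given by the componentwise-projection condition, and $C=C_1\cdot C_2$; the paper achieves your $D_1\cup\{\vec 0_{s_1}\}$ trick by assuming w.l.o.g.\ that $(0)^{s_i}\in D_i$ and uses the pair $(q_{I,1},q_{I,2})$ as initial state instead of a fresh one, which are cosmetic differences. The correctness argument (merging the two Parikh regions into their union, splitting a product run back into its components, and using that $\inf$ commutes with projection) is likewise the same as in the paper.
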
 

\begin{proof}
Let, for each $i\in[2]$, $\A_i=(\Q_i,\Xi_i,q_{I\!,i},\Delta_i,\F_i,C_i)$ be a PMTA of dimension $s_i$ with $\Q_i= Q_{P,i}\cup Q_{\omega,i}\cup\{q_{I\!,i}\}$, $\Xi_i=(\Sigma\times D_i)\cup\Sigma$, and $\Delta_i=\Delta_{P,i}\cup \Delta_{\omega,i}$. W.l.o.g. we can assume that $(0)^{s_i}\in D_i$. Note that on a tree $\xi\in\L(\A_1)\cap\L(\A_2)$, $\A_1$ and $\A_2$ might not test the same initial part of $\xi$ arithmetically. In order to simulate this ``superposition'', we need to test a bigger initial part and let non-active counters ``idle'' using zero-increments $(0)^{s_i}$.

Thus, we define the Cartesian product $\A=(\Q,\Xi,(q_{I,1},q_{I,2}),\Delta,\F,C)$ where 
\begin{itemize}
    \item $\Q_\omega=Q_{\omega,1}\times Q_{\omega,2}$, and $\Q_P=(\Q_1\times\Q_2)\setminus\Q_\omega$,
    \item $\Xi=(\Sigma\times D_1\cdot D_2)\cup\Sigma$,
    \item $\F=\{F\subseteq \Q_\omega\mid \mathrm{pr}_1(F)\in\F_1 \text{ and } \mathrm{pr}_2(F)\in\F_2\}$,
    \item $C=C_1\cdot C_2$,
\end{itemize}
   and $\Delta=\Delta_P\cup\Delta_\omega$ where
\begin{itemize}
    \item $\Delta_P$ is the smallest set such that
    \begin{itemize}
        \item for each $(q,(\sigma,\vec{d}),q_1,q_2)\in\Delta_{P,1}$ and $(q',(\sigma,\vec{d}'),q_1',q_2')\in\Delta_{P,2}$ we have\\ $((q,q'),(\sigma,\vec{d}\cdot\vec{d}'),(q_1,q_1'),(q_2,q_2'))\in\Delta_P$,
        \item for each $(q,(\sigma,\vec{d}),q_1,q_2)\in\Delta_{P,1}$ and $(q',\sigma,q_1',q_2')\in\Delta_{\omega,2}$ we have\\ $((q,q'),(\sigma,\vec{d}\cdot (0)^{s_2}),(q_1,q_1'),(q_2,q_2'))\in\Delta_P$,
        \item for each $(q,\sigma,q_1,q_2)\in\Delta_{\omega,1}$ and $(q',(\sigma,\vec{d}'),q_1',q_2')\in\Delta_{P,2}$ we have\\ $((q,q'),(\sigma,(0)^{s_1}\cdot\vec{d}'),(q_1,q_1'),(q_2,q_2'))\in\Delta_P$, and
    \end{itemize}
    \item $\Delta_\omega=\{((q,q'),\gamma,(q_1,q_1'),(q_2,q_2'))\mid (q,\gamma,q_1,q_2)\in\Delta_{\omega,1}, (q',\gamma,q_1',q_2')\in\Delta_{\omega,2}\}$.
\end{itemize}

$\L(\A_1)\cap\L(\A_2)\subseteq\L(\A):$ Let $\xi\in\L(\A_1)\cap\L(\A_2)$. Then there are $\zeta_1\in \To_{\Xi_1}$, $\zeta_2\in \To_{\Xi_2}$ such that $(\zeta_1)_\Sigma=(\zeta_2)_\Sigma=\xi$ and for each $i\in[2]$ there exists an accepting run $\kappa_i$ of $\A_i$ on $\zeta_i$. Note that $\zeta_1$ and $\zeta_2$ might not have the same prefix enriched by Parikh vectors. Now consider the tree $\zeta\in\To_\Xi$ with
\[\zeta(\rho)=
\begin{cases}
(\sigma,\vec{d}\cdot\vec{d}') & \text{if } \rho\in\pos(\zeta_{1,\cnt})\cap\pos(\zeta_{2,\cnt}), \zeta_1(\rho)=(\sigma,\vec{d}), \zeta_2(\rho)=(\sigma,\vec{d}')\\
(\sigma,\vec{d}\cdot(0)^{s_2}) & \text{if } \rho\in\pos(\zeta_{1,\cnt})\setminus\pos(\zeta_{2,\cnt}), \zeta_1(\rho)=(\sigma,\vec{d}), \zeta_2(\rho)=\sigma\\
(\sigma,(0)^{s_1}\cdot\vec{d}') & \text{if } \rho\in\pos(\zeta_{2,\cnt})\setminus\pos(\zeta_{1,\cnt}), \zeta_1(\rho)=\sigma, \zeta_2(\rho)=(\sigma,\vec{d}')\\
\sigma & \text{if } \rho\notin\pos(\zeta_{1,\cnt})\cup\pos(\zeta_{2,\cnt}), \zeta_1(\rho)=\zeta_2(\rho)=\sigma
\end{cases}\] as well as the tree $\kappa\in\To_\Q$ given by $\kappa(\rho)=(\kappa_1(\rho),\kappa_2(\rho))$ for each $\rho\in\{0,1\}^*$. As $\kappa_1(\varepsilon)=q_{I,1}$ and $\kappa_2(\varepsilon)=q_{I,2}$, $\kappa(\varepsilon)=(q_{I,1},q_{I,2})$. Moreover, we obtain for each position $\rho$
\begin{itemize}
    \item if $\rho\in\pos(\zeta_{1,\cnt})\cap\pos(\zeta_{2,\cnt})$, then $(\kappa_i(\rho),\zeta_i(\rho),\kappa_i(\rho 0),\kappa_i(\rho1))\in\Delta_{P,i}$ for each $i\in[2]$. By construction of $\Delta_P$, $((\kappa_1(\rho),\kappa_2(\rho)),\zeta(\rho),(\kappa_1(\rho 0)\kappa_2(\rho 0)),(\kappa_1(\rho 1)\kappa_2(\rho 1)))\in\Delta_{P,i}$,
\end{itemize}
and for all other positions we can argue for the existence of the respective transition in a similar way. Hence, $\kappa$ is a run of $\A$ on $\zeta$. 

Furthermore, if $\pos(\zeta_{1,\cnt})\cup\pos(\zeta_{2,\cnt})\neq\emptyset$, we obtain by assuming $\zeta_\cnt(j)=(\sigma_j,\vec{d_j}\cdot\vec{d_j}')$
\[\Psi(\zeta_\cnt)=\sum_{j\in\pos(\zeta_\cnt)}\vec{d_j}\cdot\vec{d_j}'=(\sum_{j\in\pos(\zeta_\cnt)}\vec{d_j})\cdot(\sum_{j\in\pos(\zeta_\cnt)}\vec{d_j}')=\Psi(\zeta_{1,\cnt})\cdot\Psi(\zeta_{2,\cnt})\] where the last equality holds since $\pos(\zeta_{i,\cnt})\subseteq\pos(\zeta_\cnt)$ and $\vec{d}+(0)^{s_i}=\vec{d}$ for each $i\in[2]$ and $\vec{d}\in\Nat^{s_i}$. As $\Psi(\zeta_{1,\cnt})\in C_1$ and $\Psi(\zeta_{2,\cnt})\in C_2$, we obtain $\Psi(\zeta_{1,\cnt})\cdot\Psi(\zeta_{2,\cnt})\in C_1\cdot C_2$.

Finally, let $\pi$ be an arbitrary path. Then $\inf(\kappa_1(\pi))=F_1$ and $\inf(\kappa_2(\pi))=F_2$ for some $F_1\in\F_1$ and $F_2\in\F_2$. As $\proj_i(\kappa(\pi))=\kappa_i(\pi)$ and $\proj_i(\inf(\kappa(\pi)))=\inf(\proj_i(\kappa(\pi)))$, we obtain that $\proj_i(\inf(\kappa(\pi)))=\inf(\kappa_i(\pi))$ for each $i\in[2]$. Thus,  $\inf(\kappa(\pi))\in\F$ and $\kappa$ is accepting on $\zeta$. As $(\zeta)_\Sigma=\xi$, $\xi\in\L(\A)$.\\

$\L(\A)\subseteq\L(\A_1)\cap\L(\A_2):$ Similar. By construction of $\Delta$, each $\zeta$ and accepting run $\kappa\in\Run_\A(\zeta)$ are of a particular form: $\zeta$ can be decomposed into \[\hat{\zeta}[\zeta_1[\zeta^1_1,\ldots,\zeta^1_{m_1}],\ldots,\zeta_k[\zeta^k_1,\ldots,\zeta^k_{m_k}]]\] where $\hat{\zeta}$ comprises all positions that are assigned by $\kappa$ states from $Q_{P,1}\times Q_{P,2}$, each $\zeta_j$ consists completely of labels $(\sigma,\vec{d}\cdot(0)^{s_2})$ and the corresponding positions of $\kappa$ assign states from $Q_{P,1}\times Q_{\omega,2}$ (or similar the other way around), and each $\zeta^u_v$ has labels from $\Sigma$ and gets assigned states from $Q_\omega$. From this, we easily can reconstruct two trees $\zeta_1'$ and $\zeta_2'$ with $(\zeta_1')_\Sigma=(\zeta_2')_\Sigma=(\zeta)_\Sigma$ and accepting runs $\kappa_1'\in\Run_{\A_1}(\zeta_1')$ and $\kappa_2'\in\Run_{\A_2}(\zeta_2')$.
\end{proof}

\begin{lemma}\label{relab}
PMTA are closed under relabeling.
\end{lemma}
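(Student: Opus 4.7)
My plan is to construct the desired PMTA $\A'$ by keeping the state set, counter dimension, Parikh-vector domain $D$, initial state, acceptance family $\F$, and semilinear constraint $C$ of $\A$ unchanged, and modifying only the transition relation to read letters from $\Gamma$ instead of from $\Sigma$. Concretely, let $\A = (\Q, \Xi, q_I, \Delta, \F, C)$ with $\Xi = (\Sigma \times D) \cup \Sigma$ and $\Delta = \Delta_P \cup \Delta_\omega$, and let $\tau\colon \Sigma \to \Pow(\Gamma)$ be a relabeling. I set $\Xi' = (\Gamma \times D) \cup \Gamma$ and define
\[
\Delta'_P = \{(q,(\gamma,\vec{d}),q_1,q_2) \mid \gamma \in \tau(\sigma),\ (q,(\sigma,\vec{d}),q_1,q_2) \in \Delta_P\},
\]
\[
\Delta'_\omega = \{(q,\gamma,q_1,q_2) \mid \gamma \in \tau(\sigma),\ (q,\sigma,q_1,q_2) \in \Delta_\omega\},
\]
and take $\A' = (\Q, \Xi', q_I, \Delta'_P \cup \Delta'_\omega, \F, C)$.

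For the inclusion $\tau(\L(\A)) \subseteq \L(\A')$, I would take any $\xi \in \L(\A)$ together with a witness $\zeta \in T^\omega_\Xi$ satisfying $(\zeta)_\Sigma = \xi$ and an accepting run $\kappa_\zeta \in \Run_\A(\zeta)$, and consider an arbitrary $\xi' \in \tau(\xi)$. I would then build $\zeta' \in T^\omega_{\Xi'}$ by pointwise substitution: at each position $\rho$, replace the $\Sigma$-component of $\zeta(\rho)$ by $\xi'(\rho) \in \tau(\xi(\rho))$, preserving any $D$-component. The run $\kappa_\zeta$ remains valid on $\zeta'$ by construction of $\Delta'$, and since $D$-components are untouched we have $\zeta'_\cnt$ sharing the same positions and $D$-labels as $\zeta_\cnt$, so $\Psi(\zeta'_\cnt) = \Psi(\zeta_\cnt) \in C$. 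Paths yield identical infinity sets of states, so the Muller condition remains satisfied, yielding $\xi' \in \L(\A')$.

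The converse inclusion $\L(\A') \subseteq \tau(\L(\A))$ is symmetric. Given $\xi' \in \L(\A')$ with witnessing $\zeta' \in T^\omega_{\Xi'}$ and accepting $\kappa_{\zeta'} \in \Run_{\A'}(\zeta')$, I would invert the construction: at each position $\rho$, the transition $(\kappa_{\zeta'}(\rho), \zeta'(\rho), \kappa_{\zeta'}(\rho 0), \kappa_{\zeta'}(\rho 1)) \in \Delta'$ originated (by definition of $\Delta'$) from some transition in $\Delta$ whose $\Sigma$-letter $\sigma_\rho$ satisfies $\xi'(\rho) \in \tau(\sigma_\rho)$ (or its $\Gamma$-component analogue in $\Delta'_\omega$). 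Choosing one such $\sigma_\rho$ for each position and retaining the $D$-components defines a tree $\zeta \in T^\omega_\Xi$ on which $\kappa_{\zeta'}$ is an accepting run of $\A$, with $\Psi(\zeta_\cnt) = \Psi(\zeta'_\cnt) \in C$ and unchanged Muller behavior. Setting $\xi = (\zeta)_\Sigma$ gives $\xi \in \L(\A)$ and, by choice of the $\sigma_\rho$, $\xi' \in \tau(\xi)$.

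No substantive obstacle arises: the construction only rewires the alphabet dimension of transitions, while the counter dimension, final constraint, and acceptance family are preserved verbatim, so both language inclusions reduce to bookkeeping. This matches the authors' remark that ``relabeling is trivial'' in the sketch of Proposition~\ref{closure-pmta}.
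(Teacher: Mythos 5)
Your construction of $\A'$ is exactly the one the paper gives (the paper's proof is only a sketch consisting of this construction), and your verification of the two inclusions $\tau(\L(\A)) \subseteq \L(\A')$ and $\L(\A') \subseteq \tau(\L(\A))$ is the routine bookkeeping the authors omit. The proposal is correct and takes essentially the same approach as the paper.
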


\begin{proof}[Proof (sketch)]
Let $\Sigma, \Gamma$ be alphabets, let $\A=(\Q,\Xi,q_{I},\Delta,\F,C)$ be a PMTA with $\Q= Q_{P}\cup Q_{\omega}$, $\Xi=(\Sigma\times D)\cup\Sigma$, and $\Delta=\Delta_{P}\cup \Delta_{\omega}$,
and let $\tau\colon\Sigma\to\Pow(\Gamma)$ be a relabeling. We construct the WPT3 $\A'=(\Q,\Xi',q_{I},\Delta',\F,C)$ with $\Xi'=(\Gamma\times D)\cup\Gamma$ and $\Delta'=\Delta_{P}'\cup \Delta_{\omega}'$ where
\begin{itemize}
    \item $\Delta_P'=\{(q,(\gamma,\vec{d}),q_1,q_2)\mid(q,(\sigma,\vec{d}),q_1,q_2)\in\Delta_P,\gamma\in\tau(\sigma)\}$ and
    \item $\Delta_\omega'=\{(q,\gamma,q_1,q_2)\mid(q,\sigma,q_1,q_2)\in\Delta_\omega,\gamma\in\tau(\sigma)\}$.
\end{itemize}
\end{proof}

\subsection{Correspondence with \thegoodlogicheading}

\rectodef*

\begin{proof} This proof adopts (and slightly simplifies) the idea in \cite[Thm. 10]{KlaRue03} how to logically encode counter values and the semilinear set $C$ to our setting and uses the well-known construction to define the behavior of a Muller tree automaton by means of an MSO formula.

Let $\Sigma$ be an alphabet, let $D\subseteq\Nat^s$ be finite for some $s\geq 1$, and let $\A=(\Q,\Xi,q_I,\Delta,\F,C)$. W.l.o.g. assume $\Q=\{1,\ldots,r\}$ for some $r\geq1$. As usual, we use set variables $X_{1},\ldots,X_{r}$ to encode a run of $\A$ on a tree $\zeta$.

Let $K$ be the maximal value occurring in vectors from $D$. Each position $i \in [s]$ of a vector $d=(d_1,\ldots,d_s)\in D$ carries a value $d_i\leq K$, in the following represented by the set variable $Z^{d_i}_i$. Thus, the presence of a variable $Z^{d_i}_i$ at a position $x$ of $\zeta$ indicates that $d_i$ has to be added to the $i^\mathrm{th}$ counter to simulate the extended Parikh map $\Psi(\zeta)$.

Formally, let
$$\varphi\coloneqq\exists Z_1^0\ldots Z_1^K\ldots Z_s^K\exists X_{1}\ldots X_{r}\forall P. \varphi_{\mathrm{part}_Q}\land\varphi_{\mathrm{part}_Z}\land\varphi_{\mathrm{run}}\land\varphi_{\mathrm{acc}}\land\varphi_C$$
where
\begin{itemize}
    \item $\displaystyle\varphi_{\mathrm{part}_Q}\coloneqq\forall x.\bigl( \bigwedge_{p,q\in Q, p\neq q}\neg(X_p(x) \land X_q(x))\bigr)$
    \item $\displaystyle\varphi_{\mathrm{part}_Z}\coloneqq\bigwedge_{1\leq i\leq s}\bigl(\forall x.\bigwedge_{u,v\in \{0,\ldots,K\}, u\neq v}\neg(Z_i^u(x) \land Z_i^v(x))\bigr)$
    \item \(\varphi_{\mathrm{run}} \coloneqq \ \forall x.\Bigl((\varphi_\mathrm{root}(x)\Rightarrow X_{q_I}(x))\land (\varphi_{\mathrm{trans}_P}\lor \varphi_{\mathrm{trans}_\omega})\Bigr) \)
        \begin{itemize}
            \item $\varphi_{\mathrm{trans}_P}\coloneqq\displaystyle\bigvee_{(q,(\sigma,\vec{d}),q_0,q_1)\in\Delta_P}\hspace{-2em} \bigl(X_q(x) \land \pr{P}_\sigma(x)\land X_{q_0}(x0)\land X_{q_1}(x1) \land\bigwedge_{1\leq i\leq s}Z^{(\vec{d})_i}_i(x)\bigr)$
            \item $\varphi_{\mathrm{trans}_\omega}\coloneqq\displaystyle\bigvee_{(q,\sigma,q_0,q_1)\in\Delta_\omega}\hspace{-0.5em} \bigl(X_q(x) \land \pr{P}_\sigma(x) \land X_{q_0}(x0)\land X_{q_1}(x1)\land\bigwedge_{\substack{1\leq i\leq s\\ 0\leq d\leq K}}\neg Z^d_i(x)\bigr)$
        \end{itemize}
    \item $\displaystyle\varphi_{\mathrm{acc}}\coloneqq\varphi_\mathrm{path}(P)\Rightarrow \bigvee_{F\in\F}(\bigwedge_{q\in F} \varphi^\cap_\mathrm{inf}(X_q,P)\land\bigwedge_{q\notin F}\neg \varphi^\cap_\mathrm{inf}(X_q,P))$
\end{itemize}
and $\varphi_C$ encodes the Parikh condition as follows. By definition of $C$, there exist some $k,\ell\geq 1$ and linear polynomials $p_1,\ldots,p_k\colon\Nat^\ell\to\Nat^s$ such that $C$ is the union of the images of $p_1,\ldots,p_k$. 
Assume $p_g(m_1,\ldots,m_l)=\vec{v}_0+m_1\vec{v}_1+\ldots+m_l\vec{v}_l$
with $\vec{v}_j=(v_{j,1},\ldots,v_{j,s})$.
Then, using number variables $\nv{m}_1,\ldots,\nv{m}_l$, we encode $p_g$ by
$$\textstyle\varphi_{p_g}\coloneqq\exists \nv{m}_1\ldots \exists\nv{m}_l.\bigwedge_{i=1}^s \Bigl(\sum_{d=0}^K\pr{d}\cnts{Z_i^d}\tteqfin\pr{v}_{0,i}+\pr{v}_{1,i}\nv{m}_1+\ldots+\pr{v}_{l,i}\nv{m}_l\Bigr),$$
and let
$\textstyle\varphi_C\coloneqq\big(
\bigwedge_{i=1}^s
\bigwedge_{d=0}^K
\forall x. \neg Z_i^d(x)\big)\lor\varphi_{p_1}\lor\ldots\lor\varphi_{p_k}$. This finishes the construction of the overall sentence specifying $\L(\A)$, which can be easily shown to be in \thegoodlogic. 
\end{proof}

\parikhtorec*

\begin{proof} 
We assume w.l.o.g.~that $\chi$ is of the form $\pr{c} + \sum_{i\in[r]} \pr{c_i}\,\cnts{X_i} \ttleqfin \pr{d} + \sum_{j\in[k]} \pr{d_j}\,\cnts{Y_j}$ 
where all $X_i$ are pairwise distinct, and all $Y_j$ likewise.
Given a subset $\theta\subseteq\mathrm{free}(\chi)$, we denote by $|\theta|_X$ the number $\sum_{X_i\in\theta}c_i$ (and similar for $|\theta|_Y$). Then, assuming $\xi(\rho)=(\sigma^\xi_\rho,\theta^\xi_\rho)$, we get
\[\L(\chi)=\{\xi\in\To_{\Sigma_{\mathrm{free}(\chi)}}\mid  c+ \textstyle\sum_{{\rho\in\pos(\xi)}}|\theta^\xi_{\rho}|_X\leq d + \sum_{{\rho\in\pos(\xi)}} |\theta^\xi_{\rho}|_Y<\infty\}\]
and, by the condition $<\infty$, both sums have finite support.
 Thus, for each $\xi\in\L(\chi)$, there is a non-empty, finite, prefix-closed $Z\subset\{0,1\}^*$ that comprises all positions holding variable assignments and for which $\xi|_Z$ satisfies $\chi$; if $Z$ is a minimal such set, we denote $\xi|_Z$ in the following as $\xi_\chi$. This can be tested by a PMTA as follows.

Let $D=\{(i,j)\mid 0\leq i\leq \sum_{l\in[r]} c_l, 0\leq j\leq \sum_{l\in[k]} d_l\}$. We construct the PMTA $\A=(\{q_I,q_f\},\Xi,q_I,\Delta,\{\{q_f\}\},C)$ with $\Xi=(\Sigma_{\mathrm{free}(\chi)}\times D)\cup\Sigma_{\mathrm{free}(\chi)}$, $\Delta=\Delta_P\cup\Delta_\omega$ where
\begin{itemize}
    \item $\Delta_P=\{(q_I,\bigl((\sigma,\theta),(|\theta|_X,|\theta|_Y)\bigr),q,q)\mid (\sigma,\theta)\in\Sigma_{\mathrm{free}(\chi)}, q\in\{q_I,q_f\}\}$ and
    \item $\Delta_\omega=\{(q_f,(\sigma,\emptyset),q_f,q_f)\mid\sigma\in\Sigma\}$
\end{itemize}
and $C=\{(z_1,z_2)\mid \pr{c}+z_1\ttleqfin \pr{d}+z_2\}$.\footnote{Note that by \cref{lemma-semilin-presburger} we can use this description for a semilinear set.}

$\L(\chi)\subseteq\L(\A):$ Let $\xi\in\L(\chi)$. We prove that there exists a tree $\zeta\in\To_\Xi$ with $(\zeta)_{\Sigma_{\mathrm{free}(\chi)}}=\xi$ and an accepting run $\kappa$ of $\A$ on $\zeta$. Let $\zeta$ be such that, for each position $\rho$,
\[\zeta(\rho)=
\begin{cases}
\bigl((\sigma,\theta),(|\theta|_X,|\theta|_Y)\bigr) &\text{if } \rho\in\pos\{\xi_\chi\}, \xi(\rho)=(\sigma,\theta)\\
\xi(\rho) &\text{otherwise}
\end{cases}.\]
Clearly, $(\zeta)_{\Sigma_{\mathrm{free}(\chi)}}=\xi$ and $\pos(\zeta_\cnt)=\pos(\xi_\chi)$. Moreover, define $\kappa \in \To_{\{q_I,q_f\}}$ such that
\[
\kappa(\rho)=
\begin{cases}
q_I &\text{if } \rho\in\pos\{\xi_\chi\}\\
q_f &\text{otherwise}
\end{cases}\]
for each position $\rho$. Due to the construction of $\Delta$, $\kappa$ is compatible with the transitions of $\A$ in connection with $\zeta$. As $\varepsilon\in\pos\{\xi_\chi\}$ and, thus, $\kappa(\varepsilon)=q_I$, $\kappa$ is a run of $\A$ on $\zeta$.

It remains to argue that $\kappa$ is accepting. We can calculate, by assuming $\zeta(\rho)=(\sigma_\rho,\theta_\rho)$,
\[\Psi(\zeta_\cnt)=\;\sum_{\mathclap{\rho\in\pos(\zeta_\cnt)}}\;(\zeta_\cnt(\rho))_D=\;\sum_{\mathclap{\rho\in\pos(\zeta_\cnt)}}\;(|\theta_\rho|_X,|\theta_\rho|_Y)=(\;\sum_{\mathclap{\rho\in\pos(\zeta_\cnt)}}\;|\theta_\rho|_X,\;\sum_{\mathclap{\rho\in\pos(\zeta_\cnt)}}\;|\theta_\rho|_Y)\ .\]
As $\pos(\zeta_\cnt)=\pos(\xi_\chi)$ and we know from $\xi_\chi$ that $c+ \sum_{\rho\in\pos(\xi)}|\theta_{\rho}|_X\leq d + \sum_{\rho\in\pos(\xi)} |\theta_{\rho}|_Y$, we obtain $\Psi(\zeta_\cnt)\in C$. Finally, as $\xi_\chi$ is finite, we obtain for each path $\pi$ that $\inf(\kappa(\pi))=\{q_f\}$. Thus, $\kappa$ is accepting and $\xi\in\L(\A)$.

$\L(\A)\subseteq\L(\psi):$ Let $\xi\in\L(\A)$. As $\A$ recognizes all assignments of variables with transitions from $\Delta_P$, it is clear that each variable is counted finitely often in $\xi$. Moreover, $\Delta_P$ only contains transitions where the counter values fit to the variable occurrences of each label in $\xi$. Thus, as $\Psi(\zeta_\cnt)\in C$ for each $\zeta$ with $(\zeta)_{\Sigma_{\mathrm{free}(\chi)}}=\xi$ and $\Run_\A(\zeta)\neq\emptyset$, we obtain that $c+ \sum_{\rho\in\pos(\xi)}|\theta_{\rho}|_X\leq d + \sum_{\rho\in\pos(\xi)} |\theta_{\rho}|_Y$ also holds for $\xi_\chi$. Thus, $\xi\in\L(\psi)$.
\end{proof}

\deftorec*

\begin{proof}
Let $\varphi$ be an \thegoodlogic formula. By \cref{thm-tree-normal-form}, we can assume that $\varphi$ is in tree normal form, i.e., of the form $\exists X_1.\cdots\exists X_n. \bigvee_{i=1}^k \big( \varphi_i \wedge \bigwedge_{j=1}^{l_i} \chi_{i,j} \big)$,
where $\varphi_i$ are plain MSO sentences and the $\chi_{i,j}$ are (unnegated) Parikh constraints.. Then we proceed by induction on the (now restricted) structure of $\varphi$. 

For the induction base, we consider the following cases:
If $\varphi$ is an MSO-formula, then, by~\cite{Rab69}, $\L(\varphi)$ can be recognized by an MTA. As MTA are an instance of PMTA, we obtain that $\L(\varphi)$ is PMTA-recognizable. Now let $\varphi$ be a Parikh constraint. By \cref{parikh-to-rec} we obtain that $\L(\varphi)$ is PMTA-recognizable.

Now assume that $\varphi$ is of the form $\varphi_1\land\varphi_2$.  Then $\L(\varphi)=\L_{\mathrm{free}(\varphi)}(\varphi_1)\cap\L_{\mathrm{free}(\varphi)}(\varphi_2)$ and, by induction hypothesis, $\L(\varphi_1)$ and $\L(\varphi_1)$ both are PMTA-recognizable. But then also $\L_{\mathrm{free}(\varphi)}(\varphi_1)$ and $\L_{\mathrm{free}(\varphi)}(\varphi_1)$ are PMTA-recognizable\footnote{It can be shown by a standard construction involving a relabeling that, for each $\V\supseteq\mathrm{free}(\varphi)$, $\L(\varphi)$ is recognizable iff $\L_\V(\varphi)$ is recognizable.}, respectively. Finally, by \cref{closure-pmta}, $\L_{\mathrm{free}(\varphi)}(\varphi_1)\cap\L_{\mathrm{free}(\varphi)}(\varphi_2)$ is PMTA-recognizable. With a similar argumentation and with help of \cref{closure-pmta} we obtain that $\varphi_1\lor\varphi_2$ is PMTA-recognizable.

Now assume that $\varphi$ is of the form $\exists X.\varphi_1$. In the usual way, we define the relabeling $\tau\colon\Sigma_{\free(\varphi_1)\cup\{X\}}\to\Pow(\Sigma_{\free(\varphi)})$ given by $\tau(\sigma,\theta)=\{(\sigma,\theta\cap\free(\varphi))\}$ and observe that $\L(\varphi)=\tau(\L_{\free(\varphi_1)\cup\{X\}}(\varphi_1))$. By induction hypothesis, using the same argumentation as above, and by \cref{closure-pmta}, we obtain that $\exists X.\varphi_1$ is PMTA-recognizable.
\end{proof}

\subsection{Parikh-Muller Automata on Words correspond to Reachability-Regular Parikh Automata}
The set of all infinite words over alphabet $\Sigma$ will be denoted by $\Sigma^\omega$; we use $\Sigma^*$ to denote the set of all finite words over $\Sigma$.

We will use the projections $\cdot_\Sigma$ and $\cdot_D$ as well as the extended Parikh map $\Psi$ with their obvious restrictions to word domains.

\begin{definition}[Parikh $\omega$-Word Automaton] 
Let $\Sigma$ be an alphabet, let $s \in \mathbb{N}\setminus\{0\}$, let $D\subseteq\Nat^s$ be finite, and denote $(\Sigma\times D)\cup\Sigma$ by $\Xi$. A Parikh Word Automaton (of \emph{dimension} $s$) is a tuple $\A=(\Q,\Xi,q_I,\Delta,\F,C)$ where $\Q=Q_P\cup Q_\omega\cup\{q_I\}$ is a finite set of \emph{states} with $Q_P$,$Q_\omega$ disjoint and $q_I$ being the \emph{initial state}, $\Delta=\Delta_P\cup\Delta_\omega$ is the \emph{transition relation} with
 \[\Delta_P\subseteq (Q_P\cup\{q_I\})\times(\Sigma\times D)\times \Q \quad\text{ and }\quad \Delta_\omega\subseteq (Q_\omega\cup\{q_I\})\times\Sigma\times Q_\omega,\]
 $\F$ is the \emph{acceptance condition} given by either
 \begin{itemize}
     \item a set of \emph{final state sets} $\F\subseteq 2^{Q_\omega}$ (Muller acceptance) or
     \item a set of \emph{final states} $\F\subseteq Q_\omega$ (Büchi acceptance),
 \end{itemize}
  and $C\subseteq\Nat^s$ is  a semilinear set named \emph{final constraint}.
\end{definition}

A Parikh $\omega$-word automaton that uses a Muller acceptance condition will be called a \emph{Parikh-Muller word automaton} (PMWA); if it instead uses a Büchi acceptance condition it will be referred to as \emph{Parikh-Büchi word automaton} (PBWA). We note that, by choosing $\Delta_P=\emptyset$, we reobtain the concept of a Muller (respectively Büchi) word automaton.

\begin{definition}[Semantics of PMWA and PBWA]
A \emph{run} of $\A$ on a word $w\in \Xi^\omega$ is a word $\kappa_w\in Q^\omega$ whose initial position carries
$q_I$ 
and which respects $\Delta$ jointly with $w$. By definition of $\Delta$, if a run exists, then $w$ is of the form $w_\cnt w'$ with $w_\cnt\in(\Sigma\times D)^*$ and $w'\in\Sigma^\omega$.
A run $\kappa_w$ is \emph{accepting} if
 \begin{enumerate}
    \item either $\inf(\kappa_w)\cap \F\neq\emptyset$ (Büchi acceptence) or $\inf(\kappa_w)\in \F$ (for Muller acceptance) and
     \item if $w_\cnt\neq\varepsilon$, then $\Psi(w_\cnt)\in C$.
     
 \end{enumerate}
Note that, by the first condition, $\kappa_w$ being accepting implies finiteness of $w_\cnt$ and, thus, well-definedness of the sum in $\Psi(w_\cnt)$.
 The set of all accepting runs of $\A$ on $w$ will be denoted by $\Run_\A(w)$. Then, the \emph{word language of} $\A$, denoted by $\L(\A)$, is the set \[\L(\A)=\{u\in {\Sigma}^\omega\mid \exists w\in \Xi^\omega \text{ with } \Run_\A(w)\neq\emptyset \text{ and } w_\Sigma=u\}\,.\]
\end{definition}

Now we recall the model of reachability-regular Parikh automata from Grobler et al. \cite{GroSabSie23a}.

\begin{definition}[Reachability-Regular Parikh Automata]
    Let $\Sigma$ be an alphabet and $s\in\Nat\setminus\{0\}$. A reachability-regular Parikh automaton (RRPA) of dimension $s$ is a tuple $\A=(Q,\Sigma,q_0,\Delta,F,C)$ where $Q$ is a finite set of states, $q_0\in Q$ is the initial state, $F\subseteq Q$ is the set of final states, $\Delta\subseteq Q\times\Sigma\times\Nat^s\times Q$ is a finite set of transitions and $C\subseteq\Nat^s$ is  a semilinear set.
\end{definition}

\begin{definition}[Semantics of RRPA]
    A run of $\A$ on an infinite word $w=w_1 w_2 w_3...$ is an infinite sequence $\tau=r_1 r_2 r_3...$ of transitions $r_i=(p_{i-1},w_i,d_i,p_i)$ such that $p_0=q_0$. We say that $\tau$ is accepting if there is an $i\geq 1$ such that $p_i\in F$ and $d_1+\ldots +d_i\in C$, and there are infinitely many $j\geq 1$ such that $p_j\in F$. The set of all accepting runs of $\A$ on $w$ will be denoted by $\Run_\A(w)$. Then, the \emph{word language of} $\A$, denoted by $\L(\A)$, is the set \[\L(\A)=\{w\in {\Sigma}^\omega\mid \Run_\A(w)\neq\emptyset\}\,.\]
\end{definition}

The goal is to show the following theorem:

\begin{theorem}
    Let $\L\subseteq\Sigma^\omega$. Then the following are equivalent:
    \begin{enumerate}
        \item $\L=\L(\A)$ for some PMWA $\A$,
        \item $\L=\L(\A)$ for some PBWA $\A$, and
        \item $\L=\L(\A)$ for some RRPA $\A$.
    \end{enumerate}
\end{theorem}

\begin{proof}
    The theorem immediately follows from \cref{rrpa_to_PBWA}, \cref{PBWA_to_rrba}, and \cref{PBWA_and_PMWA}.
\end{proof}

\subsubsection*{From RRPA to PBWA}

\begin{lemma}\label{rrpa_to_PBWA}
  Given an RRPA $\A$, we can construct a PBWA $\A'$ such that $\L(\A)=\L(\A')$.  
\end{lemma}
\begin{proof}
    Let $\A=(Q,\Sigma,q_0,\Delta,F,C)$ be an RRPA. We construct a PBWA $\A'$ with the following intuition. We duplicate the state space of $\A$ to obtain $Q_P$ and $Q_\omega$. Everytime $\A'$ reaches a final state from $\A$ it can nondeterministically choose if it stays in $Q_P$ and proceeds with counting or if it goes into $Q_\omega$ and reads the remaining word without counting.

    Formally, we let $D=\{d\in\Nat^s\mid (q,w,d,q')\in\Delta\}$ and define $\A'=(Q',(\Sigma\times D)\cup\Sigma,q_I,\Delta_P\cup\Delta_\omega,F',C)$ where 
    \begin{itemize}
        \item $Q'=Q_P\cup Q_\omega\cup\{q_I\}$ with
        \begin{itemize}
            \item $Q_P=\{q^P\mid q\in Q\}$ and
            \item $Q_\omega=\{q^\omega\mid q\in Q\}$,
        \end{itemize}
        \item $\Delta_P=\Delta_I\cup\Delta'$ with
        \begin{itemize}
            \item $\Delta_I=\{(q_I,(\sigma,d),q^P)\mid (q_0,\sigma,d,q)\in\Delta\}\cup\{(q_I,(\sigma,d),q^\omega)\mid (q_0,\sigma,d,q)\in\Delta,q\in F\}$, and
            \item $\Delta'=\{(q^P,(\sigma,d),p^P)\mid(q,\sigma,d,p)\in\Delta\}\cup\{(q^P,(\sigma,d),p^\omega)\mid(q,\sigma,d,p)\in\Delta,p\in F\}$,
        \end{itemize}
        \item $\Delta_\omega=\{(q^\omega,\sigma,p^\omega)\mid(q,\sigma,d,p)\in\Delta\}$,
        \item $F'=\{q^\omega\mid q\in F\}$.
    \end{itemize}
    
    $\L(\A)\subseteq\L(\A')$: Let $\tau=r_1 r_2 r_3\ldots$ with $r_i=(p_{i-1},w_i,d_i,p_i)$ be an accepting run of $\A$ on $w=w_1 w_2\ldots$. Then there is a position $i\geq 1$ such that $p_i\in F$ and $d_1+\ldots+d_n\in C$. Now let $\kappa=q_I\, p_1^P\dots p_{i-1}^P\, p_i^\omega\, p_{i+1}^\omega\ldots$. By construction, $\kappa$ is a run of $\A'$ on $w'=(w_1,d_1)\ldots(w_i,d_i)w_{i+1}w_{i+2}\ldots$ . Moreover, as $\Psi((w_1,d_1)\ldots(w_i,d_i))=d_1+\ldots+d_i\in C$ and $p_j^\omega\in F$ iff $p_j\in F$, we obtain that $\kappa$ is accepting. Finally, as $w'_\Sigma=w$, $w\in\L(\A')$.

    $\L(\A')\subseteq\L(\A)$: Similar argumentation as above, ``in reverse''.
\end{proof}

\subsubsection*{From PBWA to RRPA}

\begin{lemma}\label{PBWA_to_rrba}
  Given a PBWA $\A$, we can construct an RRPA $\A'$ such that $\L(\A)=\L(\A')$.  
\end{lemma}

\begin{proof}
    Let $\A=(\Q,\Xi,q_I,\Delta,\F,C)$ be a PBWA of dimension $s$. We construct a RRBA $\A'$ where we enrich the set of final states by copys of those states that are visited by $\A$ when entering $Q_\omega$. In order to keep apart runs that use transitions from $\Delta_P$ and runs which directly start in $\Delta_\omega$, we use one additional counter. The transitions of $\Delta_\omega$ are simulated by $\A'$ by using the zero-vector $\Vec{0}=(0)^{s+1}$. Formally, let $\A'=(Q',\Sigma,q_I,\Delta',F',C')$ where
    \begin{itemize}
        \item $Q'=Q\cup\{\hat{q}\mid q\in Q_\omega\}$,
        \item $F'=\F\cup\{\hat{q}\mid q\in Q_\omega\}$,
        \item $\Delta'=\Delta_P'\cup\Delta_\omega'$ with
        \begin{itemize}
            \item $\Delta_P'=\{(q,\sigma,d\cdot(0),q')\mid(q,(\sigma,d),q')\in\Delta_P,q'\notin Q_\omega\}\cup\\ \{(q,\sigma,d\cdot(1),\hat{q'})\mid(q,(\sigma,d),q')\in\Delta_P,q'\in Q_\omega\}$ and
            \item $\Delta_\omega'=\{(\hat{q},\sigma,\Vec{0},q')\mid (q,\sigma,q')\in \Delta_\omega\}\cup\{(q,\sigma,\Vec{0},q')\mid (q,\sigma,q')\in \Delta_\omega\}$, and
        \end{itemize}
        \item $C'=C\cdot(1)\cup C_0$ with $C_0=\{\Vec{0}\}$ if $(q_I,\sigma,q)\in \Delta_\omega$ for some $\sigma,q$ and $C_0=\emptyset$ otherwise.
    \end{itemize}

    $\L(\A)\subseteq\L(\A')$: Let $w\in\L(\A)$. Then there is some $u\in\Xi^\omega$ with $(u)_\Sigma=w$ and there is some accepting run $\kappa$ of $\A$ on $u$. Assume that $\kappa=q_I q_1 q_2\ldots$ .

    If $u_\cnt\neq \emptyset$, there is some $i\geq 1$ such that $q_j\in Q_\omega$ for each $j\geq i$ and $q_l\in Q_P$ for each $1\leq l<i$. Moreover, $u$ is of the form $(\sigma_1,d_1)\ldots(\sigma_i,d_i)\sigma_{i+1}\sigma_{i+2}\ldots$ . Let $\tau=r_1 r_2\ldots$ such that for each $k\geq 1$
    \[r_k=\begin{cases}(q_I,\sigma_1,d_1\cdot(0),q_1) &\text{if } k=1,i\neq 1\\
    (q_I,\sigma_1,d_1\cdot(1),\hat{q_1}) &\text{if } k=i=1\\
    (q_{k-1},\sigma_k,d_k\cdot(0),q_k) &\text{if } 1<k< i\\
    (q_{i-1},\sigma_i,d_i\cdot(1),\hat{q_i}) &\text{if } k=i, i>1\\
    (\hat{q_{i}},\sigma_{i+1},\vec{0},q_{i+1}) &\text{if } k= i+1\\
    (q_{k-1},\sigma_k,\vec{0},q_k) &\text{if } k>i+1\,.
    \end{cases}\]
    By construction, each $r_k$ is in $\Delta'$. It is not hard to see that $\tau$ is an accepting run of $\A'$ on $w$: as $d_1\cdot (0)+\ldots +d_i\cdot(1)=\Psi(u_\cnt)\cdot(1)$, $\Psi(u_\cnt)\cdot(1)\in C\cdot(1)$, and $\hat{q_i}\in F'$, the first condition is satisfied. Moreover, as in $r_{i+2}r_{i+3}\ldots$ the same states occur infinitely often as in the sequence $q_{i+2}q_{i+3}\ldots$, by construction of $F'$ the Büchi condition is satisfied as well. 

    If, on the other hand, $u_\cnt= \emptyset$, then $q_i\in Q_\omega$ for each $i\geq 1$ and $u=w=\sigma_1\sigma_2\ldots$ for some $\sigma_i\in\Sigma$. Then construct $\tau=r_1 r_2 \ldots$ with $r_1=(q_I,\sigma_1,\vec{0},q_1)$ and $r_j=(q_{j-1},\sigma_j,\vec{0},q_j)$ for each $j>1$. By construction, $r_i\in\Delta'$ for each $i\geq 1$. Moreover, as $\tau$ uses the same states infinitely often as $\kappa$, by construction of $F'$ the Büchi acceptance condition is satisfied. Finally, as the extended Parikh image is $\vec{0}$ at each position of $\tau$ and, by construction, $\vec{0}\in C_0$, $\tau$ is an accepting run of $\A'$ on $w$.

     $\L(\A')\subseteq\L(\A)$: As the extended Parikh image of $\A'$ does not change anymore after entering the states of $Q_\omega$, we can, as above, construct from each accepting run $\tau$ of $\A'$ a corresponding accepting run $\kappa$ of $\A$.
\end{proof}

\subsubsection*{From PBWA to PMWA and back}

By using standard constructions to transform a Büchi automaton into a Muller automaton and vice versa, we obtain the expressive equivalence of PBWA and PMWA. As any Büchi acceptence condition can be immediately expressed by a Muller acceptance condition, the first direction is straight forward.

\begin{lemma}
    For each PBWA $\A$ there is a PMWA $\A'$ such that $\L(\A)=\L(\A')$.
\end{lemma}

\begin{proof}
    Let $\A=(\Q,\Xi,q_I,\Delta,\F,C)$ be a PBWA. In the usual way (cf. \cite[Prop. 5.3]{thomasLanguagesAutomataLogic1997}), we construct the PMWA $\A'=(\Q,\Xi,q_I,\Delta,\F',C)$ where $\F'=\{F'\subseteq Q_\omega\mid F'\cap\F\neq\emptyset\}$. Then $\L(\A)=\L(\A')$.
\end{proof}

For the other direction, we decompose a given PMWA into a number of Parikh (word) automata and Muller (word) automata (compare with the proof of \cref{pmtaempt}), transform the Muller automata into equivalent Büchi automata (which well-known to be possible in the word case), and finally reverse our decomposition, obtaining an equivalent PBWA.

\begin{lemma}
    For each PMWA $\A$ there is a PBWA $\A'$ such that $\L(\A)=\L(\A')$.
\end{lemma}

\begin{proof}
    Let $\A=(\Q,\Xi,q_I,\Delta,\F,C)$ with $Q=Q_P\cup Q_\omega\cup\{q_I\}$ and $\Delta=\Delta_P\cup\Delta_\omega$ be a PMWA. Let $Q_{\omega,r}=\{q\in Q_\omega\mid (p,(\sigma,d),q)\in\Delta_P\text{ for some } p\in Q_P,(\sigma,d)\in\Sigma\times D\}$. For each $q\in Q_{\omega,r}$ we construct a pair $(\A^P_q,\A^M_q)$ consisting of a Parikh word automaton\footnote{For a definition of Parikh word automata (over finite words), we refer the reader to \cite[page 4]{GuhJecLeh22}.} $\A^P_q$ and a Muller word automaton $\A^M_q$ as follows:
    \begin{itemize}
        \item We let $\A^P_q=(Q_P\cup\{q_I,q\},\Sigma\times D,q_I,\Delta_q,\{q\},C)$ with $\Delta_q=\{(p,(\sigma,d),p')\in\Delta_P\mid p'\notin Q_\omega \text{ or }p'=q\}$.
        \item We let $\A^M_q=(Q_\omega,\Sigma,q,\Delta_\omega,\F)$.
    \end{itemize}
    Moreover, we construct the Muller word automaton $\A^M_{q_I}=(\Q_\omega\cup\{q_I\},\Sigma,q_I,\Delta_\omega,\F)$.

    \begin{claim}\label{claim1}
        $\L(\A)=(\bigcup_{q\in Q_{\omega,r}}\L(\A^P_q)\cdot\L(\A^M_q))\cup\L(\A^M_{q_I})$
    \end{claim}

    \begin{proof}
        $\subseteq$: Let $w\in\L(\A)$. Then there is some $u\in\Xi^\omega$ with $(u)_\Sigma=w$ and $\kappa\in\Run_\A(u)$. If $u_\cnt=\emptyset$, by construction, $\kappa\in\Run_{\A^M_{q_I}}(u)$. If $u_\cnt\neq\emptyset$, $\kappa$ is of the form $q_I q_1 q_2\ldots$ and there is some $i\geq 1$ such that $q_j\in Q_\omega$ for each $j\geq i$ and $q_l\in Q_P$ for each $1\leq l<i$. Moreover, $u$ is of the form $(\sigma_1,d_1)\ldots(\sigma_i,d_i)\sigma_{i+1}\sigma_{i+2}\ldots$ . By construction of $\A^P_{q_i}$, $\kappa'=q_I\ldots q_i$ is a run of $\A^P_{q_i}$ on $(\sigma_1,d_1)\ldots(\sigma_i,d_i)$ and, as $d_1+\ldots+d_i$ is in $C$, $\kappa'$ is accepting. Thus, $\sigma_1\ldots\sigma_i\in\L(\A^P_{q_i})$. Similarly, by construction of $\A^M_{q_i}$, $q_i q_{i+1}\ldots$ is an accepting run of $\A^M_{q_i}$ on $\sigma_{i+1}\sigma_{i+2}\ldots$. Thus, $w\in \L(\A^P_{q_i})\cdot\L(\A^M_{q_i})$.

        $\supseteq$: If $w\in\L(\A^M_{q_I})$ we directly obtain that $w\in\L(\A)$ as the transitions of $\A^M_{q_I}$ are a subset of the transitions of $\A$. Now let $w\in\L(\A^P_q)\cdot\L(\A^M_q)$ for some $q\in Q_{\omega,r}$, i.e., $w=w_1w_2$ for some $w_1\in\L(\A^P_q)$ and $w_2\in\L(\A^M_q)$. As each run $\kappa_1$ of $\A^P_q$ on some $u_1$ with $(u_1)_\Sigma=w_1$ ends in the state $q$ and each run $\kappa_2$ of $\A^M_q$ on $w_2$ starts in $q$, we can easily combine $\kappa_1$ and $\kappa_2$ to a run $\kappa$ of $\A$ on $u_1w_2$.
    \end{proof}

    Now let $q\in Q_{\omega,r}\cup\{q_I\}$. Let $\A^B_q=(Q_{B,q},\Sigma,p_{B,q},\Delta_{B,q},\F_{B,q})$ be the Büchi word automaton equivalent to $\A^M_q$,  obtained by the standard construction from Muller to Büchi word automata \cite[Prop. 5.3]{thomasLanguagesAutomataLogic1997}. Clearly, $\L(\A^P_q)\cdot\L(\A^M_q)=\L(\A^P_q)\cdot\L(\A^B_q)$ for each $q\in Q_{\omega,r}$.

    In the next step, for each $q\in Q_{\omega,r}$ we compose $(\A^P_q,\A^B_q)$ into one PBWA $\hat{\A}_q$. W.l.o.g. we assume that $(Q_P\cup\{q_I\})\cap Q_{B,q}=\emptyset$ and we let $\hat{\A}_q=(Q_P\cup Q_{B,q}\cup\{q_I\},\Xi,q_I,\Delta_{PB},\F_{B,q},C)$ where $\Delta_{PB}=\{(p,(\sigma,d),p')\in\Delta_P\mid p'\notin Q_\omega\}\cup\{(p,(\sigma,d),p_{B,q})\mid (p,(\sigma,d),p')\in\Delta_P,p'=q\}\cup \Delta_{B,q}$. 

    \begin{claim}\label{claim2}
        $\L(\A^P_q)\cdot\L(\A^B_q)=\L(\hat{\A}_q)$
    \end{claim}

    \begin{proof}
        $\subseteq$: Let $w\in\L(\A^P_q)\cdot\L(\A^B_q)$. Then $w=w_1w_2$ with $w_1\in\L(\A^P_q)$ and $w_2\in\L(\A^B_q)$, there is some $u\in(\Sigma\times D)^*$ with $(u)_\Sigma=w_1$ and $\kappa_1\in\Run_{\A^P_q}(u)$, and there is some $\kappa_2\in\Run_{\A^B_q}(w_2)$. By inspecting the construction of $\A^P_q$ and $\A^B_q$, we obtain that $\kappa_1$ is of the form $q_Iq_1\ldots q_i$ with $q_i=q$ and $\kappa_2$ is of the form $p_1 p_2\ldots$ with $p_1=p_{B,q}$. Then we construct the run $\kappa=q_Iq_1\ldots q_{i-1}p_1p_2\ldots$ which, by definition of $\hat{\A}_q$, respects $\Delta_{PB}$ jointly with $uw_2$. Moreover, as $\kappa$ inherits the extended Parikh image from $\kappa_1$ and the states occurring infinitely often from $\kappa_2$, $\kappa$ is accepting.


$\supseteq$: Let $w\in \L(\hat{\A}_q)$. Then, similar to above, there must be a decomposition $w = w_1w_2$ and a run $\kappa\in\Run_{\hat{\A}_q}(uw_2)$ for some $u\in(\Sigma\times D)^*$ with $(u)_\Sigma=w_1$. This run can be easily decomposed into a run $\kappa_1\in\Run_{\A^P_q}(u)$ (with the same extended Parikh image), and a run $\kappa_2\in\Run_{\A^B_q}(w_2)$ (with the same states occurring infinitely often), witnessing $w_1 \in \L(\A^P_q)$ as well as $w_2 \in \L(\A^B_q)$.
    \end{proof}

    Moreover, we construct the PBWA $\hat{\A}_{q_I}=(\Q_{B,q_I}\cup\{q_0\},\Sigma,q_0,\Delta',\F_{B,q},\emptyset)$ where $\Delta'=\Delta_{B,q}\cup\{(q_0,\sigma,q)\mid (p_{B,q},\sigma,q)\in\Delta_{B,q}\}$.

    \begin{claim}\label{claim3}
        $\L(\A^B_{q_I})=\L(\hat{\A}_{q_I})$
    \end{claim}

    \begin{proof}
        As Büchi word automata are a special case of PBWA, it is not hard to see that $\L(\A^B_{q_I})=\L(\hat{\A}_{q_I})$. As PBWA do not allow to use their initial state more than once, we introduce the new initial state $q_0$. 
    \end{proof}

    By taking together \cref{claim1}, \cref{claim2}, and \cref{claim3}, we finally observe that $\L(\A)=\bigcup_{q\in Q_{\omega,r}\cup\{q_I\}}\L(\hat{\A}_q)$. As PBWA are closed under union (which follows either from adapting the proof of \cref{closure-pmta}, or from the closure of RRPA under union \cite[Lm. 2]{GroSabSie23a} together with the expressive equivalence of RRPA and PBWA we already established via \cref{rrpa_to_PBWA} and \cref{PBWA_to_rrba}), we obtain that the language $\bigcup_{q\in Q_{\omega,r}\cup\{q_I\}}\L(\hat{\A}_q)$ is recognizable by some PBWA.
\end{proof}

\begin{theorem}\label{PBWA_and_PMWA}
    PBWA and PMWA are expressively equivalent.
\end{theorem}

\subsection{Emptiness}

\ptaempt*

\begin{proof}
Let $\A=(Q,\Sigma\times D,\delta,q_I,F,C)$ be a PTA of dimension $s$ where all numbers are encoded in binary. W.l.o.g. assume that $D=\{d_1,\ldots,d_n\}$ with $d_i\in\Nat^s$. Moreover, let $\A'=(Q,\Sigma\times D,\delta,q_I,F)$ be a tree automaton\footnote{In our setting, a tree automaton $\A$ over $\Sigma\times D$ is defined to be a PTA with $C=\Nat^s$ which simplifies $\L(\A)$ to $\{\xi\in T_{\Sigma\times D}\mid \Run_\A(\xi)\neq \emptyset\}$.} over $\Sigma\times D$. It is easy to observe that $\L(\A)\neq\emptyset$ iff $\Psi(\L(\A'))\cap C\neq\emptyset$. Our aim is to show that $\Psi(\L(\A))$ can be encoded by a Presburger formula. 

Following and slightly adjusting \cite[proof of Lm. 8.15]{Kla04}, in a first step we construct a context-free grammar that generates all words obtained by ``flattening'' the trees in $\L(\A')$ and where the symbols from $\Sigma\times D$ are projected to their $D$-component. 
We let $G'=(Q,D,q_i,R)$ where $R$ consists of the following rules:
\begin{itemize}
    \item if $(q,(\sigma,d),q_1,q_2)\in\delta$, then $q\to q_1 d q_2\in R$, and
    \item if $q\in F$, then $q\to \varepsilon\in R$.
\end{itemize}
We observe that $\Psi(\L(\A'))=\Psi(\L(G'))$ (where we sum on the right-hand side over string positions and assume that $(d)_D=d$).

Now we proceed as in \cite[Proposition III.2.]{FigLib15a}: as the result used, to construct a Presburger formula for the language of an automaton, was stated for context-free grammars, it works in our more general setting as well.

By \cite[Theorem 4]{VerSeiSch05} (which had originally a small mistake fixed in \cite{HagLin12}) we can construct in linear time an existential Presburger formula $\varphi^P_{G'}(\nv{z}_1,\ldots,\nv{z}_n)$ encoding the Parikh image (counts of occurrences of each symbol) of $\L(G')$. Also due to \cite[proof of Theorem 4]{VerSeiSch05}, one can obtain from $\varphi^P_{G'}(\nv{z}_1,\ldots,\nv{z}_n)$ in polynomial time an existential\footnote{This was not explicitly stated in \cite[proof of Theorem 4]{VerSeiSch05}, however, their proof is based on this assumption.} Presburger formula $\varphi_{G'}(\nv{y}_1,\ldots,\nv{y}_s)$ representing $\Psi(\L(G'))$. Moreover, by \cite[Lemma II.1]{VerSeiSch05}, an existential Presburger formula $\varphi_C(\nv{y}_1,\ldots,\nv{y}_s)$ describing $C$ can be built in linear time. Finally, satisfiability of the Presburger sentence $\exists \nv{y}_1,\ldots,\nv{y}_s.\varphi_{G'}(\nv{y}_1,\ldots,\nv{y}_s)\land\varphi_C(\nv{y}_1,\ldots,\nv{y}_s)$ over $\mathbb{N}$ can be tested in \textsc{NP} \cite{Sca84}.
\textsc{NP}-hardness follows directly from the word case.
\end{proof}

\pmtaempt*

\begin{proof}
Let $\A=(\Q,\Xi,q_{I},\Delta,\F,C)$ be a PMTA with $\Q= Q_{P}\cup Q_{\omega}\cup\{q_I\}$, $\Xi=(\Sigma\times D)\cup\Sigma$, and $\Delta=\Delta_{P}\cup \Delta_{\omega}$. As each tree in the language of $\A$ can be decomposed into some finite tree over $\Sigma\times D$, on which the Parikh constraint is tested, and a number of infinite trees from $T_\Sigma$, we can reduce PMTA non-emptiness testing to deciding non-emptiness of Muller tree automata and Parikh tree automata.
To this end, consider 
\begin{itemize}
    \item the Muller tree automaton $\A_{q_I}=(Q_\omega\cup\{q_I\},\Sigma,q_I,\Delta_\omega,\F)$, 
    \item the Muller tree automata $\A_q=(Q_\omega,\Sigma,q,\Delta_\omega,\F)$ for all $q\in Q_\omega$, and
    \item the Parikh tree automaton $\A_P=(\Q,\Sigma\times D\!,q_I,\Delta_P,F_P,C)$ with $F_P=\{q{\,\in\,} Q_\omega \,|\, \L(\A_q){\,\neq\,}\emptyset\}$.
\end{itemize}
As deciding $\L(\A_q)\neq\emptyset$ is \textsc{PSpace}-complete \cite{Rab69,HunDaw05}, $\A_P$ can be constructed in \textsc{PSpace} and, by \cref{emptiness-pta}, its  non-emptiness can be decided in \textsc{NP}. Thus, the overall \textsc{PSpace} complexity follows from the observation that
$\L(\A)\neq\emptyset \quad \text{iff} \quad \L(\A_{q_I})\neq\emptyset\ \text{or}\ \L(\A_P)\neq\emptyset$, shown next.

$\Rightarrow:$ Assume that $\L(\A)\neq\emptyset$. Then there exist some $\xi\in\To_\Sigma$, $\zeta\in\To_\Xi$ with $(\zeta)_\Sigma=\xi$, and $\kappa\in\Run_\A(\zeta)$. We distinguish the following two cases:

Assume $(\zeta)_\Sigma=\zeta$. Then $\kappa(u)\in Q_\omega$ for each $u\in\pos(\zeta)\setminus\{\varepsilon\}$ and, hence, $\zeta$ is recognized by $\A$ only with transitions from $\Delta_\omega$. But then, $\kappa\in\Run_{\A_{q_I}}(\zeta)$ and, thus, $\L(\A_{q_I})\neq\emptyset$.

Now assume $(\zeta)_\Sigma\neq\zeta$. Then $\zeta$ is of the form $\zeta_\cnt[\zeta_1,\ldots,\zeta_m]$ for some $m\in\Nat$ and, similarly, $\kappa$ can be decomposed into $\Bar{\kappa}[\kappa_1,\ldots,\kappa_m]$ with $\pos(\Bar{\kappa})=\pos(\zeta_\cnt)$, $\Bar{\kappa}$ is labeled with states from $Q_P\cup\{q_I\}$, and $\kappa_i$ is labeled with states from $Q_\omega$ for each $i\in[m]$. As, for each $i\in[m]$ and each path $\pi$ reaching $\kappa_i$, we have that $\inf(\kappa(\pi))\in\F$, also $\inf(\kappa_i(\pi\setminus\Bar{\kappa}))\in\F$ with $\pi\setminus\Bar{\kappa}$ being the path obtained by cutting the initial part going through $\Bar{\kappa}$ from $\pi$. Thus, $\kappa_i\in\Run_{\A_{\kappa_i(\varepsilon)}}(\zeta_i)$ and $\kappa_i(\varepsilon)\in F_P$. Moreover, we obtain by construction of $\A_P$ that $\kappa'=\Bar{\kappa}[\kappa_1(\varepsilon),\ldots,\kappa_m(\varepsilon)]$ is a run of $\A_P$ on $\zeta_\cnt$. Finally, as $\Psi(\zeta_\cnt)\in C$, $\kappa'$ is accepting and, thus, $(\zeta_\cnt)_\Sigma\in\L(\A_P)$. Hence, $\L(\A_P)\neq\emptyset$.

$\Leftarrow:$ First assume that $\L(\A_{q_I})\neq\emptyset$. Then there exists some $\xi\in\To_\Sigma$ and $\kappa\in\Run_{\A_{q_I}}(\xi)$. But then, by construction of $\A_{q_I}$, we also have $\kappa\in\Run_\A(\xi)$. Thus, $\L(\A)\neq\emptyset$.

Now assume $\L(\A_P)\neq\emptyset$. Then there is some $\xi\in T_\Sigma$, $\zeta\in T_{\Sigma\times D}$ with $(\zeta)_\Sigma=\xi$, and run $\kappa\in\Run_{\A_P}(\zeta)$. By construction of $\A_P$ we can decompose $\kappa$ into $\Bar{\kappa}[q_1,\ldots,q_m]$ for some $m\in\Nat$ such that $\pos(\Bar{\kappa})=\pos(\zeta)$, $\Bar{\kappa}$ is labeled with states from $Q_P\cup\{q_I\}$, and $q_1,\ldots,q_m\in Q_\omega$. As $\kappa$ is accepting, $q_1,\ldots,q_m\in F_P$ and, hence, $\L(\A_{q_i})\neq\emptyset$ for each $i\in[m]$. Thus, there are trees $\xi_1,\ldots,\xi_m\in\To_\Sigma$ and runs $\kappa_i\in\Run_{\A_{q_i}}(\xi_i)$ starting in $q_i$. But then, $\Bar{\kappa}[\kappa_1,\ldots,\kappa_m]$ is also a run of $\A$ on $\zeta'=\zeta[\xi_1,\ldots,\xi_m]$. Obviously, $\zeta'_\cnt=\zeta$ and, thus, $\Psi(\zeta'_\cnt)\in C$. Finally, as $\inf(\kappa_i(\pi))\in \F$ for each path $\pi$ and this transfers to  $\kappa$, we obtain that $\kappa\in\Run_\A(\zeta')$. Hence, $\xi[\xi_1,\ldots,\xi_m]\in \L(\A)$ and $\L(\A)\neq\emptyset$.
\end{proof}
\newpage
\section{Details on MSO-Interpretations}

\msointerlemma*

\begin{proof}
The proof uses a variant of the well-known construction for plain MSO.
W.l.o.g. we assume that the \thegoodlogic sentence $\varphi$ is in GNF (cf. \Cref{def:GNF}), i.e. of the shape
$$
\textstyle\exists X_1.\cdots\exists X_n. \bigvee_{i=1}^k \big( \varphi_i \wedge \bigwedge_{j=1}^{l_i} \chi_{i,j} \big), 
$$
where the $\varphi_i$ are CMSO formulae, whereas the $\chi_{i,j}$ are (unnegated) Parikh constraints.
We then define $\varphi^\mathcal{I}$ to be the \thegoodlogic formula
$$
\bigwedge_{\pr{a}\in \SigC}\big( \forall z.\varphi_{\pr{a}}(z)\Leftrightarrow z=\pr{a} \big) \wedge \exists Z_\mathrm{Dom}.\Big(\big(\forall z.Z_\mathrm{Dom}(z) \Leftrightarrow \varphi_\mathrm{Dom}(z)\big) \wedge \varphi' \Big), 
$$
$$
\text{with}\qquad\qquad \varphi' \coloneqq \exists X_1 \subseteq Z_\mathrm{Dom}.\cdots\exists X_n \subseteq Z_\mathrm{Dom}. \bigvee_{i=1}^k \big( \varphi'_i \wedge \bigwedge_{j=1}^{l_i} \chi_{i,j} \big)
$$
where 
$\varphi'_i$ is obtained from $\varphi_i$ by (inside out) replacing
\begin{itemize}
    \item all $\pr{Q}(\iota_1,\ldots,\iota_m)$ by $\varphi_\pr{Q}(\iota_1,\ldots,\iota_m)$
    \item all $\forall x.\psi$ by $\forall x\in Z_\mathrm{Dom}.\psi$ as well as all $\exists x.\psi$ by $\exists x\in Z_\mathrm{Dom}.\psi$ and 
    \item all $\forall X.\psi$ by $\forall X\subseteq Z_\mathrm{Dom}.\psi$ as well as all $\exists X.\psi$ by $\exists X\subseteq Z_\mathrm{Dom}.\psi$  
\end{itemize}

While $\varphi^\mathcal{I}$ is not in GNF, it is easy to check that it is indeed in \thegoodlogic (as all set variables occurring in all $\chi_{i,j}$ are assertive).
Moreover, an easy induction shows $ \mathfrak{A} \models \varphi^\mathcal{I} \Longleftrightarrow \mathfrak{B} \models \varphi$ (assuming such an appropriate $\mathfrak{B}$ exists). 
\end{proof}

We note that this proof gives no garantees for $\varphi^\mathcal{I}$ in case $\mathcal{I}(\mathfrak{A})$ is undefined.
We also want to point out that in the literature, there are subtle differences as to how MSO-interpretations are defined.
Sometimes, an MSO-interpretation is also stipulated to feature an extra MSO formula $\varphi_=(x,y)$ with two free variables, which imposes an equality predicate (and thus a factorization on the elements of $\mathfrak{A}$, whose equivalence classes become the elements of $\mathfrak{B}$). Under such circumstances, however, finiteness and modulo atoms cannot be faithfully translated any more. This justifies our more conservative choice of MSO-interpretation.

\msointerthm*

\begin{proof}
Given an MSO-interpretation $\mathcal{I}$, we make use of the auxiliary formula 
$$
\varphi^\mathrm{def}_{\mathcal{I}} \coloneqq \exists z.\varphi_\mathrm{Dom}(z) \wedge \bigwedge_{\pr{a}\in \SigC}\big( \exists z.\varphi_{\pr{a}}(z) \wedge \forall z'\forall z''.\varphi_{\pr{a}}(z') \wedge \varphi_{\pr{a}}(z'')\Rightarrow z'=z'' \big).
$$
It is easy to see that $\varphi^\mathrm{def}_{\mathcal{I}}$ is satisfied by some $\Sig'$-structure $\mathfrak{A}$ exactly if $\mathcal{I}(\mathfrak{A})$ is defined.\footnote{We recall that we assume a model theory that requires nonempty domains, but note that this choice does not substantially affect our results.} Also, $\varphi^\mathrm{def}_{\mathcal{I}}$ is in plain MSO and hence in \thegoodlogic.

Let now $\varphi$ be an \thegoodlogic formula whose satisfiability over $\mathcal{I}(\mathscr{S})$ we want to check. We then obtain, with the help of \Cref{lem:msointer}, that $\mathcal{I}(\mathscr{S})$ contains a model $\mathfrak{B}$ of $\varphi$ if and only if
there is an $\mathfrak{A} \in \mathscr{S}$ with $\mathfrak{B} = \mathcal{I}(\mathfrak{A})$ and $\mathfrak{A} \models \varphi^\mathcal{I}$. This, in turn, is equivalent to the existence of some $\mathfrak{A} \in \mathscr{S}$ with $\mathfrak{A} \models \varphi^\mathrm{def}_{\mathcal{I}} \wedge \varphi^\mathcal{I}$.
Therefore, satisfiability of $\varphi$ over $\mathcal{I}(\mathscr{S})$ coincides with satisfiability of $\varphi^\mathrm{def}_{\mathcal{I}} \wedge \varphi^\mathcal{I}$ over $\mathscr{S}$. As $\varphi^\mathrm{def}_{\mathcal{I}} \wedge \varphi^\mathcal{I}$ is in \thegoodlogic, this is a decidable problem by assumption, which concludes the proof of our claim.

The particular case for tree-interpretable classes of structures is now a straightforward consequence of \Cref{cor:treedecidable}.
\end{proof}
\newpage
\section{Coupling \thegoodlogicheading with FO$_\mathbf{Pres}^\mathbf{2}$}

\begin{definition}[Ultimately periodic sets]
    A \emph{(one-dimensional) linear} set is a set of the form $\qty{a +ip \mid i \in \mathbb{N}}$ for fixed natural numbers $a$ and $p$.
    We also consider $\emptyset$ and $\qty{\infty}$ to be linear.
    
    An \emph{ultimately periodic} set is a finite union of linear sets.
\end{definition} 

\begin{definition}[$\FOpres$]
    The logic $\FOpres$
    consists of formulae written
    using only $2$ variables ($x$ and $y$), atoms 
    (over a signature $\Sig$ containing only unary and binary predicates),
    equality, conjunction,
    disjunction, negation and ultimately periodic 
    existential quantification. 
    For $S$ an ultimately periodic set, the ultimately periodic quantifier
    $\exists^S x$ has the following semantics: 
    for $\varphi$ a formula in $\FOpres$,
    the sentence $\exists^S x. \varphi$ is true for a model $\mathfrak{A}$ and variable assignment $\nu$ if and only
    if the cardinal of the set 
    $\qty{a \in A \mid \mathfrak{A},\nu_{x \mapsto a} \models \varphi}$
    is in $S$. 
\end{definition}

This very expressive logic has recently been shown to be decidable \cite{Ben20}. In fact,
the satisfiability problem for $\FOpres$ is reduced to the satisfiability of a 
formula in existential Presburger Arithmetic. 
Given that $\thegoodlogic$ can express existential Presburger
Arithmetics, and that the translation from $\FOpres$ to 
Presburger Arithmetic is transparent enough to partially 
account for the $\FOpres$ models, we can pair them both to obtain a handier
result. For the following, recall that, for a $\Sig$-structure $\mathfrak{C}$ and a signature $\Sig' \subseteq \Sig$,
$\mathfrak{C}|_{\Sig'}$ denotes the $\Sig'$-reduct of $\mathfrak{C}$, i.e., the
structure obtained by forgetting from $\mathfrak{C}$ the predicates that are not in $\Sig'$.


\begin{lemma}\label{lemmaC2}
Let $\psi$ be a sentence in $\FOpres$, over a signature
$\Sig$ of unary and binary predicates and let $\Sig_1 \subseteq \Sig$ be a subset of the unary predicates of $\Sig$.
Then there is a sentence $\psi_{\Sig_1} \in \thegoodlogic$ over the signature $\Sig_1$ such that
\begin{itemize}
    \item every structure $\mathfrak{B} \models \psi$ satisfies $\mathfrak{B} \models \psi_{\Sig_1}$ and
    \item for every structure $\mathfrak{A} \models \psi_{\Sig_1}$, there is a structure $\mathfrak{B} \models \psi$ such that $\mathfrak{A}|_{\Sig_1} = \mathfrak{B}|_{\Sig_1}$.
\end{itemize}
\end{lemma}

See further below for the corresponding proof. From this lemma, we can deduce the wanted theorem:

\mixed*

\begin{proof}
    Let $\Sig_1 = \Sig_a \cap \Sig_b$.
    By applying \Cref{lemmaC2} for $\psi$, obtain $\psi_{\Sig_1} \in \thegoodlogic$.
    Then, $\varphi \wedge \psi_{\Sig_1}$ and $\varphi \wedge \psi$ are equisatisfiable over the class of structures $\mathfrak{C}$ with $w(\mathfrak{C}|_{\Sig_a} ) \leq n$:
    \begin{itemize}
        \item Let $\mathfrak{A}$ be a model of $\varphi \wedge \psi_{\Sig_1}$ with $w(\mathfrak{A}|_{\Sig_a} ) \leq n$.
        By the second item of \Cref{lemmaC2}, we find a $\mathfrak{B} \models \psi$ such that
        $\mathfrak{A}$ and $\mathfrak{B}$ have the same domain and agree on all ``shared'' predicates of $\Sig_a \cap \Sig_b$ (all of them unary by assumption).
        Thus, there exists a unique structure $\mathfrak{C}$ over $\Sig_a \cup \Sig_b$ which is an expansion of both $\mathfrak{A}$ and $\mathfrak{B}$ (the ``superposition'' of the two structures).
        By virtue of this property, $\mathfrak{C}$ satisfies all valid $\Sig_b$-sentences of $\mathfrak{A}$ (in particular $\psi$) and all valid $\Sig_a$-sentences of $\mathfrak{B}$ (in particular $\varphi$), therefore $\mathfrak{C}' \models \varphi \wedge \psi$. Now, from $\mathfrak{C}|_{\Sig_a} = \mathfrak{A}|_{\Sig_a} $ also follows $w(\mathfrak{C}|_{\Sig_a} ) \leq n$
        \item Let $\mathfrak{C}$ be a model of $\varphi \wedge \psi$ with $w(\mathfrak{C}|_{\Sig_a} ) \leq n$.
        Then, by the first item of \Cref{lemmaC2}, $\mathfrak{C}$ is also a model of $\varphi \wedge \psi_{\Sig_1}$. 
    \end{itemize}
Concluding, we have shown that our decision problem can be reduced to the question if an \thegoodlogic sentence (namely $\varphi \wedge \psi_{\Sig_1}$) has a countable model $\mathfrak{C}$ with $w(\mathfrak{C}) \leq n$, which is decidable by \Cref{cor:whateverwidth}.    
\end{proof}

What remains to be proven is \Cref{lemmaC2}.
To this end, we need to introduce the notion of atomic $1$-types.

\begin{definition}[Atomic $1$-type]
    Let $\Sig$ be a signature. An \emph{atomic $1$-type} (short: $1$-type) $\pi$ over $\Sig$ is
    a maximal coherent set of atoms and negation of atoms with only one variable $x$.
    We often see $\pi$ as the conjunction of its elements.

    Note that, since $\Sig$ is finite, the set of $1$-types over $\Sig$ is finite.
    Moreover, every element $b$ in a structure $\mathfrak{B}$ \emph{realizes} 
    exactly one certain $1$-type, that is, the $1$-type $\pi$ satisfying $\mathfrak{B},\{x \mapsto b\} \models \pi$. Thus, the $1$-types form a finite partition of the
    universe of a structure.
\end{definition}

Now, let us prove \Cref{lemmaC2} by adapting the proof from \cite{Ben20}.
Let $\psi$ be a sentence in $\FOpres$ over a signature $\Sig$.
The first step is to translate $\psi$ into a sentence of $\FOpres$ in normal form, obtaining a conservative extension
$\psi^*$ using an extended signature $\Sig' \supseteq \Sig$.
That is, every model of $\psi^*$ is a model of $\psi$ and
every model of $\psi$ can be expanded to a model of $\psi^*$. 
The details of this effective translation can be found in the appendix of the full version of
\cite{Ben20}.

The second step is to introduce so-called
\emph{behaviors} $g_1, \dots, g_m$ depending on $\psi^*$. Every element $a \in A$ in a structure $\mathfrak{A}$ will be assigned exactly one behavior out of $g_1, \dots, g_m$. 
The assigned behavior quantitatively represents how $a$ is related to
the other elements in the structure. 
For our purposes, it is sufficient to account for the behaviors by defining, 
for every $\Sig'$-structure $\mathfrak{A}$,
a partition over its domain $A$, which we will represent 
by means of set variables $P_{g_1}, \dots, P_{g_m}$.

We extract the following lemma from \cite[Lemma 6]{Ben20}.

\begin{lemma}
Let $\psi^* \in \FOpres$ be a sentence in normal form over a signature $\Sig$, 
let $\pi_1, \dots, \pi_n$ be an enumeration of the atomic $1$-types over $\Sig$, 
and let $g_1, \dots, g_m$ be the list of behaviors corresponding to $\psi^*$.
    
Then one can compute a Presburger arithmetic formula $\consistent(\nv{z}_1,\ldots,\nv{z}_{n \cdot m})$ over $n \cdot m$ variables that is satisfied
by $(\abs{A_{\pi_1, g_1}}, \abs{A_{\pi_1, g_2}}, \dots, \abs{A_{\pi_n, g_m}})$ if and only if there is a structure $\mathfrak{A} \models \psi^*$ where 
$A_{\pi_i, g_j}$ is the set of the elements $a$ of $A$ of $1$-type $\pi_i$ and behavior $g_j$.
\end{lemma}


To conclude, we encode this satisfiability problem with fresh set variables
$P_{\pi_1}, \dots, P_{\pi_n}$ and $P_{g_1}, \dots, P_{g_m}$, and write
the following $\thegoodlogic$-sentence:

\begin{align*}
    \psi_{\Sig_1} = \exists P_{\pi_1}. \dots \exists P_{\pi_n}. 
    \exists P_{g_1}. \dots \exists P_{g_m}.&
    \consistent \qty(\abs{P_{\pi_1} \cap P_{g_1}}, \dots, 
    \abs{P_{\pi_n} \cap P_{g_m}})\\
    \wedge& \forall x. \bigvee_{i = 1}^m P_{g_i}(x) \wedge
    \bigwedge_{1 \leq i < j \leq m} \neg (P_{g_i}(x) \wedge P_{g_j}(x))\\
    \wedge& \forall x. \bigvee_{i = 1}^n P_{\pi_i}(x) \wedge
    \bigwedge_{1 \leq i < j \leq n} \neg (P_{\pi_i}(x) \wedge P_{\pi_j}(x))\\
    \wedge& \forall x. \bigwedge_{i=1}^n P_{\pi_i}(x) \to 
    \bigwedge \{(\neg)\pr{P}(x) \in \pi_i \mid \pr{P} \in \Sig_1 \}
\end{align*}

It is now clear that the sentence $\psi_{\Sig_1}$ is as required by \Cref{lemmaC2}:
in any structure $\mathfrak{A} \models \psi_{\Sig_1}$, there are assignments of the 
$1$-types and behavior functions that are partitions, such that the $1$-types
are coherent with the unary predicates of ${\Sig_1}$, and these assignments satisfy $\consistent$,
which means that they can be completed with binary and further unary relations in a way that
$\psi^*$ is satisfied. On the other hand, any structure satisfying $\psi^*$ also satisfies
$\psi$.
\newpage
\section{Details on Section 9}

\newcommand{\bulbul}{{\scriptscriptstyle\bullet\!\!-\!\!\bullet}}
\newcommand{\Bulbul}{\mbox{$\bullet\!\!-\!\!\bullet$}}

\begin{lemma}
Let $\mathfrak{A}$ be a tame structure over $\Sig = \Sig_\Consts \cup \Sig_{\Preds,1} \cup \Sig_{\Preds,2}$.
Then the treewidth of $\mathfrak{A}$ is at most $|\Sig_\Consts|+1$.
\end{lemma}

\begin{proof}
Recall that $A$ is a prefix-closed subset of of $\{rw \mid r\in Roots, w \in \mathbb{N}^*\}$ where $Roots = \{\pr{a}^\mathfrak{A} \mid \pr{a} \in \SigC\}$.
Let $V = \{ Roots \} \cup \{ Roots \cup \{a,a'\} \mid a \text{ is child of } a'\}$ and let 
$$E = \{ (Roots,A') \mid A' \in V,\  |A'| = |Roots|+1 \} \cup \{ (A',A'') \in V \times V \mid A'\cap A'' \setminus Roots \neq \emptyset  \}.$$
Then it is easy to see that $T=(V,E)$ is a tree decomposition of $\mathfrak{A}$ and the maximal size of any bag $A' \in V$ is $\leq |\Sig_\Consts|+2$, witnessing that $\mathfrak{A}$ that has a treewidth $\leq |\Sig_\Consts|+1$. 
\end{proof}

\begin{definition}
Given a fixed signature $\Sig = \Sig_\Consts \cup \Sig_{\Preds,1} \cup \Sig_{\Preds,2}$, we define
\begin{align*}
\varphi_\mathrm{anon}(z) & \coloneqq \textstyle\bigwedge_{\pr{c} \in \SigC} z \neq \pr{c} \\ 
\varphi_{ \bulbul}(z,z') & \coloneqq z\neq z' \wedge \varphi_\mathrm{anon}(z) \wedge \varphi_\mathrm{anon}(z') \wedge \textstyle\bigvee_{\pr{R} \in \Sig_{\Preds,2}} \pr{R}(z,z') \vee \pr{R}(z',z) \\
\varphi_\mathrm{closedres}(Y,X) & \coloneqq \forall y,y'\in X.\big(y\in Y \wedge \varphi_{ \bulbul}(y,y') \Rightarrow y'\in Y \big) \\
\varphi_\mathrm{reachvia}(z,z',X) & \coloneqq z\in X \wedge z'\in X  \wedge \forall Y \subseteq X. \Big( z \in Y \wedge \varphi_\mathrm{closedres}(Y,X) \Big) \Rightarrow z' \in Y \\ 
\varphi_\mathrm{connected}(X) & \coloneqq \forall y,y'\in X.\varphi_\mathrm{reachvia}(y,y',X)\\ 
\varphi_\mathrm{pthbtw}(z,z',X) & \coloneqq \varphi_\mathrm{reachvia}(z,z',X) \wedge \varphi_\mathrm{connected}(X) \ \wedge \\
& \ \ \ \ \exists^{=1}y{\in} X.\varphi_{ \bulbul}(z,y) \wedge \exists^{=1}y{\in} X.\varphi_{ \bulbul}(z',y) \wedge \forall x {\in} X.\exists^{\leq 2}y{\in} X.\varphi_{ \bulbul}(x,y)  
\end{align*}
Then we define $\psi^\Sig_\mathrm{tame}$ to be the $MSO$ sentence 
$$
\neg\exists z,z',X,X'\!.\,\varphi_\mathrm{pthbtw}(z,z'\!,X) \wedge \varphi_\mathrm{pthbtw}(z,z'\!,X') \wedge X {\neq} X' 
$$
\end{definition}

For better readability, the above definition uses the counting quantifiers $\exists^{=1}$ and $\exists^{\leq 2}$ as abbreviations, which can be easily expressed:

\smallskip

\noindent$\begin{array}{@{}rl}
~\exists^{=1}y{\in}X.\varphi_{\bulbul}(x,y) & 
\equiv \ \  \exists y.\varphi_{\bulbul}(x,y) 
\ \wedge \ \forall y_1,y_2{\in}X. \big( \varphi_{\bulbul}(x,y_1) 
\wedge \varphi_{\bulbul}(x,y_2) \big) 
\Rightarrow y_1{=}y_2 \\[1ex]   
\exists^{\leq 2}y{\in}X.\varphi_{ \bulbul}(x,y) & \equiv \ \  \forall y_1,y_2,y_3{\in}X. \big( \varphi_{ \bulbul}(x,y_1) \wedge \varphi_{ \bulbul}(x,y_2) \wedge \varphi_{ \bulbul}(x,y_3) \big) \\
& \qquad\qquad\qquad\qquad\qquad\qquad\qquad\qquad\qquad \Rightarrow y_1{=}y_2 \vee y_1{=}y_3 \vee y_2{=}y_3  
\end{array}$

\smallskip

Now we observe that $\varphi_\mathrm{anon}$ holds for all those $\{z \mapsto a\}$ where $a$ is an ``anonymous'' element, i.e., one not ``named'' by any of the constants from $\SigC$.
Further, $\varphi_{ \bulbul}$ is satisfied by those $\{z \mapsto a,z' \mapsto a'\}$ where both $a$ and $a'$ are anonymous, they are distinct, and co-occur in some relation pair of some $\pr{R}^\mathfrak{A}$ with  
$\pr{R} \in \Sig_{\Preds,2}$; we will abbreviate this situation by $a \Bulbul a'$. Then, $\varphi_\mathrm{closedres}$ characterizes those $\{Y \mapsto A',X \mapsto A''\}$, where $A'$ is closed under the restricted co-occurence relation $\Bulbul \cap (A'' \times A'')$.
Moreover, $\varphi_\mathrm{reachvia}$ holds for $\{z \mapsto a,z' \mapsto a',X \mapsto A'\}$ whenever $a = a'$ or there is an entirely anonymous $\Bulbul$-path from $a$ to $a'$ which only traverses elements of $A'$.
Then, $\varphi_\mathrm{connected}$ will identify those $\{X \mapsto A'\}$ for which the set $A'$ is $\Bulbul$-connected.
Consequently, $\varphi_\mathrm{pthbtw}$ will be satisfied by $\{z \mapsto a,z' \mapsto a',X \mapsto A'\}$ iff $A'$ consists of the elements of a simple (i.e., repetition-free) $\Bulbul$-path between $a$ and $a'$. 
Finally, $\psi^\Sig_\mathrm{tame}$ disallows the case that two anonymous domain elements are connected by two distinct entirely anonymous simple $\Bulbul$-paths. Yet, this property precisely characterizes structures isomorphic to tame structures: a structure is tame iff removing all relation instances with participating named elements and considering the Gaifman graph of the ensuing structure, we find it to be a disjoint union of undirected trees, which is characterized by the property that between any two elements, there is at most one simple path. 

\begin{lemma}
Given a signature $\Sig = \Sig_\Consts \cup \Sig_{\Preds,1} \cup \Sig_{\Preds,2}$ a $\Sig$-structure $\mathfrak{A}$ satisfies $\psi^\Sig_\mathrm{tame}$ iff it is isomorphic to a tame structure over $\Sig$.
\end{lemma}

Toward the proof of \Cref{thm:FEmu}, we first provide a translation from FEµ formulae to \thegoodlogic. 

\begin{definition}
Given an individual variable $x$, we define the translation function $\mathrm{trans}_x$ mapping FEµ formulae to MSO formulae:

\medskip

$
\begin{array}[t]{r@{\ \mapsto \ }l}
~\mathbf{true} & \mathbf{true} \\ 
~\mathbf{false} & \mathbf{false} \\[1ex] 
X & X(x)\\[1ex] 
\pr{c} & x=\pr{c}\\ 
\neg \pr{c} & x \neq \pr{c} \\[1ex] 
\pr{P} & \pr{P}(x)\\ 
\neg\pr{P} & \neg \pr{P}(x) \\
\end{array}$\ 
$
\begin{array}[t]{r@{\ \mapsto \ }l}
\varphi \wedge \varphi' & \mathrm{trans}_x(\varphi) \wedge \mathrm{trans}_x(\varphi') \\  
\varphi \vee \varphi' & \mathrm{trans}_x(\varphi) \vee \mathrm{trans}_x(\varphi') \\[1ex] 
~\langle n,\alpha \rangle\varphi & \exists Y.\pr{n}\ttleq\cnts{Y} \wedge \forall x' \in Y.\alpha(x,x') \wedge \mathrm{trans}_{x'}(\varphi) \\ 
~[n,\alpha ]\varphi & \forall Y.\pr{n}\plus\pr{1}\ttleq\cnts{Y} \Rightarrow \exists x' \in Y.\neg\alpha(x,x') \vee \mathrm{trans}_{x'}(\varphi) \\[1ex] 
\upmu X.\varphi & \forall X.(\forall x'.\mathrm{trans}_{x'}(\varphi)  \Rightarrow  X(x')) \Rightarrow X(x)\\ 
\upnu X.\varphi  & \exists X.(\forall x'.X(x') \Rightarrow \mathrm{trans}_{x'}(\varphi)) \wedge X(x) \\
\end{array}
$

\medskip

\noindent where $\pr{R}^-(x,x')$ is understood to paraphrase $\pr{R}(x',x)$. Note that, in the cases for $\upmu$ and $\upnu$, the set variable $X$ is supposed to occur freely in $\varphi$ (and hence also in $\mathrm{trans}_{x'}(\varphi)$). Contrarily, $Y$ is always meant to be a fresh set variable (in particular one not occurring in $\mathrm{trans}_{x'}(\varphi)$).
\end{definition}

The provided translation closely follows and slightly extends established canonical translations from the $\mu$-calculus to MSO logic, and along the same lines, it is straightforward to obtain the following correspondence. 

\begin{lemma}\label{lem:femuform}
Let $\varphi$ be a closed FEµ formula over a signature $\Sig$ and let $\mathfrak{A}$ be a structure over $\Sig$. Then $\mathfrak{A},\{x \mapsto a\} \models \mathrm{trans}_x(\varphi)$ iff $a \in \semof{\varphi}^{\mathfrak{A}}_\emptyset$.
\end{lemma}

In words, given a structure $\mathfrak{A}$, the semantics of FEµ assigns to each closed FEµ formula $\varphi$ a subset $A'=\semof{\varphi}^{\mathfrak{A}}_\emptyset$ of $A$ (intuitively, $A'$ represents the elements/states/worlds, ``wherein  
$\varphi$ holds''). The translation function $\mathrm{trans}_x$ maps each such $\varphi$ to an \thegoodlogic formula $\varphi'(x)$ (with one free individual variable $x$) which holds precisely for those $a \in A$ that are contained in the set $A'$ described by $\varphi$. Then, the expression $\cnts{\varphi}$ inside a global FEµ Presburger constraint $PC$ (which is the only context such a $\varphi$ can occur in) simply stands for $|A'|$. With this insights and in view of  \Cref{lem:femuform}, it is then not hard to come up with an \thegoodlogic counterpart of $PC$ where all $\cnts{\varphi_i}$ inside $PC$ are replaced back by $\cnts{X_i}$ for fresh set variables $X_i$, which are then axiomatized to be instantiated with the corresponding $\semof{\varphi_i}^{\mathfrak{A}}_\emptyset$.   

\begin{definition}
Given a global FEµ Presburger constraint $PC(\cnts{\varphi_1},\ldots,\cnts{\varphi_k})$ let 
$$
\mathrm{trans}(PC)= \exists X_1,\ldots,X_k. PC(\cnts{X_1},\ldots,\cnts{X_k}) \wedge 
\bigwedge_{1\leq i \leq k} \forall z.X_i(z) \Leftrightarrow \mathrm{trans}_z(\varphi_i). 
$$
For a finite set $\Pi$ of global FEµ Presburger constraints, let $\mathrm{trans}(\Pi) = \{\mathrm{trans}(PC) \mid PC \in \Pi\}$.
\end{definition}

\begin{lemma}
Let $\Pi$ be a finite set of global FEµ Presburger constraints. Then $\mathfrak{A} \models \Pi$ iff $\mathfrak{A} \models \mathrm{trans}(\Pi)$.
\end{lemma}

We now have assembled all ingredients to establish the wanted result.

\FEmu*

\begin{proof}[Proof]
Let $\Sig$ be a finite signature, $\varphi$ a corresponding sentence of the fully enriched $\mu$-calculus, and $\Pi$ a finite set of global FEµ Presburger constraints.
Then, by virtue of the above lemmas, tame satisfiability of  $(\varphi,\Pi)$ corresponds to satisfiability of the \thegoodlogic sentence $\psi^\Sig_\mathrm{tame} \wedge \exists x.\mathrm{trans}_x(\varphi) \wedge \bigwedge \mathrm{trans}(\Pi)$ over all countable structures of treewidth $\leq |\SigC|+1$, which is decidable by \Cref{cor:whateverwidth}.
\end{proof}

\end{document}